\definecolor{bred}{rgb}{0.8,0,0}
\setlist[enumerate]{label={\upshape(\roman*)}}
\DeclareMathOperator*{\trace}{tr}
\DeclareMathOperator*{\sym}{sym}
\DeclareMathOperator*{\diag}{diag}
\DeclareMathOperator*{\rand}{rand}
\newcommand{\Pol}{\textup{Pol}}
\DeclareMathOperator*{\linspan}{span}
\DeclareMathOperator{\im}{Im}
\DeclareMathOperator{\re}{Re}
\DeclareMathOperator{\vech}{vec}
\newtheorem{theorem}{Theorem}[section]
\newtheorem{proposition}[theorem]{Proposition}
\newtheorem{lemma}[theorem]{Lemma}
\newtheorem{corollary}[theorem]{Corollary}
\newtheorem{remark}[theorem]{Remark}
\newtheorem{definition}[theorem]{Definition}
\newtheorem{example}[theorem]{Example}
\newtheorem{assumption}[theorem]{Assumption}
\newcommand{\algorithmstyle}[1]{\renewcommand{\algocf@style}{#1}}
\newcommand{\labeltext}[3][]{%
	\@bsphack%
	\csname phantomsection\endcsname
	\def\tst{#1}%
	\def\labelmarkup{\emph}
	\def\refmarkup{}%
	\ifx\tst\empty\def\@currentlabel{\refmarkup{#2}}{\label{#3}}%
	\else\def\@currentlabel{\refmarkup{#1}}{\label{#3}}\fi%
	\@esphack%
	\labelmarkup{#2}
}
\newcommand\reallywidehat[1]{%
	\savestack{\tmpbox}{\stretchto{%
			\scaleto{%
				\scalerel*[\widthof{\ensuremath{#1}}]{\kern.1pt\mathchar"0362\kern.1pt}%
				{\rule{0ex}{\textheight}}
			}{\textheight}%
		}{2.4ex}}%
	\stackon[-6.9pt]{#1}{\tmpbox}%
}
\newcommand\reallywidecheck[1]{%
	\savestack{\tmpbox}{\stretchto{%
			\scaleto{%
				\scalerel*[\widthof{\ensuremath{#1}}]{\kern-.6pt\bigwedge\kern-.6pt}%
				{\rule[-\textheight/2]{1ex}{\textheight}}
			}{\textheight}%
		}{0.5ex}}%
	\stackon[1pt]{#1}{\scalebox{-1}{\tmpbox}}%
}
\begin{document}

\title[]{Chaotic hedging with iterated integrals and neural networks}

\author[]{Ariel Neufeld}

\address{Nanyang Technological University, Division of Mathematical Sciences, 21 Nanyang Link, Singapore}
\email{ariel.neufeld@ntu.edu.sg}

\author[]{Philipp Schmocker}

\address{ETH Zurich, Department of Mathematics, R\"amistrasse 101, Zurich, Switzerland}
\email{philipp.schmocker@math.ethz.ch}

\date{\today}
\thanks{Financial support by the Nanyang Assistant Professorship Grant (NAP Grant) \emph{Machine Learning based Algorithms in Finance and Insurance} is gratefully acknowledged.}
\keywords{Chaos expansion, iterated Stratonovich integrals, machine learning, neural networks, random neural networks, universal approximation, hedging, quadratic hedging}

\begin{abstract}
	In this paper, we derive an $L^p$-chaos expansion based on iterated Stratonovich integrals with respect to a given exponentially integrable continuous semimartingale. By omitting the orthogonality of the expansion, we show that every $p$-integrable functional, $p \in [1,\infty)$, can be approximated by a finite sum of iterated Stratonovich integrals. Using (possibly random) neural networks as integrands, we therefere obtain universal approximation results for $p$-integrable financial derivatives in the $L^p$-sense. Moreover, we can approximately solve the $L^p$-hedging problem (coinciding for $p = 2$ with the quadratic hedging problem), where the approximating hedging strategy can be computed in closed form within short runtime.
\end{abstract}

\maketitle

\vspace{-0.5cm}

\section{Introduction}

We address the problem of pricing and hedging a financial derivative in a market with finite time horizon $T > 0$ that consists of $d \in \mathbb{N}$ risky assets whose price processes are modelled by a $d$-dimensional continuous semimartingale $X := (X_t)_{t \in [0,T]}$. In order to derive our universal approximation results, we assume that $X$ is exponentially integrable and that both its finite variation part and its quadratic variation admit finite moments of all orders (see Assumption~\ref{AssExpInt}). This includes in particular affine and some polynomial diffusions as well as some stochastic volatility models (see \cite{duffie00,cuchiero11,filipovic16} and also Section~\ref{SecExpIntSemimg}+\ref{SecNumEx}).

In this setting, we aim to learn the payoff and the hedging strategy of a given financial derivative $G$, which is assumed to be a $p$-integrable functional$^\text{\footref{Footnote1}}$ of the underlying process $X$, where $p \in [1,\infty)$. In the special case of $p = 2$ and $X$ being a Brownian motion, the Wiener-Ito chaos decomposition (see \cite{wiener38,cameron47}) ensures that every square-integrable functional $G$ of the Brownian motion can be represented as infinite sum of orthogonal multiple integrals (see \cite[Theorem~4.2]{ito51}), which therefore yields a natural approximation of $G$ by truncating the infinite series. In turn, these multiple integrals can be rewritten as iterated Ito integrals with respect to that Brownian motion (see, e.g., \cite[p.~10]{dinunno08}).

In our paper, we omit the orthogonality of the chaos expansion and consider iterated Stratonovich integrals with respect to the given exponentially integrable continuous semimartingale $X$. By expressing monomials of $X$ as iterated Stratonovich integrals and using that polynomials of $X$ are $L^p$-dense due to the exponential integrability assumption on $X$, we obtain an $L^p$-chaos expansion with respect to $X$ (see Theorem~\ref{ThmChaos}). This extends classical chaos expansions in the literature (proven for the compensated Poisson process~\cite{ito56}, Az\'ema martingales~\cite{emery89}, some L\'evy processes via Teugel martingales~\cite{nualart01}, and other specific martingales in~\cite{jamshidian05,ditella16CRP}) from orthogonal $L^2$-decompositions using iterated \emph{Ito integrals} to $L^p$-denseness of iterated \emph{Stratonovich integrals}. Nevertheless, despite working with Stratonovich integrals, we can use the $L^p$-chaos expansion to approximately solve the $L^p$-hedging problem (see Theorem~\ref{ThmLpHedging}), where the wealth process is formulated using classical Ito integration, as standard in the financial context. In the case of $p = 2$, this coincides with the quadratic hedging problem in \cite{schweizer99,pham00}.

The machine learning application consists of using (possibly random) neural networks to learn either the deterministic integrands of the iterated Stratonovich integrals in the $L^p$-chaos expansion (see Theorem~\ref{ThmUAT}) or the deterministic integrands appearing in the $L^p$-hedging problem (see Theorem~\ref{ThmLpHedgingNN}). By the universal approximation property of neural networks (see \cite{cybenko89,hornik91,pinkus99,neufeld24}), these integrands can be approximated arbitrarily well in a suitable function space, providing a tractable framework for constructing approximate financial derivatives and hedging strategies via machine learning techniques. Previously, \cite{bgtw19} applied neural networks to learn optimal hedging strategies in a discrete-time setting, followed by other successful neural networks applications in mathematical finance (see, e.g., \cite{han17,ruf19,sirignano19,ckt20,eckstein20,neufeld21,sester22,schmocker22}).

In addition, we extend the universal approximation result to random neural networks with randomly initialized weights and biases (see Theorem~\ref{ThmRandUAT}), which are inspired by the works on extreme learning machines and random feature learning (see, e.g., \cite{huang06,rahimi07,grigoryeva18,gonon20,gonon21}). In this case, only the linear readouts needs to be trained, which can be efficiently performed, e.g., by the least squares method.

Furthermore, the \emph{signature} has been successfully applied in \cite{perez19,lyons20,cartea22,primavera22,cuchiero23,bayer25} to learn path-dependent functionals. Originating from rough path theory (see \cite{lyons98,friz10,friz20}), the signature satisfies a similar universality property than neural networks, but on the corresponding path space. Since the (Stratonovich) signature of a semimartingale is equal to iterated Stratonovich integrals with constant integrand equal to one, our results can be interpreted as $L^p$-universality of the signature.

\vspace{-0.1cm}

\subsection{Overview of the main results}

Our first main result is a chaos expansion of $L^p(\Omega,\mathcal{F}_T,\mathbb{P})$ into iterated Stratonovich integrals (see Theorem~\ref{ThmChaos}): Given $p \in [1,\infty)$ and\footnote{\label{Footnote1}Here, $G \in L^p(\Omega,\mathcal{F}_T,\mathbb{P})$ is $\mathcal{F}_T$-measurable, where $\mathcal{F}_T$ is defined as the $\mathbb{P}$-completion of $\sigma(\lbrace X_t: t \in [0,T] \rbrace)$.} $G \in L^p(\Omega,\mathcal{F}_T,\mathbb{P})$, there exists for every $\varepsilon > 0$ some $N \in \mathbb{N}$ and deterministic functions $g_n: [0,T]^n \rightarrow (\mathbb{R}^d)^{\otimes n}$, $n = 0,...,N$, such that
\vspace{-0.25cm}
\begin{equation}
	\label{EqIntroChaos}
	\left\Vert G - \sum_{n=0}^N J^\circ_n(g_n) \right\Vert_{L^p(\mathbb{P})} < \varepsilon,
	\vspace{-0.02cm}
\end{equation}
where $J^\circ_n(g_n)_T$ denotes the $n$-fold iterated Stratonovich integral of $g_n$ (see Definition~\ref{DefItInt}). This allows us to approximate any given financial derivative $G \in L^p(\Omega,\mathcal{F}_T,\mathbb{P})$ by finitely many iterated Stratonovich integrals. In order to prove~\eqref{EqIntroChaos}, we express monomials of $X$ as iterated Stratonovich integrals and use that polynomials of $X$ are dense in $L^p(\Omega,\mathcal{F}_T,\mathbb{P})$ due to the exponential integrability assumption on $X$.

Moreover, by using the $L^p$-chaos expansion in \eqref{EqIntroChaos} together with a denseness result for predictable processes, we obtain our second main result, which approximately solves the $L^p$-hedging problem (see Theorem~\ref{ThmLpHedging}): Given $p \in [1,\infty)$ and $G \in L^p(\Omega,\mathcal{F}_T,\mathbb{P})$, there exists for every $\varepsilon > 0$ some $g_0 \in \mathbb{R}$, $N,m_1,...,m_N \in \mathbb{N}$, and functions $g_{n,j,0},g_{n,j,1}: [0,T] \rightarrow \mathbb{R}^d$, $n = 1,...,N$ and $j = 1,...,m_n$, such that
\begin{equation}
	\vspace{-0.05cm}
	\label{EqIntroLpHedging}
	\left\Vert G - g_0 - \int_0^T \left( \vartheta^{g_{1:N}}_t \right)^\top dX_t \right\Vert_{L^p(\mathbb{P})} \leq \varepsilon + \inf_{(c,\theta) \in \mathbb{R} \times \Theta^p(X)} \left\Vert G - c - \int_0^T \theta_t^\top dX_t \right\Vert_{L^p(\mathbb{P})}
	\vspace{-0.05cm}
\end{equation}
with $\vartheta^{g_{1:N}}_t := \sum_{n=1}^N \sum_{j=1}^{m_n} \frac{W(g_{n,j,0})_t^{n-1}}{(n-1)!} g_{n,j,1}(t)$, where $W(g_{n,j,0})_t := \int_0^t g_{n,j,0}(s) dX_s$ and $\Theta^p(X)$ is the space of predictable $\mathbb{R}^d$-valued processes satisfying some integrability conditions (see \eqref{EqDefThetapNorm}). For $p = 2$, the minimization of the $L^p$-hedging error $\big\Vert G - c - \int_0^T \theta_t^\top dX_t \big\Vert_{L^p(\mathbb{P})}$ in \eqref{EqIntroLpHedging} coincides with the quadratic hedging problem in \cite{schweizer99,pham00}. In addition, note that \eqref{EqIntroLpHedging}, unlike \eqref{EqIntroChaos}, uses classical Ito integration.

Furthermore, by using the universal approximation property of (possibly random) neural networks, we obtain the machine learning results of this paper, which replace the deterministic functions in \eqref{EqIntroChaos}+\eqref{EqIntroLpHedging} with (possibly random) neural networks (see Theorem~\ref{ThmUAT}+\ref{ThmRandUAT} for \eqref{EqIntroChaos} and Theorem~\ref{ThmLpHedgingNN}+\ref{ThmLpHedgingRN} for \eqref{EqIntroLpHedging}). In particular, for random neural networks, only the linear readout needs to be trained, which means that \eqref{EqIntroChaos}+\eqref{EqIntroLpHedging} can be solved efficiently by interpreting them as linear regression tasks (see also Algorithm~\ref{Alg}).

\vspace{-0.1cm}

\subsection{Outline}

In Section~\ref{SecExpIntSemimg}, we introduce the main setting including exponentially integrable continuous semimartingale. In Section~\ref{SecStochIntItInt}, we define iterated Stratonovich integrals. In Section~\ref{SecChaosHedg}, we show an $L^p$-chaos expansion using Stratonovich integrals and approximately solve the $L^p$-hedging problem. In Section~\ref{SecUATs}, we derive universal approximation results using (possibly random) neural networks, which are applied in Section~\ref{SecNumEx} to numerically solve the $L^2$-hedging problem. Finally, Section~\ref{SecProofs} contains all proofs.

\vspace{-0.1cm}

\subsection{Notation}

As usual, $\mathbb{N} = \lbrace 1,2,...\rbrace$ and $\mathbb{N}_0 = \mathbb{N} \cup \lbrace 0 \rbrace$ denote the sets of natural numbers, while $\mathbb{R}$ and $\mathbb{C}$ represent the real and complex numbers (with complex unit $\mathbf{i} := \sqrt{-1}$), where $s \wedge t := \min(s,t)$, for $s,t \in \mathbb{R}$. For $d \in \mathbb{N}$, we denote by $\mathbb{R}^d$ the Euclidean space equipped with $\Vert x \Vert = \big( \sum_{i=1}^d x_i^2 \big)^{1/2}$, where $e_i \in \mathbb{R}^d$ is the $i$-th unit vector. For $d,l \in \mathbb{N}$, we denote by $\mathbb{R}^{d \times l}$ the vector space of matrices $A = (a_{i,j})_{i=1,...,d,\,j=1,...,l} \in \mathbb{R}^{d \times l}$ equipped with the Frobenius norm $\Vert A \Vert_F = \big( \sum_{i=1}^d \sum_{j=1}^l a_{i,j}^2 \big)^{1/2}$. If $d=l$, we denote by $\mathbb{S}^d_+ \subset \mathbb{R}^{d \times d}$ the cone of symmetric non-negative definite matrices, denote by $I_d \in \mathbb{S}^d_+$ the identity matrix, and define the trace $\trace(A) := \sum_{i=1}^d a_{i,i}$ for $A = (a_{i,j})_{i,j=1,...,d} \in \mathbb{R}^{d \times d}$.

Furthermore, for $k,d \in \mathbb{N}$ and $E \subseteq \mathbb{R}^k$, we denote by $C(E;\mathbb{R}^d)$ the Banach space of continuous functions $f: E \rightarrow \mathbb{R}^d$ equipped with the supremum norm $\Vert f \Vert_\infty := \sup_{t \in E} \Vert f(t) \Vert$, where we abbreviate $C(E) := C(E;\mathbb{R})$. In addition, for $p \in [1,\infty)$ and a measure space $(S,\Sigma,\mu)$, we denote by $L^p(S,\Sigma,\mu;\mathbb{R}^d)$ the Banach space of (equivalence classes of) $\Sigma/\mathcal{B}(\mathbb{R}^d)$-measurable functions $f: S \rightarrow \mathbb{R}^d$ with finite norm $\Vert f \Vert_{L^p(\mu)} := \big( \int_S \Vert f(t) \Vert^p \mu(dt) \big)^{1/p}$, where we abbreviate $L^p(S,\Sigma,\mu) := L^p(S,\Sigma,\mu;\mathbb{R})$. Moreover, if $S$ is a topological space, we denote by $\mathcal{B}(S)$ the Borel $\sigma$-algebra of $S$.

\section{Exponentially integrable semimartingales}
\label{SecExpIntSemimg}

Throughout this paper, we fix a finite time horizon $T > 0$ and a probability space $(\Omega,\mathcal{A},\mathbb{P})$. For $d \in \mathbb{N}$, the financial market is modelled by a $d$-dimensional stochastic process $X := \big( X^1_t,...,X^d_t \big)_{t \in [0,T]}^\top$ that is assumed to be a continuous semimartingale with respect to the usual $\mathbb{P}$-augmented filtration\footnote{\label{FootnoteAugmFilt}The usual $\mathbb{P}$-augmented filtration generated by $X$ is defined as the smallest filtration $\mathbb{F} := (\mathcal{F}_t)_{t \in [0,T]}$ such that $\mathbb{F}$ is complete with respect to $(\Omega,\mathcal{A},\mathbb{P})$ and right-continuous on $[0,T)$, and $X$ is $\mathbb{F}$-adapted (see \cite[p.~45]{revuz99}).} generated by $X$, i.e., there exists a continuous $\mathbb{F}$-adapted process $A := (A_t)_{t \in [0,T]}$ of finite variation and a continuous local martingale $M := (M_t)_{t \in [0,T]}$ such that for every $t \in [0,T]$ it holds that $X_t = X_0 + A_t + M_t$.

Hereby, we denote by $\langle X \rangle := (\langle X^i, X^j \rangle_t)_{t \in [0,T], \, i,j = 1,...,d}$ the $\mathbb{S}^d_+$-valued quadratic variation of $X$ satisfying $\langle X \rangle = \langle M \rangle$, whereas $\int_0^T \Vert dA_t \Vert$ represents the total variation of $A$ over $[0,T]$.

\begin{assumption}
	\label{AssExpInt}
	Let $X$ be a continuous semimartingale with decomposition $X_t = X_0 + A_t + M_t$, $t \in [0,T]$, into a $\mathbb{P}$-a.s.~constant initial value $X_0 \in \mathbb{R}^d$, a continuous $\mathbb{F}$-adapted process $A = (A_t)_{t \in [0,T]}$ of finite variation with $A_0 = 0$, and a continuous local martingale $M = (M_t)_{t \in [0,T]}$ with $M_0 = 0$. Moreover, assume the following integrability conditions:
	\begin{enumerate}
		\item\label{AssExpInt1} There exists $\varepsilon > 0$ such that for every $t \in [0,T]$ it holds that $\mathbb{E}\left[ \exp\left( \varepsilon \Vert X_t \Vert \right) \right] < \infty$.
		\item\label{AssExpInt2} For every $\gamma \in [1,\infty)$ it holds that $\mathbb{E}\big[ \big( \int_0^T \Vert dA_t \Vert \big)^\gamma \big] < \infty$.
		\item\label{AssExpInt3} For every $\gamma \in [1,\infty)$ it holds that $\mathbb{E}\big[ \Vert \langle M \rangle_T \Vert_F^\gamma \big] < \infty$.
	\end{enumerate}
\end{assumption}

\begin{remark}
	Condition~\ref{AssExpInt1} is necessary for polynomials of $X$ to be dense in $L^p(\Omega,\mathcal{F}_T,\mathbb{P})$ (see Proposition~\ref{PropPolDense}), whereas \ref{AssExpInt2}+\ref{AssExpInt3} ensure the inequality in Lemma~\ref{LemmaLpNorm}~\ref{LemmaLpNorm2}. Note that $\mathbb{E}\big[ \exp\big( \varepsilon \int_0^T \Vert dA_t \Vert \big) \big] < \infty$ and $\mathbb{E}\big[ \exp\big( \varepsilon \Vert \langle M \rangle_T \Vert_F \big) \big] < \infty$ for some $\varepsilon > 0$ are sufficient conditions for \ref{AssExpInt2} and \ref{AssExpInt3}, respectively.
\end{remark}

Let us give some examples of continuous semimartingales satisfying Assumption~\ref{AssExpInt}, which includes affine diffusions and some polynomial diffusions (see \cite{duffie00,cuchiero11,filipovic16}). To this end, we denote by $\Pol_n(\mathbb{R}^d)$ the set of polynomial functions $\mathbb{R}^d \ni x := (x_1,...,x_d) \mapsto \sum_{\alpha \in \mathbb{N}^d_0, \, \vert\alpha\vert \leq n} c_\alpha x_1^{\alpha_1} \cdots x_d^{\alpha_d} \in \mathbb{R}$, with some coefficients $c_\alpha \in \mathbb{R}$, where $\vert \alpha \vert = \alpha_1 + ... + \alpha_n$ for $\alpha := (\alpha_1,...,\alpha_d) \in \mathbb{N}^d_0$.

\begin{example}[Polynomial diffusions]
	\label{ExPolynDiff}
	For a subset $E \subseteq \mathbb{R}^d$ and some functions $a := (a_{i,j})_{i,j=1,...,d}: E \rightarrow \mathbb{S}^d_+$ and $b := (b_1,...,b_d)^\top: E \rightarrow \mathbb{R}^d$ with $a_{i,j} \in \Pol_2(\mathbb{R}^d)$ and $b_i \in \Pol_1(\mathbb{R}^d)$ for all $i,j = 1,...,d$, let here $X = (X_t)_{t \in [0,T]}$ be an $E$-valued strong solution of the stochastic differential equation (SDE)
	\begin{equation}
		\label{EqExPolynDiff0a}
		dX_t = b(X_t) dt + \sigma(X_t) dB_t, \quad\quad t \in [0,T],
	\end{equation}
	with initial value $X_0 \in E$, where $\sigma: E \rightarrow \mathbb{R}^{d \times d}$ satisfies $\sigma(x) \sigma(x)^\top = a(x)$ for all $x \in E$, and where $B := (B_t)_{t \in [0,T]}$ is a $d$-dimensional Brownian motion. Then, $X$ is a continuous semimartingale (with finite variation part $t \mapsto A_t := \int_0^t b(X_s) ds$ and local martingale part $t \mapsto M_t := \int_0^t \sigma(X_s) dB_s$), called a polynomial diffusion in the sense of \cite[Definition~2.1]{filipovic16}. Moreover, by assuming that the diffusion coefficient $a: E \rightarrow \mathbb{S}^d_+$ satisfies a linear growth condition, i.e.~there exists $C_1 > 0$ such that for every $x \in E$ it holds that
	\begin{equation}
		\label{EqExPolynDiff0b}
		\Vert a(x) \Vert_F \leq C_1 \left( 1 + \Vert x \Vert \right),
	\end{equation}
	then one can show that $X$ satisfies Assumption~\ref{AssExpInt}. The proof can be found in Section~\ref{SecProofsExpIntSemimg}.
	
	For $d = 1$, we observe that $X = (X_t)_{t \in [0,T]}$ is an $E$-valued strong solution of the SDE
	\begin{equation*}
		dX_t = (\beta_0 + \beta_1 X_t) dt + \sqrt{\alpha_0 + \alpha_1 X_t + \alpha_2 X_t^2} dB_t, \quad\quad t \in [0,T],
	\end{equation*}
	with initial value $X_0 \in E$, where $\beta_0, \beta_1, \alpha_0, \alpha_1, \alpha_2 \in \mathbb{R}$ are such that $\sqrt{\alpha_0 + \alpha_1 x + \alpha_2 x^2} \geq 0$ for all $x \in E$. Hence, if the drift vanishes (i.e.~$\beta_0 = \beta_1 = 0$), there are in particular three relevant cases:
	\begin{subequations}
		\begin{align}
		\label{EqPolynDiff1}
		& E = \mathbb{R} & & \text{with } \alpha_0 > 0 \text{, } \alpha_1 = 0 \text{ and } \alpha_2 \geq 0, & & \text{e.g., Brownian motion (BM),} \\
		\label{EqPolynDiff2}
		& E = [0,\infty) & & \text{with } \alpha_0 = 0 \text{ and } \alpha_1, \alpha_2 \geq 0, & & \text{e.g., geometric BM (GBM)}, \\
		\label{EqPolynDiff3}
		& E = [0,1] & & \text{with } \alpha_0 = 0 \text{ and } \alpha_1 = -\alpha_2, & & \text{e.g., Jacobi process.}
		\end{align}
	\end{subequations}
	If $\alpha_2 = 0$ or $E$ is compact (e.g., \eqref{EqPolynDiff3}), then the diffusion coefficient $E \ni x \mapsto a(x) := \alpha_0 + \alpha_1 x + \alpha_2 x^2 \in [0,\infty)$ satisfies the linear growth condition in \eqref{EqExPolynDiff0b} and $X$ thus satisfies Assumption~\ref{AssExpInt}.
\end{example}

Moreover, we can also consider financial market models $X$ with stochastic volatility such that $X$ satisfies Assumption~\ref{AssExpInt} (see also the numerical experiments in Section~\ref{SecNumEx}).

\section{Stochastic integral and iterated Stratonovich integrals}
\label{SecStochIntItInt}

In this section, we introduce the stochastic integral of a deterministic function with respect to the given continuous semimartingale $X := (X_t)_{t \in [0,T]}$ having semimartingale decomposition $X_t = X_0 + A_t + M_t$, $t \in [0,T]$. Subsequently, we introduce iterated Stratonovich integrals that are defined by iteration.

\vspace{-0.1cm}

\subsection{Stochastic integral}
\label{SecStochInt}

In order to define the stochastic integral of a deterministic function, we fix some $p \in [1,\infty)$. Then, we denote by $L^p(X)$ the vector space of (equivalence classes of) c\`agl\`ad (i.e., left continuous on $(0,T]$ with right limits on $[0,T)$) functions $g: [0,T] \rightarrow \mathbb{R}^d$ such that
\vspace{-0.05cm}
\begin{equation}
	\label{EqDefLpNorm}
	\Vert g \Vert_{L^p(X)} := \mathbb{E}\left[ \left( \int_0^T \left\vert g(t)^\top dA_t \right\vert \right)^p \right]^\frac{1}{p} + \mathbb{E}\left[ \left( \int_0^T g(t)^\top d\langle M \rangle_t g(t) \right)^\frac{p}{2} \right]^\frac{1}{p} < \infty.
	\vspace{-0.05cm}
\end{equation}

\begin{remark}
	\label{RemBoundedMbl}
	Since for every $g \in L^p(X)$ the function $t \mapsto g(T-t)$ is c\`adl\`ag (right continuous on $[0,T)$ with left limits on $(0,T]$), we conclude from \cite[p.~122]{billingsley99} that $g: [0,T] \rightarrow \mathbb{R}^d$ is $\mathcal{B}([0,T])/\mathcal{B}(\mathbb{R}^d)$-measurable and bounded, i.e.~$\Vert g \Vert_\infty := \sup_{t \in [0,T]} \Vert g(t) \Vert < \infty$.
\end{remark}

We show that $L^p(X)$ is a normed vector space under the norm $\Vert \cdot \Vert_{L^p(X)}$ and obtain an estimate for $L^p(X)$-functions. The proofs of the results in this section can be found in Section~\ref{SecProofsStochInt}.

\begin{lemma}
	\label{LemmaLpNorm}
	Let $X$ be a continuous semimartingale and let $p \in [1,\infty)$. Then, the following holds true:
	\begin{enumerate}
		\item\label{LemmaLpNorm1} $(L^p(X),\Vert \cdot \Vert_{L^p(X)})$ is a normed vector space.
		\item\label{LemmaLpNorm2} If Assumption~\ref{AssExpInt} additionally holds, then there exists a constant $C_{p,T,X} > 0$ such that for every $g \in L^p(X)$ it holds that $\Vert g \Vert_{L^p(X)} \leq C_{p,T,X} \Vert g \Vert_\infty < \infty$.
	\end{enumerate}
\end{lemma}

\vspace{-0.05cm}

Now, we are able to introduce the stochastic integral as operator, which is defined on this $L^p(X)$-space.

\begin{definition}
	Let $X$ be a continuous semimartingale and let $p \in [1,\infty)$. Then, for every $t \in [0,T]$, we introduce the \emph{stochastic integral (up to time $t$)} as operator $W(\cdot)_t: L^p(X) \rightarrow L^p(\Omega,\mathcal{F}_t,\mathbb{P})$ defined by
	\vspace{-0.05cm}
	\begin{equation*}
		L^p(X) \ni g \quad \mapsto \quad W(g)_t := \int_0^t g(s)^\top dX_s \in L^p(\Omega,\mathcal{F}_t,\mathbb{P}).
		\vspace{-0.05cm}
	\end{equation*}
\end{definition}

\begin{remark}
	\label{RemStochInt}
	For every fixed $g \in L^p(X)$ and $t \in [0,T]$, we first observe that $W(g)_t = W(\mathds{1}_{[0,t]} g)_T$. Moreover, we show in Lemma~\ref{LemmaBDG} below that the operator $W(\cdot)_t: L^p(X) \rightarrow L^p(\Omega,\mathcal{F}_t,\mathbb{P})$ is well-defined and bounded. In addition, $W(g)_t = \int_0^t g(s)^\top dX_s$ is well-defined as stochastic integral since the integrand $s \mapsto g(s)$ is left-continuous and $\mathbb{F}$-adapted, thus $\mathbb{F}$-predictable and locally bounded.
\end{remark}

Moreover, we extend the BDG inequality to continuous semimartingales by using the norm in \eqref{EqDefLpNorm}. 

\begin{lemma}
	\label{LemmaBDG}
	Let $X$ be a continuous semimartingale and let $p \in [1,\infty)$. Then, there exists a constant $C_{1,p} > 0$ such that for every $g \in L^p(X)$ it holds that
	\vspace{-0.1cm}
	\begin{equation*}
		\mathbb{E}\left[ \sup_{t \in [0,T]} \left\vert W(g)_t \right\vert^p \right]^\frac{1}{p} \leq C_{1,p} \Vert g \Vert_{L^p(X)}.
		\vspace{-0.05cm}
	\end{equation*}
	Hereby, $C_{1,p} := \max(1, C_p) > 0$ consists of the constant $C_p > 0$ appearing in the classic upper Burkholder-Davis-Gundy (BDG) inequality with exponent $p \in [1,\infty)$.
\end{lemma}

\vspace{-0.1cm}

\subsection{Iterated Stratonovich integrals}
\label{SecItInt}

Let us first briefly revisit the notion of the tensor product (see \cite[Chapter~1]{ryan02}). For any $n \in \mathbb{N}_0$, we define $(\mathbb{R}^d)^{\otimes n} := \mathbb{R}^d \otimes \cdots \otimes \mathbb{R}^d$, with convention $(\mathbb{R}^d)^{\otimes 0} \cong \mathbb{R}$. Moreover, for any $p \in [1,\infty)$, we define $L^{np}(X)^{\otimes n} := L^{np}(X) \otimes \cdots \otimes L^{np}(X)$, with convention $L^{np}(X)^{\otimes n} := \mathbb{R}$ for $n = 0$. Then, every tensor $g \in L^{np}(X)^{\otimes n}$ can be seen as linear functional acting on multilinear forms $L^{np}(X) \times \cdots \times L^{np}(X) \rightarrow \mathbb{R}$ (see \cite[Section~1.1]{ryan02}), which can be expressed as
\vspace{-0.05cm}
\begin{equation*}
	g = \sum_{j=1}^m g_{j,1} \otimes \cdots \otimes g_{j,n} \,\, \in \,\, L^{np}(X)^{\otimes n} := L^{np}(X) \otimes \cdots \otimes L^{np}(X)
	\vspace{-0.05cm}
\end{equation*}
for some $m \in \mathbb{N}$ and $g_{j,1},...,g_{j,n} \in L^{np}(X)$, $j = 1,...,m$, where the representation might not be unique (see \cite[Section~1.1]{ryan02}). On the vector space $L^{np}(X)^{\otimes n}$, we use the projective tensor norm defined by
\vspace{-0.05cm}
\begin{equation*}
	\Vert g \Vert_{L^{np}(X)^{\otimes n}} = \inf\left\lbrace \sum_{j=1}^m \prod_{k=1}^n \Vert g_{j,k} \Vert_{L^{np}(X)}: \,\,
	\begin{matrix*}[l]
		g = \sum_{j=1}^m g_{j,1} \otimes \cdots \otimes g_{j,n} \\
		m \in \mathbb{N}, \, g_{j,1},...,g_{j,n} \in L^{np}(X)
	\end{matrix*}
	\right\rbrace
	\vspace{-0.05cm}
\end{equation*}
and show some properties of $\Vert \cdot \Vert_{L^{np}(X)^{\otimes n}}$. The proofs of this section are given in Section~\ref{SecProofsItInt}.

\begin{lemma}
	\label{LemmaLpXnNorm}
	For every $n \in \mathbb{N}$ and $p \in [1,\infty)$, $(L^{np}(X)^{\otimes n},\Vert \cdot \Vert_{L^{np}(X)^{\otimes n}})$ is a normed vector space. Moreover, for every $g_1,...,g_n \in L^{np}(X)$, it holds that $\left\Vert g_1 \otimes \cdots \otimes g_n \right\Vert_{L^{np}(X)^{\otimes n}} = \prod_{k=1}^n \Vert g_k \Vert_{L^{np}(X)}$.
\end{lemma}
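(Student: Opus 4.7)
The plan is to verify the norm axioms and then establish the elementary tensor identity, with Hahn-Banach handling both nontrivial steps. First I would dispatch the routine seminorm properties: non-negativity is built into the definition as an infimum of non-negative quantities; absolute homogeneity follows because multiplying each coefficient $\lambda_j$ by $\mu \in \mathbb{R}$ gives a bijection between representations of $g$ and of $\mu g$, scaling the infimum by $\vert \mu \vert$; and the triangle inequality comes from concatenating representations of $g$ and $h$ into one for $g+h$. Moreover, the singleton representation of $g_1 \otimes \cdots \otimes g_n$ with one term ($m=1$, $\lambda_1 = 1$) immediately yields the upper bound $\Vert g_1 \otimes \cdots \otimes g_n \Vert_{L^{np}(X)^{\otimes n}} \leq \prod_{k=1}^n \Vert g_k \Vert_{L^{np}(X)}$.

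Next, I would establish the matching lower bound and thereby the elementary-tensor identity. The trivial case where some $g_k = 0$ is immediate, so assume all $g_k \neq 0$. Since $(L^{np}(X), \Vert \cdot \Vert_{L^{np}(X)})$ is a normed space by Lemma~\ref{LemmaLpNorm}, Hahn-Banach supplies functionals $f_k \in (L^{np}(X))^*$ with dual norm $1$ and $f_k(g_k) = \Vert g_k \Vert_{L^{np}(X)}$, for each $k = 1,\dots,n$. The continuous $n$-linear form $A(h_1,\dots,h_n) := f_1(h_1)\cdots f_n(h_n)$ then satisfies $\vert A(h_1,\dots,h_n) \vert \leq \prod_{k=1}^n \Vert h_k \Vert_{L^{np}(X)}$. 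Evaluating any representation $g_1 \otimes \cdots \otimes g_n = \sum_{j=1}^m \mu_j h_{j,1} \otimes \cdots \otimes h_{j,n}$ on $A$ via \eqref{EqDefTensor} yields
\begin{equation*}
\prod_{k=1}^n \Vert g_k \Vert_{L^{np}(X)} = A(g_1, \dots, g_n) = \sum_{j=1}^m \mu_j A(h_{j,1}, \dots, h_{j,n}) \leq \sum_{j=1}^m \vert \mu_j \vert \prod_{k=1}^n \Vert h_{j,k} \Vert_{L^{np}(X)},
\end{equation*}
and passing to the infimum over representations delivers the reverse inequality.

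The same Hahn-Banach construction handles the last piece, namely definiteness. If $\Vert g \Vert_{L^{np}(X)^{\otimes n}} = 0$, then for every $\varepsilon > 0$ there is a representation with $\sum_j \vert \lambda_j \vert \prod_k \Vert g_{j,k} \Vert_{L^{np}(X)} < \varepsilon$, and for any product-type continuous $n$-linear form $A(h_1,\dots,h_n) = f_1(h_1)\cdots f_n(h_n)$ with $f_k \in (L^{np}(X))^*$ one obtains $\vert g(A) \vert \leq \prod_k \Vert f_k \Vert_{(L^{np}(X))^*} \cdot \varepsilon$, hence $g(A) = 0$. The main obstacle, which I expect to require the most care, is passing from ``$g$ annihilates every product-type continuous $n$-linear form'' to ``$g = 0$ in $L^{np}(X)^{\otimes n}$''. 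I would handle this by a finite-dimensional reduction: any representation of $g$ places its entries in a finite-dimensional subspace $V \subseteq L^{np}(X)$, and inside $V^{\otimes n}$ the product-type linear functionals built from $V^*$ separate tensors by elementary linear algebra (use dual bases in $V$ to read off coefficients). Hahn-Banach extends each coordinate functional on $V$ to an element of $(L^{np}(X))^*$, producing precisely the family already known to annihilate $g$, so $g = 0$ and the proof is complete.
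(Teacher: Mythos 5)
The paper actually states this lemma without supplying a proof, instead deferring to the general theory of projective tensor norms via the citation of \cite[Chapter~1]{ryan02} in the surrounding discussion. So there is no paper-internal argument to compare against, and your proposal must be judged on its own.

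Your proposal is correct and is, in substance, the standard textbook argument for the projective tensor norm being a genuine norm (rather than a mere seminorm) on the algebraic tensor product of normed spaces. The routine seminorm properties (non-negativity, absolute homogeneity by rescaling the coefficients $\lambda_j$ in any representation, triangle inequality by concatenating representations) are handled exactly as one expects. The elementary-tensor identity is correctly obtained via Hahn--Banach: the singleton representation gives the upper bound, and the norm-attaining functionals $f_k$ give the product-type multilinear form $A(h_1,\dots,h_n) = \prod_k f_k(h_k)$ that witnesses the matching lower bound; crucially, you are entitled to pass from $A(g_1,\dots,g_n)$ to $\sum_j \mu_j A(h_{j,1},\dots,h_{j,n})$ for any other representation because the paper has already established (following Ryan) that the value $g(A)$ in \eqref{EqDefTensor} is representation-independent. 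The definiteness step is the genuinely delicate one, and your finite-dimensional reduction is exactly the right mechanism: any representation of $g$ lives in $V^{\otimes n}$ for a finite-dimensional $V \subseteq L^{np}(X)$, the dual-basis product functionals separate points of $V^{\otimes n}$, and Hahn--Banach extends them to $(L^{np}(X))^*$. Note also that Hahn--Banach applies to the normed space $L^{np}(X)$ even though it is not complete, so your use of Lemma~\ref{LemmaLpNorm} (with $np$ in place of $p$) is all that is needed. Altogether, this is a faithful reconstruction of the argument the paper implicitly relies on.
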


Note that for $n = 1$, we recover $L^p(X)$ from Section~\ref{SecStochInt}, i.e.~$L^p(X)^{\otimes 1} \cong L^p(X)$. Now, we can introduce iterated Stratonovich integrals with respect to $X$ and show that they are well-defined.

\begin{definition}
	\label{DefItInt}
	Let $X$ be a continuous semimartingale, let $n \in \mathbb{N}$, and $p \in [1,\infty)$. Then, for every $t \in [0,T]$, we introduce the \emph{$n$-fold iterated Stratonovich integral} as operator $J^\circ_n(\cdot)_t: L^{np}(X)^{\otimes n} \rightarrow L^p(\Omega,\mathcal{F}_t,\mathbb{P})$ defined by $J^\circ_0(g)_t = g$, for $g \in \mathbb{R}$, and for $n \geq 1$ by\footnote{For a sufficiently integrable $\mathbb{F}$-predictable process $\theta := (\theta_t)_{t \in [0,T]}$ and an $\mathbb{F}$-adapted process $Y := (Y_t)_{t \in [0,T]}$, the Stratonovich integral $\int_0^t \theta_s \!\circ\! dY_s$ is defined as the $L^2(\mathbb{P})$-limit of $\sum_{k=1}^n \frac{1}{2} (\theta_{t_k} \!+\! \theta_{t_{k+1}}) (Y_{t_{k+1}} \!-\! Y_{t_k})$ over partitions of the form $0 \leq t_0 < t_1 < ... < t_n \leq T$, as the mesh $\max_{k=1,...,n} \vert t_k - t_{k-1} \vert$ goes to zero (see, e.g., \cite[Chapter~V.5]{protter05}).}
	\vspace{-0.05cm}
	\begin{equation*}
		J^\circ_n(g)_t := \sum_{j=1}^m \int_0^t \Bigg( \int_0^{t_{n-1}} \cdots \Bigg( \int_0^{t_2} \circ \underbrace{dW(g_{j,1})_{t_1}}_{=g_{j,1}(t_1)^\top dX_{t_1}} \Bigg) \cdots \circ \hspace{-0.18cm} \underbrace{dW(g_{j,n-1})_{t_{n-1}}}_{=g_{j,n-1}(t_{n-1})^\top dX_{t_{n-1}}} \Bigg) \circ \underbrace{dW(g_{j,n})_{t_n}}_{=g_{j,n}(t_n)^\top dX_{t_n}},
		\vspace{-0.1cm}
	\end{equation*}
	for $g = \sum_{j=1}^m g_{j,1} \otimes \cdots \otimes g_{j,n} \in L^{np}(X)^{\otimes n}$.
\end{definition}

\begin{remark}
	\label{RemItInt}
	While we prove in Lemma~\ref{LemmaItIntLinearBDG} below that $J^\circ_n(\cdot): L^{np}(X)^{\otimes n} \rightarrow L^p(\Omega,\mathcal{F}_t,\mathbb{P})$ is well-defined, linear, and bounded, we first show by induction on $n \in \mathbb{N}$ that $J^\circ_n(g_{j,1} \otimes \cdots \otimes g_{j,n})_t$ is well-defined as iterated Stratonovich integrals, for all $g_{j,1},...,g_{j,n} \in L^{np}(X)$ and $t \in [0,T]$. Indeed, for $n = 1$, we observe that $J^\circ_1(g_1)_t = \int_0^t \circ dW(g_1)_s = W(g_1)_t$ is by Remark~\ref{RemStochInt} well-defined as stochastic integral. Now, if $J^\circ_{n-1}(g_{j,1} \otimes \cdots \otimes g_{j,n-1})_s$ is well-defined, for all $s \in [0,t]$ and some $n \in \mathbb{N} \cap [2,\infty)$, then
	\vspace{-0.05cm}
	\begin{equation}
		\label{EqRemItInt1}
		\begin{aligned}
			& J^\circ_n(g_{j,1} \otimes \cdots \otimes g_{j,n})_t = \int_0^t J^\circ_{n-1}(g_{j,1} \otimes \cdots \otimes g_{j,n-1})_s \circ dW(g_{j,n})_s \\
			& = \int_0^t J^\circ_{n-1}(g_{j,1} \otimes \cdots \otimes g_{j,n-1})_s dW(g_{j,n})_s + \frac{1}{2} \langle J^\circ_{n-1}(g_{j,1} \otimes \cdots \otimes g_{j,n-1}), W(g_{j,n}) \rangle_t \\
			& = \int_0^t J^\circ_{n-1}(g_{j,1} \otimes \cdots \otimes g_{j,n-1})_s g_{j,n}(s)^\top dX_s + \frac{1}{2} \int_0^t J^\circ_{n-2}(g_{j,1} \otimes \cdots \otimes g_{j,n-2})_s g_{j,n-1}(s)^\top d\langle X \rangle_s g_{j,n}(s)
			\vspace{-0.05cm}
		\end{aligned}
	\end{equation}
	is also well-defined as stochastic integral (resp.~Lebesgue-Stieltjes integral), since the integrand $s \mapsto J^\circ_{n-1}(g_{j,1} \otimes \cdots \otimes g_{j,n-1})_s g_n(s)$ is left-continuous and $\mathbb{F}$-adapted, thus $\mathbb{F}$-predictable and locally bounded.
\end{remark}

\begin{lemma}
	\label{LemmaItIntLinearBDG}
	Let $X$ be a continuous semimartingale, let $n \in \mathbb{N}$, $p \in [1,\infty)$, and $t \in [0,T]$. Then, the following holds true:
	\begin{enumerate}
		\item\label{LemmaItIntLinearBDG1} The operator $J^\circ_n(\cdot)_t: L^{np}(X)^{\otimes n} \rightarrow L^p(\Omega,\mathcal{F}_t,\mathbb{P})$ is linear.
		\item\label{LemmaItIntLinearBDG2} $J^\circ_n(g)_t$ does (up to $\mathbb{P}$-null sets) not depend on the representation of $g \in L^{np}(X)^{\otimes n}$, i.e.~if $g \in L^{np}(X)^{\otimes n}$ has representations $g^{(1)} = \sum_{j=1}^{m_1} g^{(1)}_{j,1} \otimes \cdots \otimes g^{(1)}_{j,n} \in L^{np}(X)^{\otimes n}$ and $g^{(2)} = \sum_{j=1}^{m_2} g^{(2)}_{j,1} \otimes \cdots \otimes g^{(2)}_{j,n} \in L^{np}(X)^{\otimes n}$, then $J^\circ_n\big(g^{(1)}\big)_t = J^\circ_n\big(g^{(2)}\big)_t$, $\mathbb{P}$-a.s.
		\item\label{LemmaItIntLinearBDG3} There exists a constant $C_{n,p} > 0$ such that for every $g \in L^{np}(X)^{\otimes n}$, it holds that
		\vspace{-0.1cm}
		\begin{equation*}
			\mathbb{E}\left[ \sup_{t \in [0,T]} \left\vert J^\circ_n(g)_t \right\vert^p \right]^\frac{1}{p} \leq C_{n,p} \Vert g \Vert_{L^{np}(X)^{\otimes n}},
			\vspace{-0.1cm}
		\end{equation*}
		with $C_{n,p} := \frac{3}{2} \prod_{k=1}^n \max\big( 1, C_\frac{np}{k} \big) > 0$, where $C_r > 0$ denotes the constant of the classic upper Burkholder-Davis-Gundy inequality with exponent $r \in [1,\infty)$.
	\end{enumerate} 
\end{lemma}

Note that \ref{LemmaItIntLinearBDG1}+\ref{LemmaItIntLinearBDG2} follow from standard tensor product results, whereas \ref{LemmaItIntLinearBDG3} is obtained by repeated application of the BDG-type inequality in Lemma~\ref{LemmaBDG} after adding the Stratonovich correction as in \eqref{EqRemItInt1}.

Moreover, we compute iterated Stratonovich integrals of symmetric tensors. To this end, we consider the vector subspace of symmetric tensors defined by
\vspace{-0.03cm}
\begin{equation}
	\label{EqDefLpSym}
	L^{np}_{\sym}(X)^{\otimes n} := \linspan\left\lbrace g \in L^{np}(X)^{\otimes n}: \sym(g) = g \right\rbrace \subseteq L^{np}(X)^{\otimes n},
	\vspace{-0.03cm}
\end{equation}
where $\sym(g) := \frac{1}{n!} \sum_{\sigma \in \mathcal{S}_n} \sum_{j=1}^m g_{j,\sigma(1)} \otimes \cdots \otimes g_{j,\sigma(n)}$ for $g = \sum_{j=1}^m g_{j,1} \otimes \cdots \otimes g_{j,n} \in L^{np}(X)^{\otimes n}$, with $\mathcal{S}_n$ denoting the set of permutations $\sigma: \lbrace 1,...,n \rbrace \rightarrow \lbrace 1,...,n \rbrace$.

\begin{proposition}
	\label{PropMon}
	Let $X$ be a continuous semimartingale, $n \in \mathbb{N}_0$, and $p \in [1,\infty)$. Then, for every $g =$ $\sum_{j=1}^m g_{j,1} \otimes \cdots \otimes g_{j,n} \in L^{np}_{\sym}(X)^{\otimes n}$ and $t \in [0,T]$, we have $J^\circ_n(g)_t = \frac{1}{n!} \sum_{j=1}^m \prod_{k=1}^n W(g_{j,k})_t$, $\mathbb{P}$-a.s.
\end{proposition}

\section{Chaos expansion and $L^p$-hedging}
\label{SecChaosHedg}

In this section, we show a chaos expansion of $L^p(\Omega,\mathcal{F}_T,\mathbb{P})$ into iterated Stratonovich integrals with respect to a given continuous semimartingale $X := \big(X^1_t,...,X^d_t\big)_{t \in [0,T]}$ satisfying Assumption~\ref{AssExpInt}. 

As a consequence, every financial derivative $G \in L^p(\Omega,\mathcal{F}_T,\mathbb{P})$ can be approximated by finitely many iterated Stratonovich integrals. By combining this representation with a denseness result for $\mathbb{F}$-predictable processes, we further show that we can approximately solve the $L^p$-hedging problem.

\subsection{Chaos expansion with iterated Stratonovich integrals}
\label{SecChaos}

First, we prove that polynomials of $X$ are dense in $L^p(\Omega,\mathcal{F}_T,\mathbb{P})$. To this end, we use the exponential integrability of $X$ in Assumption~\ref{AssExpInt}, which is similar to \cite[Section~5]{jamshidian05}. The proofs of the results in this section can be found in Section~\ref{SecProofsChaos}.

\begin{proposition}
	\label{PropPolDense}
	Let $X$ be a continuous semimartingale satisfying Assumption~\ref{AssExpInt} and $p \in [1,\infty)$. Then,
	\vspace{-0.25cm}
	\begin{equation}
		\label{EqPropPolDense1}
		\Pol(X) := \linspan\left\lbrace \left( X^{i_1}_{t_1} \right)^{k_1} \cdots \left( X^{i_m}_{t_m} \right)^{k_m}: \,\,
		\begin{matrix}
			m \in \mathbb{N}, \, (i_1,...,i_m) \in \lbrace 1,...,d \rbrace^m, \\
			(k_1,...,k_m) \in \mathbb{N}_0^m, \, (t_1,...,t_m) \in [0,T]^m
		\end{matrix}
		\right\rbrace
	\end{equation}
	is dense in $L^p(\Omega,\mathcal{F}_T,\mathbb{P})$.
\end{proposition}

For the following chaos expansion, we now apply Proposition~\ref{PropPolDense} to show that the direct sum\footnote{The direct sum of $(\mathcal{Z}_n)_{n \in \mathbb{N}_0} \subseteq L^p(\Omega,\mathcal{F}_T,\mathbb{P})$ is defined as $\bigoplus_{n=0}^\infty \mathcal{Z}_n := \big\lbrace \sum_{n \in N} Z_n: N \subseteq \mathbb{N}_0 \text{ finite}, \, Z_n \in \mathcal{Z}_n \big\rbrace$.} $\bigoplus_{n \in \mathbb{N}_0} \big\lbrace J^\circ_n(g_n)_T: g_n \in L^{np}_{\diag}(X)^{\otimes n} \big\rbrace$ is dense in $L^p(\Omega,\mathcal{F}_T,\mathbb{P})$, where we denote by
\vspace{-0.1cm}
\begin{equation*}
	L^{np}_{\diag}(X)^{\otimes n} := \linspan\left\lbrace g_0^{\otimes n} := g_0 \otimes \cdots \otimes g_0: g_0 \in L^{np}(X) \right\rbrace \subseteq L^{np}_{\sym}(X)^{\otimes n} \subseteq L^{np}(X)^{\otimes n}
	\vspace{-0.05cm}
\end{equation*}
the subspace of diagonal tensors. The idea of the proof is the following. For simple functions of the form $[0,T] \ni t \mapsto g_\lambda(t) := \sum_{l=1}^m \lambda_l \mathds{1}_{[0,t_l]}(t) e_{i_l} \in \mathbb{R}^d$, with $m \in \mathbb{N}$, $(i_1,...,i_m) \in \lbrace 1,...,d \rbrace^m$, $(k_1,...,k_m) \in \mathbb{N}_0^m$, and $(t_1,...,t_m) \in [0,T]^m$, it follows from Proposition~\ref{PropMon} that
\vspace{-0.15cm}
\begin{equation*}
	J^\circ_n\left(g_\lambda^{\otimes n}\right) = \frac{W(g_\lambda)_T^n}{n!} = \frac{\left( \sum_{l=1}^m \lambda_l \big( X^{i_l}_{t_l} - X^{i_l}_0 \big) \right)^n}{n!}.
	\vspace{-0.05cm}
\end{equation*}
Hence, for every continuous linear functional $l \in L^p(\Omega,\mathcal{F}_T,\mathbb{P})^*$ vanishing on $\bigoplus_{n \in \mathbb{N}_0} \big\lbrace J^\circ_n(g_n)_T: g_n \in L^{np}_{\diag}(X)^{\otimes n} \big\rbrace$, we apply a dominated convergence argument to obtain that $l$ also vanishes on $e^{{W(g_\lambda)_T}} = \sum_{n=1}^\infty \frac{W(g_\lambda)_T^n}{n!}$. Since the partial derivatives of $e^{{W(g_\lambda)_T}}$ with respect to $\lambda$ yield the polynomials of $X$, we thus conclude that $l$ also vanishes on $\Pol(X)$. Finally, by using that $\Pol(X)$ is by Proposition~\ref{PropPolDense} dense in $L^p(\Omega,\mathcal{F}_T,\mathbb{P})$ together with the Hahn-Banach theorem, we obtain the chaos expansion.

\begin{theorem}[Chaos expansion with iterated Stratonovich integrals]
	\label{ThmChaos}
	Let $X$ be a continuous semimartingale satisfying Assumption~\ref{AssExpInt} and let $p \in [1,\infty)$. Then, the following holds true:
	\begin{enumerate}
		\item\label{ThmChaos1} For $n,m_n \in \mathbb{N}$ and $g_n \!:=\! \sum_{j=1}^{m_n} g_{n,j}^{\otimes n} \in L^{np}_{\diag}(X)^{\otimes n}$ with $g_{n,j} \!\in\! L^{np}(X)$, $j \!=\! 1,...,m_n$, we have
		\vspace{-0.05cm}
		\begin{equation*}
			J^\circ_n(g_n)_T = \int_0^T \left( \vartheta^{g_n}_t \right)^\top dX_t + \int_0^T \trace\left( \eta^{g_n}_t d\langle X \rangle_t \right), \quad\quad \mathbb{P}\text{-a.s.},
			\vspace{-0.1cm}
		\end{equation*}
		where the $\mathbb{R}^d$-valued process $\vartheta^{g_n} := \big( \vartheta^{g_n}_t \big)_{t \in [0,T]}$ and the $\mathbb{R}^{d \times d}$-valued process $\eta^{g_n} := \big( \eta^{g_n}_t \big)_{t \in [0,T]}$ are for every $t \in [0,T]$ given by
		\vspace{-0.05cm}
		\begin{equation*}
			\quad\quad\quad \vartheta^{g_n}_t := \sum_{j=1}^{m_n} \frac{W(g_{n,j})_t^{n-1}}{(n-1)!} g_{n,j}(t), \quad\quad \text{and} \quad\quad \eta^{g_n}_t := \frac{1}{2} \sum_{j=1}^{m_n} \frac{W(g_{n,j})_t^{n-1}}{(n-1)!} g_{n,j}(t) g_{n,j}(t)^\top.
			\vspace{-0.1cm}
		\end{equation*}
		\item\label{ThmChaos2} The direct sum $\bigoplus_{n \in \mathbb{N}_0} \big\lbrace J^\circ_n(g_n)_T: g_n \in L^{np}_{\diag}(X)^{\otimes n} \big\rbrace$ is dense in $L^p(\Omega,\mathcal{F}_T,\mathbb{P})$, i.e.~for every $G \in L^p(\Omega,\mathcal{F}_T,\mathbb{P})$ and $\varepsilon > 0$ there exist $N \in \mathbb{N}$ and $g_n \in L^{np}_{\diag}(X)^{\otimes n}$, $n = 0,...,N$, such that
		\vspace{-0.15cm}
		\begin{equation*}
			\left\Vert G - \sum_{n=0}^N J^\circ_n(g_n)_T \right\Vert_{L^p(\mathbb{P})} < \varepsilon.
			\vspace{-0.1cm}
		\end{equation*}
	\end{enumerate}
\end{theorem}

\begin{remark}
	Theorem~\ref{ThmChaos} shows that iterated Stratonovich integrals are dense in $L^p(\Omega,\mathcal{F}_T,\mathbb{P})$ instead of an orthogonal decompositions of iterated Ito integrals in $L^2(\Omega,\mathcal{F}_T,\mathbb{P})$ proven in \cite{wiener38,ito51,emery89,nualart01,ditella16CRP}.
\end{remark}

\subsection{$L^p$-hedging}
\label{SecLpHedging}

In this section, we consider the following $L^p$-hedging problem, where $p \in [1,\infty)$. For a given financial derivative $G \in L^p(\Omega,\mathcal{F}_T,\mathbb{P})$, we aim to find an optimal initial endowment $c \in \mathbb{R}$ and an optimal trading strategy $\theta := (\theta_t)_{t \in [0,T]}$ such that the $L^p$-hedging error $\Vert G - c - \int_0^T \theta_t^\top dX_t \big\Vert_{L^p(\mathbb{P})}$ \hfill is

\pagebreak

\noindent minimized. For $p = 2$, this problem coincides with the quadratic hedging approaches in \cite{schweizer99,pham00}, where sufficient conditions can be found for the existence of an optimal pair $c \in \mathbb{R}$ and $\theta := (\theta_t)_{t \in [0,T]}$.

We denote by $\Theta^p(X)$ the space of $\mathbb{F}$-predictable $\mathbb{R}^d$-valued processes $\theta := (\theta_t)_{t \in [0,T]}$ with
\vspace{-0.13cm}
\begin{equation}
	\label{EqDefThetapNorm}
	\Vert \theta \Vert_{\Theta^p(X)} := \sum_{i=1}^d \mathbb{E}\left[ \left( \int_0^T \left\vert \theta^i_t dA^i_t \right\vert \right)^p \right]^\frac{1}{p} + \sum_{i=1}^d \mathbb{E}\left[ \left( \int_0^T \left( \theta^i_t \right)^2 d\langle M^i \rangle_t \right)^\frac{p}{2} \right]^\frac{1}{p} < \infty.
	\vspace{-0.05cm}
\end{equation}
Note that the Kunita-Watanabe inequality and the Cauchy-Schwarz inequality ensure that for every $\theta \in \Theta^p(X)$ it holds that $\mathbb{E}\big[ \big( \int_0^T \big\vert \theta_t^\top dA_t \big\vert \big)^p \big]^{1/p} + \mathbb{E}\big[ \big( \int_0^T \theta_t^\top d\langle M \rangle_t \theta_t \big)^{p/2} \big]^{1/p} \leq \Vert \theta \Vert_{\Theta^p(X)}$.

Then, by combining the chaos expansion (Theorem~\ref{ThmChaos}) with a denseness result for $\mathbb{F}$-predictable processes, we are able to approximately solve the $L^p$-hedging problem. The proof is given in Section~\ref{SecProofsLpHedging}.

\begin{theorem}[$L^p$-hedging]
	\label{ThmLpHedging}
	Let $X$ be a continuous semimartingale satisfying Assumption~\ref{AssExpInt} and let $p \in [1,\infty)$. Then, for every $G \in L^p(\Omega,\mathcal{F}_T,\mathbb{P})$ and $\varepsilon > 0$ there exist some $g_0 \in \mathbb{R}$ and $N,m_n \in \mathbb{N}$ as well as $g_{n,j,0},g_{n,j,1} \in L^{np}(X)$, $n = 1,...,N$ and $j = 1,...,m_n$, such that
	\vspace{-0.13cm}
	\begin{equation*}
		\left\Vert G - g_0 - \int_0^T \left( \vartheta^{g_{1:N}}_t \right)^\top dX_t \right\Vert_{L^p(\mathbb{P})} \leq \varepsilon + \inf_{(c,\theta) \in \mathbb{R} \times \Theta^p(X)} \left\Vert G - c - \int_0^T \theta_t^\top dX_t \right\Vert_{L^p(\mathbb{P})},
		\vspace{-0.1cm}
	\end{equation*} 
	where the $\mathbb{R}^d$-valued process $\vartheta^{g_{1:N}} := \big( \vartheta^{g_{1:N}}_t \big)_{t \in [0,T]}$ is for every $t \in [0,T]$ given by
	\vspace{-0.15cm}
	\begin{equation*}
		\vartheta^{g_{1:N}}_t := \sum_{n=1}^N \sum_{j=1}^{m_n} \frac{W(g_{n,j,0})_t^{n-1}}{(n-1)!} g_{n,j,1}(t).
		\vspace{-0.1cm}
	\end{equation*}
\end{theorem}

In the following, we use (possibly random) neural networks to learn the integrands $g_n \in L^{np}(X)^{\otimes n}$, $n = 0,...,N$, in the chaos expansion (Theorem~\ref{ThmChaos}) and the functions $g_{n,j,0},g_{n,j,1} \in L^{np}(X)$, $n = 1,...,N$ and $j = 1,...,m_n$ in the approximation of the $L^p$-hedging problem (Theorem~\ref{ThmLpHedging}).

\section{Universal approximation of financial derivatives}
\label{SecUATs}

In this section, we approximate any given financial derivative $G \in L^p(\Omega,\mathcal{F}_T,\mathbb{P})$ in two steps: first by using the chaos expansion (Theorem~\ref{ThmChaos}) and then by replacing the deterministic integrands by (possibly random) neural networks (Theorem~\ref{ThmUAT}+\ref{ThmRandUAT}). We also refer to Algorithm~\ref{Alg} for the procedure.

\subsection{Fully trained neural networks}
\label{SecUAT}

Inspired by the functionality of a human brain, neural networks were introduced in \cite{mcculloch43}. Here, we generalize them to our tensor-valued setting.

\begin{definition}
	For $p \in [1,\infty)$ and $n,m \in \mathbb{N}$, a diagonal tensor $\varphi = \sum_{j=1}^m y_j \varphi_j^{\otimes n} \in L^{np}_{\diag}(X)^{\otimes n}$ is called a \emph{(tensor-valued) neural network} if for every $j = 1,...,m$ it holds that
	\vspace{-0.1cm}
	\begin{equation}
		\label{EqDefNN1}
		[0,T] \ni t \quad \mapsto \quad \varphi_j(t) = \rho(a_j t + b_j) \in \mathbb{R}^d,
		\vspace{-0.1cm}
	\end{equation}
	for some \emph{activation function} $\rho \in C(\mathbb{R})$ that is applied componentwise in \eqref{EqDefNN1}. Hereby, $a_1,...,a_m \in \mathbb{R}^d$ are \emph{weights}, $b_1,...,b_m \in \mathbb{R}^d$ are \emph{biases}, and $y_1,...,y_m \in \mathbb{R}$ are \emph{linear readouts}. Moreover, we define $\mathcal{NN}^\rho_{d,n}$ as the set of all tensor-valued neural networks, whereas for $n = 0$ we set $\mathcal{NN}^\rho_{d,0} := \mathbb{R}$.
\end{definition}

\begin{remark}
	Since $\rho \in C(\mathbb{R})$ is continuous, Lemma~\ref{LemmaLpNorm}~\ref{LemmaLpNorm2} ensures that $\varphi_j \in L^p(X)$ for all $j = 1,...,m$. Therefore, every tensor-valued neural network $\varphi \in \mathcal{NN}^\rho_{d,n}$ is a diagonal tensor in $L^{np}_{\diag}(X)^{\otimes n}$.
\end{remark}

Now, we use the universal approximation property of neural networks (see, e.g., \cite{cybenko89,hornik91,pinkus99,neufeld24}) to approximate any diagonal tensor in $L^{np}_{\text{diag}}(X)^{\otimes n}$ by a tensor-valued neural network. To this end, we assume that the activation function $\rho \in C(\mathbb{R})$ is non-polynomial, i.e.~$\rho \in C(\mathbb{R})$ is almost everywhere algebraically not equal to a polynomial (see also \cite{leshno93}). The proof can be found in Section~\ref{SecProofsUAT}.

\begin{proposition}
	\label{PropUAT}
	Let $X$ be a continuous semimartingale satisfying Assumption~\ref{AssExpInt}, let $n \in \mathbb{N}_0$, $p \in [1,\infty)$, and let $\rho \in C(\mathbb{R})$ be non-polynomial. Then, $\mathcal{NN}^\rho_{d,n}$ is dense in $L^{np}_{\diag}(X)^{\otimes n}$, i.e.~for every $g \in L^{np}_{\diag}(X)^{\otimes n}$ and $\varepsilon > 0$ there exists some $\varphi \in \mathcal{NN}^\rho_{d,n}$ such that $\Vert g - \varphi \Vert_{L^{np}(X)^{\otimes n}} < \varepsilon$.
\end{proposition}

Hence, by using the chaos expansion (Theorem~\ref{ThmChaos}) together with Proposition~\ref{PropUAT}, we obtain the following universal approximation theorem for financial derivatives. The proof is given in Section~\ref{SecProofsUAT}.

\begin{theorem}[Universal approximation]
	\label{ThmUAT}
	Let $X$ be a continuous semimartingale satisfying Assumption~\ref{AssExpInt}, let $p \in [1,\infty)$, and let $\rho \in C(\mathbb{R})$ be non-polynomial. Then, $\bigoplus_{n \in \mathbb{N}_0} \big\lbrace J^\circ_n(\varphi_n)_T: \varphi_n \in \mathcal{NN}^\rho_{d,n} \big\rbrace$ is dense in $L^p(\Omega,\mathcal{F}_T,\mathbb{P})$, i.e.~for every $G \in L^p(\Omega,\mathcal{F}_T,\mathbb{P})$ and $\varepsilon > 0$ there exists some $N \in \mathbb{N}$ and $\varphi_n \in \mathcal{NN}^\rho_{d,n}$, $n = 0,...,N$, such that $\big\Vert G - \sum_{n=0}^N J^\circ_n(\varphi_n)_T \big\Vert_{L^p(\mathbb{P})} < \varepsilon$.
\end{theorem}

\subsection{Random neural networks}
\label{SecRandUAT}

Inspired by the works on extreme learning machines and random feature learning (see \cite{huang06,rahimi07,grigoryeva18,gonon20,gonon21,schmocker23}), we now consider random neural networks, instead of fully trained neural networks, where the weights and biases are now randomly initialized. In this case, only the linear readout needs to be trained, which can be performed efficiently, e.g., by the least squares method.

To this end, we impose the following condition on the activation function $\rho \in C(\mathbb{R})$ and the random initialization of the weights and biases $\widetilde{a}_1,\widetilde{b}_1: \widetilde{\Omega} \rightarrow \mathbb{R}^d$ defined on another probability space $(\widetilde{\Omega},\widetilde{\mathcal{A}},\widetilde{\mathbb{P}})$.

\begin{assumption}
	\label{AssCondCDF}
	Let $\widetilde{a}_1,\widetilde{b}_1: \widetilde{\Omega} \rightarrow \mathbb{R}^d$ be two random vectors defined on a (possibly different) probability space $(\widetilde{\Omega},\widetilde{\mathcal{A}},\widetilde{\mathbb{P}})$ such that for every $(a,b) \in \mathbb{R}^d \times \mathbb{R}^d$ and $r > 0$ it holds that $\widetilde{\mathbb{P}}[\lbrace \widetilde{\omega} \in \widetilde{\Omega}: \Vert (\widetilde{a}_1(\widetilde{\omega}),\widetilde{b}_1(\widetilde{\omega})) - (a,b) \Vert < r \rbrace] > 0$. Moreover, let $\rho \in C(\mathbb{R})$ such that for every $q \in [1,\infty)$ we have
	\begin{equation*}
		\widetilde{\mathbb{E}}\left[ \left\Vert \rho\left( \widetilde{a}_1 \cdot + \widetilde{b}_1 \right) \right\Vert_\infty^q \right] := \widetilde{\mathbb{E}}\left[ \sup_{t \in [0,T]} \left\Vert \rho\left( \widetilde{a}_1 t + \widetilde{b}_1 \right) \right\Vert^q \right] < \infty.
	\end{equation*}
\end{assumption}

\begin{example}
	Assumption~\ref{AssCondCDF} is satisfied for example if $\rho \in C(\mathbb{R})$ is bounded (e.g., sigmoid function $\rho(s) = \frac{1}{1+\exp(-s)}$) or if there exist $C,c > 0$ such that $\vert \rho(s) \vert \leq C \left( 1 + \vert s \vert^c \right)$, for all $s \in \mathbb{R}$, and $\widetilde{a}_1$ and $\widetilde{b}_1$ have moments of all orders (e.g., ReLU function $\rho(s) = \max(s,0)$ and normally distributed $\widetilde{a}_1, \widetilde{b}_1$).
\end{example}

Moreover, we assume that $(\widetilde{\Omega},\widetilde{\mathcal{A}},\widetilde{\mathbb{P}})$ supports two sequences of independent and identically distributed (i.i.d.)~random variables $(\widetilde{a}_j)_{j \in \mathbb{N}} \sim \widetilde{a}_1$ and $(\widetilde{b}_j)_{j \in \mathbb{N}} \sim \widetilde{b}_1$ used for the random weights and biases, and define the $\sigma$-algebra $\widetilde{\mathcal{F}}_{\rand} := \sigma(\lbrace \widetilde{a}_j, \widetilde{b}_j: j \in \mathbb{N} \rbrace)$. Now, we can introduce random neural networks.

\begin{definition}
	For $p \in [1,\infty)$ and $n,m \in \mathbb{N}$, we call $\widetilde{\Omega} \ni \widetilde{\omega} \mapsto \widetilde{\varphi}(\widetilde{\omega}) := \sum_{j=1}^m \widetilde{y}_j(\widetilde{\omega}) \varphi_j(\widetilde{\omega})^{\otimes n} \in L^{np}_{\diag}(X)^{\otimes n}$ a \emph{(tensor-valued) random neural network} if for every $\widetilde{\omega} \in \widetilde{\Omega}$ and $j = 1,...,m$ it holds that
	\begin{equation}
		\label{EqDefRandNN1}
		[0,T] \ni t \quad \mapsto \quad \widetilde{\varphi}_j(\widetilde{\omega})(t) = \rho\left( \widetilde{a}_j(\widetilde{\omega}) t + \widetilde{b}_j(\widetilde{\omega}) \right) \in \mathbb{R}^d,
	\end{equation}
	for some \emph{activation function} $\rho \in C(\mathbb{R})$ that is applied componentwise in \eqref{EqDefRandNN1}. Hereby, $(\widetilde{a}_j)_{j \in \mathbb{N}} \overset{i.i.d.}{\sim} \widetilde{a}_1$ are the \emph{random weights}, $(\widetilde{b}_j)_{j \in \mathbb{N}} \overset{i.i.d.}{\sim} \widetilde{b}_1$ are the \emph{random biases}, and $\widetilde{y}_1,...,\widetilde{y}_m: \widetilde{\Omega} \rightarrow \mathbb{R}$ are the \emph{linear readouts} that are assumed to be $\widetilde{\mathcal{F}}_{\rand}/\mathcal{B}(\mathbb{R})$-measurable and $\widetilde{\mathbb{P}}$-a.s.~bounded. Moreover, we denote by $\mathcal{RN}^\rho_{d,n}$ the set of tensor-valued random neural networks, whereas for $n = 0$ we set $\mathcal{RN}^\rho_{d,0} := \mathbb{R}$.
\end{definition}

In Lemma~\ref{LemmaRandNNWellDef}, we show that every random neural network $\widetilde{\varphi} \in \mathcal{RN}^\rho_{d,n}$ is well-defined in the Bochner space $L^r(\widetilde{\Omega},\widetilde{\mathcal{A}},\widetilde{\mathbb{P}};\overline{L^{np}(X)^{\otimes n}})$ (see, e.g., \cite{hytoenen16} for more details), where $\overline{L^{np}(X)^{\otimes n}}$ denotes the completion of $L^{np}(X)^{\otimes n}$. Now, we apply the strong law of large numbers for Banach space-valued random variables to obtain the following universal approximation result for random neural networks.

\begin{proposition}
	\label{PropRandUAT}
	Let $X$ be a continuous semimartingale satisfying Assumption~\ref{AssExpInt}, let $n \in \mathbb{N}_0$, $p,r \in [1,\infty)$, $g \in L^{np}(X)^{\otimes n}$, let $\rho \in C(\mathbb{R})$ be non-polynomial, and let Assumption~\ref{AssCondCDF} hold. Then, for every $\varepsilon > 0$ there exists some $\widetilde{\varphi} \in \mathcal{RN}^\rho_{d,n}$ such that $\widetilde{\mathbb{E}}\big[ \Vert g - \widetilde{\varphi}(\cdot) \Vert_{L^{np}(X)^{\otimes n}}^r \big]^{1/r} < \varepsilon$.
\end{proposition}

Proposition~\ref{PropRandUAT} implies the following approximation result for financial derivatives, which is similar to Theorem~\ref{ThmUAT} but now with random neural networks. The proof can be found in Section~\ref{SecProofsRandUAT}.

\begin{theorem}[Random universal approximation]
	\label{ThmRandUAT}
	Let $X$ be a continuous semimartingale satisfying Assumption~\ref{AssExpInt}, let $p,r \in [1,\infty)$, let $\rho \in C(\mathbb{R})$ be non-polynomial, and let Assumption~\ref{AssCondCDF} hold. Then, for every $G \in L^p(\Omega,\mathcal{F}_T,\mathbb{P})$ and $\varepsilon > 0$ there exists $N \in \mathbb{N}$ and $\widetilde{\varphi}_n \in \mathcal{RN}^\rho_{d,n}$, $n = 0,...,N$, such that $\big( \widetilde{\omega} \mapsto \sum_{n=0}^N J^\circ_n(\widetilde{\varphi}_n(\widetilde{\omega}))_T \big) \in L^r(\widetilde{\Omega},\widetilde{\mathcal{A}},\widetilde{\mathbb{P}};L^p(\Omega,\mathcal{F}_T,\mathbb{P}))$ satisfies $\widetilde{\mathbb{E}}\big[ \big\Vert G - \sum_{n=0}^N J^\circ_n(\widetilde{\varphi}_n(\widetilde{\omega}))_T \big\Vert_{L^p(\mathbb{P})}^r \big]^{1/r} < \varepsilon$.
\end{theorem}

\begin{remark}
	\label{RemDiscussion}
	The universal approximation results in Theorem~\ref{ThmUAT} (for fully trained neural networks) and in Theorem~\ref{ThmRandUAT} (for random neural networks) extend the following results in the literature:
	\begin{itemize}
		\item[(i)] By viewing financial derivatives as path-dependent functionals, \cite{perez19,lyons20,cartea22,primavera22,cuchiero23,bayer25} applied signature methods. Originating from rough path theory (see \cite{lyons98,friz10,friz20}), the signature satisfies a similar universality property than neural networks, but on path space. Since the (Stratonovich) signature of a semimartingale is equal to the iterated (Stratonovich) integrals with constant integrand equal to one, Theorem~\ref{ThmUAT} can be interpreted as $L^p$-universality of the signature.
		\item[(ii)] The random universal approximation result in Proposition~\ref{PropRandUAT} is similar to \cite{gonon20,gonon21,schmocker23}.
	\end{itemize}
\end{remark}

\subsection{$L^p$-hedging}
\label{SecLpHedgingRN}

For $p \in [1,\infty)$, we now consider again the $L^p$-hedging problem from Section~\ref{SecLpHedging}. However, instead of arbitrary $L^{np}(X)$-functions in Theorem~\ref{ThmLpHedging}, we now use (possibly random) neural networks, which are still able to approximately solve the $L^p$-hedging problem. First, we provide the result for fully trained neural networks, whose proof is given in Section~\ref{SecProofsLpHedgingRN}.

\begin{theorem}[$L^p$-hedging with fully trained neural networks]
	\label{ThmLpHedgingNN}
	Let $X$ be a continuous semimartingale satisfying Assumption~\ref{AssExpInt}, let $p \in [1,\infty)$, and let $\rho \in C(\mathbb{R})$ be non-polynomial. Then, for every $G \in L^p(\Omega,\mathcal{F}_T,\mathbb{P})$ and $\varepsilon > 0$ there exist some $\varphi_0 \in \mathbb{R}$ and $N,m_n \in \mathbb{N}$ as well as $\varphi_{n,j,0},\varphi_{n,j,1} \in \mathcal{NN}^\rho_{d,1}$, $n = 1,...,N$ and $j = 1,...,m_n$, such that
	\begin{equation*}
		\left\Vert G - \varphi_0 - \int_0^T \left( \vartheta^{\varphi_{1:N}}_t \right)^\top dX_t \right\Vert_{L^p(\mathbb{P})} \leq \varepsilon + \inf_{(c,\theta) \in \mathbb{R} \times \Theta^p(X)} \left\Vert G - c - \int_0^T \theta_t^\top dX_t \right\Vert_{L^p(\mathbb{P})},
	\end{equation*} 
	where $\vartheta^{\varphi_{1:N}}_t := \sum_{n=1}^N \sum_{j=1}^{m_n} \frac{W(\varphi_{n,j,0})_t^{n-1}}{(n-1)!} \varphi_{n,j,1}(t)$, for $t \in [0,T]$.
\end{theorem}

Next, we use random neural networks to approximately solve the $L^p$-hedging problem.

\begin{theorem}[$L^p$-hedging with random neural networks]
	\label{ThmLpHedgingRN}
	Let $X$ be a continuous semimartingale satisfying Assumption~\ref{AssExpInt}, let $p,r \in [1,\infty)$, let $\rho \in C(\mathbb{R})$ be non-polynomial, and let Assumption~\ref{AssCondCDF} hold. Then, for every $G \in L^p(\Omega,\mathcal{F}_T,\mathbb{P})$ and $\varepsilon > 0$ there exist some $\widetilde{\varphi}_0 \in \mathbb{R}$ and $N,m_n \in \mathbb{N}$ as well as $\widetilde{\varphi}_{n,j,0},\widetilde{\varphi}_{n,j,1} \in \mathcal{RN}^\rho_{d,1}$, $n = 1,...,N$ and $j = 1,...,m_n$, such that
	\begin{equation*}
		\widetilde{\mathbb{E}}\left[ \left\Vert G - \widetilde{\varphi}_0 - \int_0^T \left( \vartheta^{\widetilde{\varphi}_{1:N}(\cdot)}_t \right)^\top dX_t \right\Vert_{L^p(\mathbb{P})}^r \right]^\frac{1}{r} \leq \varepsilon + \inf_{(c,\theta) \in \mathbb{R} \times \Theta^p(X)} \left\Vert G - c - \int_0^T \theta_t^\top dX_t \right\Vert_{L^p(\mathbb{P})},
	\end{equation*} 
	where $\vartheta^{\widetilde{\varphi}_{1:N}(\widetilde{\omega})}_t := \sum_{n=1}^N \sum_{j=1}^{m_n} \frac{W(\widetilde{\varphi}_{n,j,0}(\widetilde{\omega}))_t^{n-1}}{(n-1)!} \widetilde{\varphi}_{n,j,1}(\widetilde{\omega})(t)$, for $t \in [0,T]$.
\end{theorem}

Hence, by using (possibly random) neural networks, we are indeed able to approximately solve the $L^p$-hedging problem, which coincides for $p = 2$ with the quadratic hedging problem described in \cite{schweizer99,pham00}.

\section{Numerical examples}
\label{SecNumEx}

In this section, we numerically solve the $L^p$-hedging problem described in Section~\ref{SecLpHedging}+\ref{SecLpHedgingRN}. To this end, we assume that $X$ is a continuous semimartingale satisfying Assumption~\ref{AssExpInt}. Then, for a given financial derivative $G \in L^p(\Omega,\mathcal{F}_T,\mathbb{P})$, we follow Algorithm~\ref{Alg} to learn the trading strategy from Theorem~\ref{ThmLpHedgingRN}.

\begin{algorithm}[!htbp]
	\DontPrintSemicolon
	\begin{small}
		\KwInput{Financial derivative $G \in L^p(\Omega,\mathcal{F}_T,\mathbb{P})$, for some $p \in [1,\infty)$, and $K,L,N,m_1,...,m_N \in \mathbb{N}$.}
		\KwOutput{Approximating $L^p$-optimal strategy $\vartheta^{\widetilde{\varphi}_{1:N}(\widetilde{\omega})}$ of $G \in L^p(\Omega,\mathcal{F}_T,\mathbb{P})$ from Theorem~\ref{ThmLpHedgingRN}.}
		
		\vspace{0.08cm}
		
		Generate $L$ discretized sample paths $(X_{t_k}(\omega_l))_{k=0,...,K}$ of $X$, $l=1,...,L$, where $0 = t_0 < ... < t_K = T$ is a partition of $[0,T]$, and compute for every $l = 1,...,L$ the corresponding payoff $G(\omega_l)$.
		\label{Alg1}
		
		\vspace{0.08cm}
		
		Let $\rho \in C(\mathbb{R})$ be non-polynomial and initialize the weights $(\widetilde{a}_{n,j,0})_{n,j \in \mathbb{N}}, (\widetilde{a}_{n,j,1})_{n,j \in \mathbb{N}} \overset{\text{i.i.d.}}{\sim} \widetilde{a}_1$ and biases $(\widetilde{b}_{n,j,0})_{n,j \in \mathbb{N}}, (\widetilde{b}_{n,j,1})_{n,j \in \mathbb{N}} \overset{\text{i.i.d.}}{\sim} \widetilde{b}_1$ randomly, i.e.~fix some $\widetilde{\omega} \in \widetilde{\Omega}$. Hereby, Assumption~\ref{AssCondCDF} should hold.
		\label{Alg2}
		
		\vspace{0.08cm}
		
		Let $\widetilde{\varphi}_0 \in \mathbb{R}$ and define for every $n = 1,...,N$ and $j = 1,...,m_n$ the random neurons
		\label{Alg3}
		\vspace{-0.05cm}
		\begin{equation*}
			\begin{aligned}
				[0,T] \ni t \quad \mapsto \quad \widetilde{\varphi}_{n,j,0}(\widetilde{\omega})(t) & := \rho\left( \widetilde{a}_{n,j,0}(\widetilde{\omega}) t + \widetilde{b}_{n,j,0}(\widetilde{\omega}) \right) \in \mathbb{R}^d, \\
				[0,T] \ni t \quad \mapsto \quad \widetilde{\varphi}_{n,j,1}(\widetilde{\omega})(t) & := \rho\left( \widetilde{a}_{n,j,1}(\widetilde{\omega}) t + \widetilde{b}_{n,j,1}(\widetilde{\omega}) \right) \in \mathbb{R}^d.
			\end{aligned}
			\vspace{-0.05cm}
		\end{equation*}
		
		\vspace{0.08cm}
		
		Compute for every $n = 1,...,N$, $j = 1,...,m_n$, $k = 0,...,K$, and $l = 1,...,L$ the stochastic integrals $W(\widetilde{\varphi}_{n,j,0}(\widetilde{\omega}))_{t_k}(\omega_l)$ and $W(\widetilde{\varphi}_{n,j,1}(\widetilde{\omega}))_{t_k}(\omega_l)$, e.g., by using Ito's formula, i.e.
		\label{Alg4}
		\vspace{-0.05cm}
		\begin{equation*}
			\begin{aligned}
				W(\widetilde{\varphi}_{n,j,0}(\widetilde{\omega}))_{t_k} & := \int_0^{t_k} \widetilde{\varphi}_{n,j,0}(\widetilde{\omega})(s)^\top dX_s \\
				& = \widetilde{\varphi}_{n,j,0}(\widetilde{\omega})(t_k)^\top X_{t_k} - \widetilde{\varphi}_{n,j,0}(\widetilde{\omega})(0)^\top X_0 - \int_0^{t_k} \widetilde{\varphi}_{n,j,0}(\widetilde{\omega})'(s)^\top X_s \, ds
			\end{aligned}
			\vspace{-0.05cm}
		\end{equation*}
		and similarly for $W(\widetilde{\varphi}_{n,j,1}(\widetilde{\omega}))_{t_k} \! := \! \int_0^{t_k} \widetilde{\varphi}_{n,j,1}(\widetilde{\omega})(s)^\top dX_s$, where $\rho \in C(\mathbb{R})$ must now be differentiable.
		
		\vspace{0.08cm}
		
		Let $y_{n,j} \in \mathbb{R}$, $n = 1,...,N$ and $j = 1,...,m_n$, and initialize the trading strategy $\vartheta^{\widetilde{\varphi}_{1:N}(\widetilde{\omega})} := \big( \vartheta^{\widetilde{\varphi}_{1:N}(\widetilde{\omega})}_t \big)_{t \in [0,T]}$ for every $t \in [0,T]$ by
		\vspace{-0.05cm}
		\begin{equation*}
			\vartheta^{\widetilde{\varphi}_{1:N}(\widetilde{\omega})}_t := \sum_{n=1}^N \sum_{j=1}^{m_n} y_{n,j} \frac{W(\widetilde{\varphi}_{n,j,0}(\widetilde{\omega}))_t^{n-1}}{(n-1)!} \widetilde{\varphi}_{n,j,1}(\widetilde{\omega})(t).
			\vspace{-0.05cm}
		\end{equation*}
		
		\vspace{0.08cm}
		
		Minimize the empirical $L^p$-hedging error over $\widetilde{\varphi}_0 \in \mathbb{R}$ and $y_{n,j} \in \mathbb{R}$, $n = 1,...,N$ and $j = 1,...,m_n$, i.e.
		\vspace{-0.05cm}
		\begin{equation}
			\label{EqAlg1}
			\inf_{\widetilde{\varphi}_0, y_{n,j} \in \mathbb{R}} \left( \sum_{l=1}^L \left\vert G(\omega_l) - \widetilde{\varphi}_0 - \left( \int_0^T \left( \vartheta^{\widetilde{\varphi}_{1:N}(\widetilde{\omega})}_t \right)^\top dX_t \right)(\omega_l) \right\vert^p \right)^\frac{1}{p},
			\vspace{-0.05cm}
		\end{equation}
		where the stochastic integral
		\vspace{-0.05cm}
		\begin{equation*}
			\int_0^T \left( \vartheta^{\widetilde{\varphi}_{1:N}(\widetilde{\omega})}_t \right)^\top dX_t = \sum_{n=1}^N \sum_{j=1}^{m_n} y_{n,j} \int_0^T \frac{W(\widetilde{\varphi}_{n,j,0}(\widetilde{\omega}))_t^{n-1}}{(n-1)!} \widetilde{\varphi}_{n,j,1}(\widetilde{\omega})(t)^\top dX_t
			\vspace{-0.05cm}
		\end{equation*}
		is approximated, e.g., by an Euler-Maruyama scheme.
		\label{Alg7}
		
		\vspace{0.08cm}
		
		Update the parameters $\widetilde{\varphi}_0 \in \mathbb{R}$ and $y_{n,j} \in \mathbb{R}$, $n = 1,...,N$ and $j = 1,...,m_n$, to the ones minimizing~\eqref{EqAlg1} and return the approximating $L^p$-optimal strategy $\vartheta^{\widetilde{\varphi}_{1:N}(\widetilde{\omega})} := \big( \vartheta^{\widetilde{\varphi}_{1:N}(\widetilde{\omega})}_t \big)_{t \in [0,T]}$.
		\label{Alg8}
	\end{small}
	\caption{Learning the $L^p$-optimal strategy of a financial derivative.}
	\label{Alg}
\end{algorithm}

\algorithmstyle{plain}
\LinesNotNumbered
\setlength{\algomargin}{0em}

\begin{algorithm}[!htbp]
	\justifying
	\small
	\textit{\textbf{Fully trained neural networks:} If fully trained neural networks are used instead of random neural networks, we omit the $\widetilde{\omega}$-dependence in Algorithm~\ref{Alg1} and replace Step~{\scriptsize\textbf{\ref{Alg3}}}+{\scriptsize\textbf{\ref{Alg7}}} by the following steps:}
\end{algorithm}

\setlength{\algomargin}{1.1em}

\begin{algorithm}[!htbp]
	\begin{small}
		\nlset{3} Let $\varphi_0 \in \mathbb{R}$ and define for every $n = 1,...,N$ and $j = 1,...,m_n$ the fully trained neurons
		\begin{equation*}
			\begin{aligned}
				[0,T] \ni t \quad \mapsto \quad \varphi_{n,j,0}(t) := \rho\left( a_{n,j,0} t + b_{n,j,0} \right) \in \mathbb{R}^d, \\
				[0,T] \ni t \quad \mapsto \quad \varphi_{n,j,1}(t) := \rho\left( a_{n,j,1} t + b_{n,j,1} \right) \in \mathbb{R}^d,
			\end{aligned}
		\end{equation*}
		for some $a_{n,j,0}, a_{n,j,1}, b_{n,j,0}, b_{n,j,1} \in \mathbb{R}^d$, $n = 1,...,N$ and $j = 1,...,m_n$.
		\vspace{0.08cm}
		
		\nlset{6} Minimize \eqref{EqAlg1} over $\varphi_0 \in \mathbb{R}$ as well as $a_{n,j,0}, a_{n,j,1}, b_{n,j,0}, b_{n,j,1}, y_{n,j} \in \mathbb{R}$, $j = 1,...,m_n$ and $n = 1,...,N$. Then, $\vartheta^{\varphi_{1:N}} := \big( \vartheta^{\varphi_{1:N}}_t \big)_{t \in [0,T]}$ is the approximating $L^p$-optimal strategy from Theorem~\ref{ThmLpHedgingNN}.
	\end{small}
\end{algorithm}

\setlength{\algomargin}{0em}

\begin{algorithm}[!htbp]
	\justifying
	\small
	\textit{\textbf{Comparison:} Random neural networks outperform fully trained neural networks in terms of computational efficiency and stability. Indeed, fully trained neural networks require an iterative algorithm for training (e.g., stochastic gradient descent, see \cite{gbc16}), whereas only the linear readout of a random neural network needs to be trained (e.g., least squares method). Moreover, the corresponding optimization problem for training a random neural network is convex, which is not the case for fully trained neural networks (see \cite{schmocker23}).}
	
	\hspace{-0.41cm}{\centering \rule{\textwidth}{0.05cm}}
\end{algorithm}
	
In the following numerical experiments\footnote{\label{footnote1}All numerical experiments have been implemented in \texttt{Python} on an average laptop (Lenovo ThinkPad X13 Gen2a with Processor AMD Ryzen 7 PRO 5850U and Radeon Graphics, 1901 Mhz, 8 Cores, 16 Logical Processors). The code can be found under the following link: \url{https://github.com/psc25/ChaoticHedging}}, we use random neural networks to learn the approximating $L^p$-optimal strategy of a financial derivative within a few seconds. To this end, we generate $L = 10^5$ sample paths of the semimartingale $X$, discretized over an equidistant time grid $(t_k)_{k = 0,...,K}$ of $K = 500$ time points, which are split up into $80\%$/$20\%$ for training and testing. Then, for each order of the chaos expansion $n = 0,...,N$, we initialize random neurons of size $m_n = 50$ with non-polynomial activation function $\rho(s) = \tanh(s)$.

In order to compare our results with the quadratic hedging approach in \cite{schweizer99,pham00}, we choose $p = 2$. Then, for a given financial derivative $G \in L^2(\Omega,\mathcal{F}_T,\mathbb{P})$ and different $N = 0,...,6$, we learn the approximating $L^p$-optimal strategy $\vartheta^{\widetilde{\varphi}_{1:N}(\widetilde{\omega})}$ from Theorem~\ref{ThmLpHedgingRN}. The following numerical examples empirically demonstrate that the pseudo-optimal $L^2$-strategy of $G \in L^2(\Omega,\mathcal{F}_T,\mathbb{P})$, as defined in \cite[p.~17]{schweizer99}, can indeed be approximated by $\vartheta^{\widetilde{\varphi}_{1:N}(\widetilde{\omega})}$.

\subsection{European call option with Brownian motion}
\label{SubsecExBM}

In the first example, we consider a one-dimensional Brownian motion $X := (X_t)_{t \in [0,T]}$, which is a continuous semimartingale satisfying Assumption~\ref{AssExpInt}. Then, we approximately solve the $L^2$-hedging problem for the European call option
\begin{equation}
	\label{EqEuroCall}
	G = \max(X_T-K_{\text{str}}, 0),
\end{equation}
where $K_{\text{str}} \in \mathbb{R}$ denotes the strike price. Since the market is complete and $\mathbb{P}$ is a (local) martingale measure, the $L^2$-optimal strategy is the replication strategy under $\mathbb{P}$. Hence, by using the Clark-Ocone formula in \cite{clark70,ocone84} (see also \cite[Theorem~4.1]{dinunno08}), the true hedging strategy $\theta = (\theta_t)_{t \in [0,T]}$ of $G$ is for every $t \in [0,T]$ given by $\theta_t = \mathbb{E}[D_t G \vert \mathcal{F}_t]$, where $D_t G$ denotes the Malliavin derivative of $G$ (see \cite[Definition~1.2.1]{nualart06}). Using the chain rule of the Malliavin derivative in \cite[Proposition~1.2.4]{nualart06}, it follows for every $t \in [0,T)$ that $D_t G = \mathds{1}_{[K_{\text{str}},\infty)}(X_T)$ and thus
\begin{equation}
	\label{EqEuroCallHedgStrat}
	\theta_t = \mathbb{E}\left[\mathds{1}_{[K_{\text{str}},\infty)}(X_T) \big\vert \mathcal{F}_t \right] = \int_{K_{\text{str}}}^\infty \frac{1}{\sqrt{2 \pi (T-t)}} e^{-\frac{(x-X_t)^2}{2(T-t)}} dx = \Phi\left(\frac{X_t-K_{\text{str}}}{\sqrt{T-t}}\right),
\end{equation}
where $\Phi: \mathbb{R} \rightarrow [0,1]$ denotes the cumulative distribution function of the standard normal distribution. For the numerical experiment in Figure~\ref{FigBM}, we choose the parameters $T = 1$ and $K_{\text{str}} = -0.5$.

\begin{figure}[h!]
	\centering
	\begin{minipage}[b][][b]{0.49\textwidth}
		\centering
		\includegraphics[height = 4.9cm,trim={0.63cm 0 0 0}]{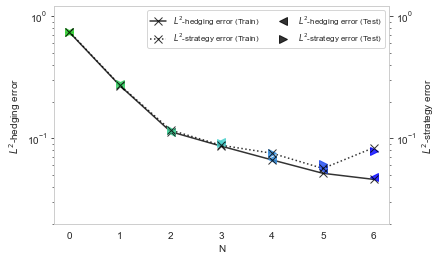}
		
		\subcaption{Learning performance}
		\vspace{0.2cm}
	\end{minipage}
	\begin{minipage}[b][][b]{0.49\textwidth}
		\centering
		\includegraphics[height = 4.9cm]{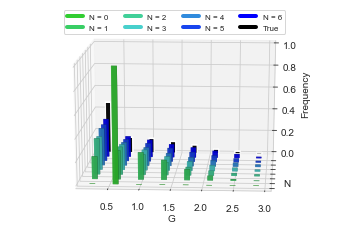}
		
		\subcaption{Payoff distribution on test set}
		\vspace{0.2cm}
	\end{minipage}
	\begin{minipage}[b][][b]{0.49\textwidth}
		\centering
		\includegraphics[height = 4.9cm]{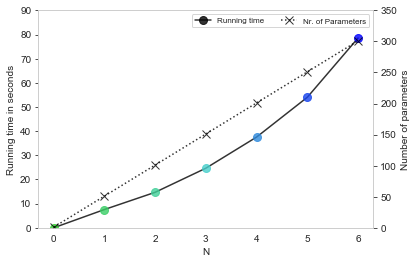}
		
		\subcaption{Running time and number of parameters}
		\vspace{0.2cm}
	\end{minipage}
	\begin{minipage}[b][][b]{0.49\textwidth}
		\centering
		\includegraphics[height = 4.9cm]{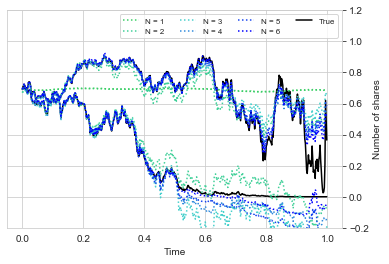}
		
		\subcaption{$\theta$ and $\vartheta^{\widetilde{\varphi}_{1:N}(\widetilde{\omega})}$ for two samples of test set}
		\vspace{0.2cm}
	\end{minipage}
	\caption{Learning the $L^2$-optimal strategy of the European call option $G$ defined in \eqref{EqEuroCall}, for $N = 0,...,6$. In (a), the $L^2(\mathbb{P})$-hedging error \eqref{EqAlg1} and the empirical $L^2(dt \otimes \mathbb{P})$-strategy error $\big( \frac{T}{KL} \sum_{l=1}^L \sum_{k=0}^K \big\vert \theta_{t_k}(\omega_l) - \vartheta^{\widetilde{\varphi}_{1:N}(\widetilde{\omega})}_{t_k}(\omega_l) \big\vert^2 \big)^{1/2}$ are displayed, where $\theta$ is defined in \eqref{EqEuroCallHedgStrat}. In (b), the distributions of $G$ (label ``True'') and $\widetilde{\varphi}_0 + \int_0^T \vartheta^{\widetilde{\varphi}_{1:N}(\widetilde{\omega})}_t dX_t$ (label ``$N = 0,...,6$'') are depicted on the test set. In (c), the running time\textsuperscript{\ref{footnote2}} and the number of estimated parameters are shown. In (d), we compare $\theta$ (label ``True'') and $\vartheta^{\widetilde{\varphi}_{1:N}(\widetilde{\omega})}$ (label ``$N = 1,...,6$'') on two samples of the test set.}
	\label{FigBM}
\end{figure}

\footnotetext[4]{\label{footnote2}The running time corresponds to the total execution time of Algorithm~\ref{Alg} including the computation of all inputs, solving the least squares problem (taking less than one second), and the calculation of the approximating $L^2$-optimal strategy.}

\subsection{Asian put option in a Vasi\v{c}ek model with stochastic volatility}
\label{SecExVasicek}

In the second example, we consider a Vasi\v{c}ek model with stochastic volatility, i.e.~the price process $X := (X_t)_{t \in [0,T]}$ follows the SDE
\begin{equation}
	\label{EqVasicek}
	\begin{cases}
		dX_t = \kappa \left( \mu - X_t \right) dt + \sqrt{\sigma_t} dB_t, & \\
		d\sigma_t \; = \alpha (\beta - \sigma_t) dt + \xi \sqrt{\sigma_t} d\widetilde{B}_t, &
	\end{cases}
	\quad t \in [0,T]
\end{equation}
with initial values $(X_0,\sigma_0) \in \mathbb{R} \times (0,\infty)$ and parameters $\kappa, \mu, \alpha, \beta, \xi > 0$. Hereby, the Brownian motions $B := (B_t)_{t \in [0,T]}$ and $\widetilde{B} := \big( \widetilde{B}_t \big)_{t \in [0,T]}$ are correlated with parameter $\rho \in [-1,1]$, i.e.~$\widetilde{B}_t = \rho B_t + \sqrt{1 - \rho^2} B^\perp_t$ for all $t \in [0,T]$ and another Brownian motion $B^\perp := \big( B^\perp_t \big)_{t \in [0,T]}$ which is independent of $B$. Since $\widetilde{X} := (X_t,\sigma_t)^\top_{t \in [0,T]}$ is a polynomial diffusion with diffusion coefficient $a: \mathbb{R} \times (0,\infty) \rightarrow \mathbb{S}^2_+$ of linear growth, Example~\ref{ExPolynDiff} ensures that $\widetilde{X}$ satisfies Assumption~\ref{AssExpInt}, which thus also applies to $X$. In this setting, we approximately solve the $L^2$-hedging problem for the Asian put option
\begin{equation}
	\label{EqAsianPut}
	G = \max\left( K_{\text{str}} - \frac{1}{T} \int_0^T X_t dt, 0 \right),
\end{equation}
where $K_{\text{str}} \in \mathbb{R}$ denotes the strike price. 

In order to compute the $L^2$-optimal strategy in the sense of quadratic hedging (see \cite{schweizer99,pham00}), we first observe that the market price of risk $t \mapsto \lambda_t := \sigma_t^{-1} \kappa (\mu - X_t)$ induces the minimal equivalent local martingale measure $\mathbb{Q} \sim \mathbb{P}$ whose density $\frac{d\mathbb{Q}}{d\mathbb{P}} := Z_T$ is the terminal value of the SDE $dZ_t = -Z_t \lambda_t \sqrt{\sigma_t} dB_t$, with $Z_0 = 1$. Hence, $X$ is by the Girsanov theorem a local martingale under $\mathbb{Q}$ satisfying the SDE $dX_t = \sqrt{\sigma_t} dB^\mathbb{Q}_t$, where $B^\mathbb{Q} := \big( B^\mathbb{Q}_t \big)_{t \in [0,T]}$ is a Brownian motion under $\mathbb{Q}$. Thus, \cite[Theorem~3.5]{schweizer99} shows that $t \mapsto V^\mathbb{Q}_t := \mathbb{E}^\mathbb{Q}[G \vert \mathcal{F}_t]$ is the value process of the pseudo-optimal $L^2$-strategy in the sense of \cite[p.~17]{schweizer99}. For the calculation of $\mathbb{E}^\mathbb{Q}[G \vert \mathcal{F}_t]$, we use the notation $I_{r,s} := \frac{1}{T} \int_r^s X_t dt$, the Fourier arguments in \cite{carr99} together with the put-call parity, and the affine transform formulas in \cite[Corollary~4.11]{kellerressel09} and \cite[Section~2.3]{duffie00} to conclude for every $t \in [0,T]$ that
\begin{equation*}
	\begin{aligned}
		V^\mathbb{Q}_t & = \mathbb{E}^\mathbb{Q}[G \vert \mathcal{F}_t] = \mathbb{E}^\mathbb{Q}\left[ \max\left( K_{\text{str}} - I_{0,t} - I_{t,T}, 0 \right) \vert \mathcal{F}_t \right] \\
		& = (K_{\text{str}} - I_{0,t}) \mathbb{E}^\mathbb{Q}\left[ \mathds{1}_{(-\infty,K_{\text{str}}-I_{0,t}]}(I_{t,T}) \big\vert \mathcal{F}_t \right] - \mathbb{E}^\mathbb{Q}\left[ I_{t,T} \mathds{1}_{(-\infty,K_{\text{str}}-I_{0,t}]}(I_{t,T}) \big\vert \mathcal{F}_t \right] \\
		& = (K_{\text{str}} - I_{0,t}) \left( \frac{1}{2} - \int_0^\infty \re\left( \tfrac{\exp\big(-\mathbf{i} u (K_{\text{str}}-I_{0,t})\big) \mathbb{E}\left[ \exp\left( \mathbf{i} u I_{t,T} \right) \big\vert \mathcal{F}_t \right]}{\mathbf{i} \pi u} \right) du \right) \\
		& \quad\quad - \left( \frac{\mathbb{E}^\mathbb{Q}[I_{t,T}]}{2} + \frac{1}{2} \int_0^\infty \re\left( \tfrac{\exp\big( -\mathbf{i} u (K-I_{0,t}) \big) \mathbb{E}\left[ \mathbf{i} I_{t,T} \exp\left( \mathbf{i} u I_{t,T} \right) \big\vert \mathcal{F}_t \right]}{\mathbf{i} \pi u} \right) du \right) \\
		& = (K_{\text{str}} - I_{0,t}) \left( \frac{1}{2} - \int_0^\infty \re\left( \tfrac{\exp\big( -\mathbf{i} u (K_{\text{str}}-I_{0,t}) \big) \exp\left( \phi^{(0)}_{T-t}(u)^\top \widetilde{X}_t \right)}{\mathbf{i} \pi u} \right) du \right) \\
		& \quad\quad - \frac{T-t}{2T} X_t - \int_0^\infty \re\left( \tfrac{\exp\big(-\mathbf{i} u (K_{\text{str}}-I_{0,t})\big) \exp\left( \phi^{(0)}_{T-t}(u)^\top \widetilde{X}_t \right) \phi^{(1)}_{T-t}(u)^\top \widetilde{X}_t}{\pi u} \right) du,
	\end{aligned}
\end{equation*}
where $\phi^{(0)}_h, \phi^{(1)}_h: \mathbb{C} \rightarrow \mathbb{C}^2$ satisfy the Riccati equations $\frac{\partial}{\partial h} \phi^{(0)}_h(u) = \big( \mathbf{i} u, \frac{1}{2} \phi^{(0)}_h(u)^\top A \phi^{(0)}_h(u) \big)^\top$ and $\frac{\partial}{\partial h} \phi^{(1)}_h(u) = \big( \mathbf{i}, \phi^{(0)}_h(u)^\top A \phi^{(1)}_h(u) \big)^\top$, respectively, with initial values $\phi^{(0)}_0(u) = \phi^{(1)}_0(u) = 0$ and matrix $A := (a_{i,j})_{i,j=1,2} \in \mathbb{R}^2$, where $a_{11} = 1$, $a_{12} = a_{21} = \xi \rho$, and $a_{22} = \xi^2$. Hence, by using Ito's formula together with $\frac{d}{dt} \langle \widetilde{X}, X \rangle_t = \sigma_t (1, \xi \rho)^\top$, we conclude that the $L^2$-optimal strategy $\theta$ is given as
\begin{equation}
	\label{EqAsianPutHedgStrat}
	\begin{aligned}
		& t \quad \mapsto \quad \theta_t = \frac{\tfrac{d}{dt} \langle V^\mathbb{Q}, X \rangle_t}{\tfrac{d}{dt} \langle X \rangle_t} \\
		& \quad\quad = - \frac{1}{\sigma_t} \int_0^\infty \re\left( \tfrac{\tfrac{d}{dt} \big\langle (K_{\text{str}}-I_{0,\cdot}) \exp\big( -\mathbf{i} u (K_{\text{str}}-I_{0,\cdot}) \big) \exp\left( \phi^{(0)}_{T-\,\cdot}(u)^\top \widetilde{X} \right), X \big\rangle_t}{\mathbf{i} \pi u} \right) du - \frac{\tfrac{d}{dt} \big\langle \frac{T-\,\cdot}{2T} X, X \big\rangle_t}{\sigma_t} \\
		& \quad\quad\quad\quad - \frac{1}{\sigma_t} \int_0^\infty \re\left( \tfrac{\tfrac{d}{dt} \big\langle \exp\big( -\mathbf{i} u (K_{\text{str}}-I_{0,\cdot}) \big) \exp\left( \phi^{(0)}_{T-\,\cdot}(u)^\top \widetilde{X} \right) \phi^{(1)}_{T-\,\cdot}(u)^\top \widetilde{X}, X \big\rangle_t}{\pi u} \right) du \\
		& \quad\quad = - (K_{\text{str}}-I_{0,t}) \int_0^\infty \re\left( \tfrac{\exp\big( -\mathbf{i} u (K_{\text{str}}-I_{0,t}) \big) \exp\left( \phi^{(0)}_{T-t}(u)^\top \widetilde{X}_t \right) \phi^{(0)}_{T-t}(u)^\top (1,\xi \rho)^\top}{\mathbf{i} \pi u} \right) du - \frac{T-t}{2T} \\
		& \quad\quad\quad\quad - \int_0^\infty \re\left( \tfrac{ \exp\big( -\mathbf{i} u (K_{\text{str}}-I_{0,t}) \big) \exp\left( \phi^{(0)}_{T-t}(u)^\top \widetilde{X}_t \right) \left( \phi^{(1)}_{T-t}(u)^\top \widetilde{X}_t \phi^{(0)}_{T-t}(u) + \phi^{(1)}_{T-t}(u) \right)^\top (1,\xi \rho)^\top}{\pi u} \right) du.
	\end{aligned}
\end{equation}
We then apply the fractional fast Fourier transform (FFT) in \cite{chourdakis05} to compute the integrals in \eqref{EqAsianPutHedgStrat}.

For the numerical experiment in Figure~\ref{FigVS}, we choose the initial values $(X_0,\sigma_0) = (100,4)$ as well as the parameters $T = 1$, $K_{\text{str}} = 101$, $\kappa = 0.5$, $\mu = 100$, $\alpha = 1$, $\beta = 5$, $\xi = 1$, and $\rho = -0.7$.

\begin{figure}[h!]
	\centering
	\begin{minipage}[b][][b]{0.49\textwidth}
		\centering
		\includegraphics[height = 4.9cm,trim={0.63cm 0 0 0}]{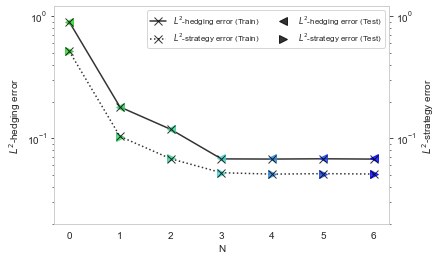}
		
		\subcaption{Learning performance}
		\vspace{0.2cm}
	\end{minipage}
	\begin{minipage}[b][][b]{0.49\textwidth}
		\centering
		\includegraphics[height = 4.9cm]{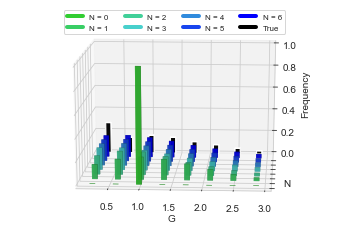}
		
		\subcaption{Payoff distribution on test set}
		\vspace{0.2cm}
	\end{minipage}
	\begin{minipage}[b][][b]{0.49\textwidth}
		\centering
		\includegraphics[height = 4.9cm]{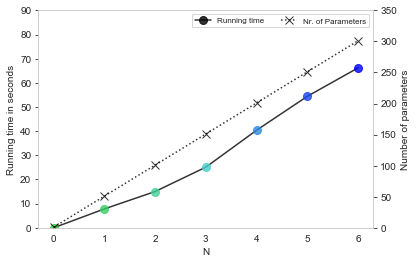}
		
		\subcaption{Running time and number of parameters}
		\vspace{0.2cm}
	\end{minipage}
	\begin{minipage}[b][][b]{0.49\textwidth}
		\centering
		\includegraphics[height = 4.9cm]{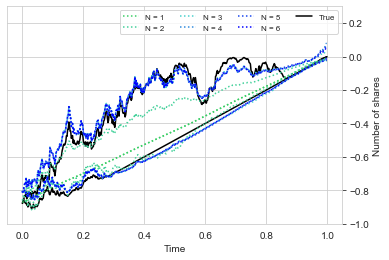}
		
		\subcaption{$\theta$ and $\vartheta^{\widetilde{\varphi}_{1:N}(\widetilde{\omega})}$ for two samples of test set}
		\vspace{0.2cm}
	\end{minipage}
	\caption{Learning the $L^2$-optimal strategy of the Asian put option $G$ defined in \eqref{EqAsianPut}, for $N = 0,...,6$. In (a), the $L^2(\mathbb{P})$-hedging error \eqref{EqAlg1} and the empirical $L^2(dt \otimes \mathbb{P})$-strategy error $\big( \frac{T}{KL} \sum_{l=1}^L \sum_{k=0}^K \big\vert \theta_{t_k}(\omega_l) - \vartheta^{\widetilde{\varphi}_{1:N}(\widetilde{\omega})}_{t_k}(\omega_l) \big\vert^2 \big)^{1/2}$ are displayed, where $\theta$ is defined in \eqref{EqAsianPutHedgStrat}. In (b), the distributions of $G$ (label ``True'') and $\widetilde{\varphi}_0 + \int_0^T \vartheta^{\widetilde{\varphi}_{1:N}(\widetilde{\omega})}_t dX_t$ (label ``$N = 0,...,6$'') are depicted on the test set. In (c), the running time\textsuperscript{\ref{footnote2}} and the number of estimated parameters are shown. In (d), we compare $\theta$ (label ``True'') and $\vartheta^{\widetilde{\varphi}_{1:N}(\widetilde{\omega})}$ (label ``$N = 1,...,6$'') on two samples of the test set.}
	\label{FigVS}
\end{figure}

\subsection{Basket option in Wishart affine stochastic correlation model}

In the third example (see Figure~\ref{FigAD}), we consider a $d$-dimensional affine model with stochastic correlation, i.e.~the stock price $X := (X_t)_{T \in [0,T]}$ follows the SDE
\begin{equation*}
	\begin{cases}
		dX_t = \kappa (\mu - X_t) dt + \sqrt{\Sigma_t} dB_t, \\
		d\Sigma_t \;\! = \left( \beta U^\top U + V \Sigma_t + \Sigma_t V^\top \right) dt + \sqrt{\Sigma_t} d\widetilde{B}_t U + U^\top d\widetilde{B}_t^\top \sqrt{\Sigma_t},
	\end{cases}
	\quad\quad t \in [0,T],
\end{equation*}
with initial values $(X_0,\Sigma_0) \in \mathbb{R}^d \times \mathbb{S}^{d-1}$ as well as parameters $\mu \in \mathbb{R}^d$, $\kappa > 0$, $\beta > d-1$, $U \in \mathbb{R}^{d \times d}$ being invertible, and $V \in \mathbb{R}^{d \times d}$. Hereby, the $d$-dimensional Brownian motion $B := (B_t)_{t \in [0,T]}$ and the ($d \times d$)-dimensional Brownian motion $\widetilde{B} := \big( \widetilde{B}_t \big)_{t \in [0,T]}$ are correlated with vector $\rho \in [-1,1]^d$ satisfying $\Vert \rho \Vert \leq 1$, i.e.~$B_t = \widetilde{B}_t \rho + \sqrt{1 - \rho^\top \rho} B^\perp_t$ for all $t \in [0,T]$ and a $d$-dimensional Brownian motion $B^\perp := \big(B^\perp_t\big)_{t \in [0,T]}$ which is independent of $\widetilde{B}$. Since $\widetilde{X} := \big( X_t, \vech(\Sigma_t) \big)_{t \in [0,T]}$ is an $\mathbb{R}^{d+d^2}$-valued polynomial diffusion with diffusion coefficient $a: \mathbb{R}^{d+d^2} \rightarrow \mathbb{S}^{d+d^2}_+$ of linear growth, where $\vech(\Sigma_t) := (\Sigma^{1,1}_t,\Sigma^{1,2}_t,...,\Sigma^{d,d}_t)_{t \in [0,T]}^\top \in \mathbb{R}^{d^2}$, Example~\ref{ExPolynDiff} ensures that $\widetilde{X}$ satisfies Assumption~\ref{AssExpInt}, which thus also applies to $X$. In this setting, we approximately solve the $L^2$-hedging problem for the Basket option
\begin{equation}
	\label{EqBasketOption}
	G = \max\left( K_{\text{str}} - w^\top X_T, 0 \right),
\end{equation}
where $K_{\text{str}} \in \mathbb{R}$ denotes the strike price and $w \in \mathbb{R}^d$ is a fixed vector.

In order to compute the optimal strategy in the sense of quadratic hedging (see \cite{schweizer99,pham00}), we follow the arguments of the previous section. The market price of risk $t \mapsto \lambda_t := \Sigma_t^{-1} \kappa (\mu - X_t)$ induces the minimal equivalent local martingale measure $\mathbb{Q} \sim \mathbb{P}$ whose density $\frac{d\mathbb{Q}}{d\mathbb{P}} := Z_T$ is the terminal value of the SDE $dZ_t = -Z_t \lambda_t^\top \sqrt{\Sigma_t} dB_t$, with $Z_0 = 1$. Hence, the process $X$ is a local martingale under $\mathbb{Q}$ satisfying $dX_t = \sqrt{\Sigma_t} dB^\mathbb{Q}_t$, where $B^\mathbb{Q} := \big( B^\mathbb{Q}_t \big)_{t \in [0,T]}$ is a $d$-dimensional Brownian motion under $\mathbb{Q}$. Thus, \cite[Theorem~3.5]{schweizer99} shows that $t \mapsto V^\mathbb{Q}_t := \mathbb{E}^\mathbb{Q}[G \vert \mathcal{F}_t]$ is the value process of the pseudo-optimal $L^2$-strategy in the sense of \cite[p.~17]{schweizer99}. For the calculation of the conditional expectation $\mathbb{E}^\mathbb{Q}[G \vert \mathcal{F}_t]$, we use the Fourier arguments in \cite{carr99} to conclude for every $t \in [0,T]$ that
\begin{equation*}
	\begin{aligned}
		V^\mathbb{Q}_t & = \mathbb{E}^\mathbb{Q}[G \vert \mathcal{F}_t] = \mathbb{E}^\mathbb{Q}\left[ \max\left( K_{\text{str}} - w^\top X_T, 0 \right) \Big\vert \mathcal{F}_t \right] \\
		& = K_{\text{str}} \mathbb{E}^\mathbb{Q}\left[ \mathds{1}_{\left\lbrace w^\top X_T \leq K_{\text{str}} \right\rbrace} \Big\vert \mathcal{F}_t \right] - \mathbb{E}^\mathbb{Q}\left[ w^\top X_T \mathds{1}_{\left\lbrace w^\top X_T \leq K_{\text{str}} \right\rbrace} \Big\vert \mathcal{F}_t \right] \\
		& = K_{\text{str}} \left( \frac{1}{2} - \int_0^\infty \re\left( \tfrac{e^{-\mathbf{i} u K} \mathbb{E}^\mathbb{Q}\left[ \exp\left( \mathbf{i} u w^\top X_T \right) \big\vert \mathcal{F}_t \right]}{\mathbf{i} \pi u} \right) du \right) \\
		& \quad\quad - \left( \frac{\mathbb{E}^\mathbb{Q}\left[ w^\top X_T \big\vert \mathcal{F}_t \right]}{2} + \int_0^\infty \re\left( \tfrac{e^{-\mathbf{i} u K} \mathbb{E}^\mathbb{Q}\left[ \mathbf{i} w^\top X_T \exp\left( \mathbf{i} u w^\top X_T \right) \big\vert \mathcal{F}_t \right]}{\pi u} \right) du \right) \\
		& = K_{\text{str}} \left( \frac{1}{2} - \int_0^\infty \re\left( \tfrac{e^{-\mathbf{i} u K} \Phi^{(0)}_{T-t,w}(u;X_t,\Sigma_t)}{\mathbf{i} \pi u} \right) du \right) \\
		& \quad\quad -\frac{w^\top X_t}{2} - \int_0^\infty \re\left( \tfrac{e^{-\mathbf{i} u K} \Phi^{(1)}_{T-t,w}(u;X_t,\Sigma_t)}{\pi u} \right) du.
	\end{aligned}
\end{equation*}
Hereby, $\mathbb{R} \ni u \mapsto \Phi^{(0)}_{h,w}(u;x,s) := \mathbb{E}^\mathbb{Q}\big[ \exp\left( \mathbf{i} u w^\top X_{t+h} \right) \big\vert (X_t,\Sigma_t) = (x,s) \big] \in \mathbb{C}$ denotes the conditional characteristic function of $w^\top X_{t+h}$ given that $(X_t,\Sigma_t) = (x,s) \in \mathbb{R}^d \times \mathbb{S}^d_+$, and where $\mathbb{R} \ni u \mapsto \Phi^{(1)}_{h,w}(u;x,s) := \mathbb{E}^\mathbb{Q}\big[ \mathbf{i} w^\top X_{t+h} \exp\left( \mathbf{i} u w^\top X_{t+h} \right) \big\vert (X_t,\Sigma_t) = (x,s) \big] \in \mathbb{C}$ is its derivative, which can be both computed via the affine transform formula in \cite{dafonseca07} (see also the code for more details). Hence, by using Ito's formula, the notations $\nabla_x \Phi^{(r)}_{h,w}(u,x,s) := \big( \frac{\partial}{\partial x_i} \Phi^{(r)}_{h,w}(u,x,s) \big)_{i=1,...,d}^\top$, $\nabla_s \Phi^{(r)}_{h,w}(u,x,s) := \big( \partial^s_{i,j} \Phi^{(r)}_{h,w}(u,x,s) \big)_{i,j=1,...,d}$, and $\partial^s_{i,j} \Phi^{(r)}_{h,w}(u,x,s) := \frac{\partial}{\partial s_{i,j}} \Phi^{(r)}_{h,w}(u,x,s)$, for $r \in \lbrace 0,1 \rbrace$, and that $\frac{d}{dt} \langle X \rangle_t = \Sigma_t$ as well as $\frac{d}{dt} \langle \Sigma^{i,j}, X^k \rangle_t = \Sigma^{k,i}_t (U^\top \rho)_j + \Sigma^{k,j}_t (U^\top \rho)_i$, we conclude that the pseudo-optimal $L^2$-strategy $\theta$ is for every $t \in [0,T]$ given as
\begin{equation}
	\label{EqBasketHedgStrat}
	\begin{aligned}
		& t \quad \mapsto \quad \theta_t = \left( \frac{d}{dt} \langle X \rangle_t \right)^{-1} \frac{d}{dt} \langle V^\mathbb{Q}, X \rangle_t \\
		& \quad = - \Sigma_t^{-1} K_{\text{str}} \int_0^\infty \re\left( \tfrac{e^{-\mathbf{i} u K} \frac{d}{dt} \big\langle \Phi^{(0)}_{T-\cdot,w}(u;X,\Sigma), X \big\rangle_t}{\mathbf{i} \pi u} \right) du \\
		& \quad\quad - \Sigma_t^{-1} \frac{\frac{d}{dt} \langle w^\top X, X \rangle_t}{2} - \Sigma_t^{-1} \int_0^\infty \re\left( \tfrac{e^{-\mathbf{i} u K} \frac{d}{dt} \big\langle \Phi^{(1)}_{T-\cdot,w}(u;X,\Sigma), X \big\rangle_t}{\pi u} \right) du \\
		& \quad = - \Sigma_t^{-1} K_{\text{str}} \int_0^\infty \re\left( \tfrac{e^{-\mathbf{i} u K} \left( \frac{d}{dt} \langle X \rangle_t \nabla_x \Phi^{(0)}_{T-t,w}(u;X_t,\Sigma_t) + \sum_{i,j=1}^d \partial^s_{i,j} \Phi^{(0)}_{T-t,w}(u;X_t,\Sigma_t) \frac{d}{dt} \langle \Sigma^{i,j}, X \rangle_t \right)}{\mathbf{i} \pi u} \right) du \\
		& \quad\quad -\frac{w}{2} - \Sigma_t^{-1} \int_0^\infty \re\left( \tfrac{e^{-\mathbf{i} u K} \left( \frac{d}{dt} \langle X \rangle_t \nabla_x \Phi^{(1)}_{T-t,w}(u;X_t,\Sigma_t) + \sum_{i,j=1}^d \partial^s_{i,j} \Phi^{(1)}_{T-t,w}(u;X_t,\Sigma_t) \frac{d}{dt} \langle \Sigma^{i,j}, X \rangle_t \right)}{\pi u} \right) du \\
		& \quad = - K_{\text{str}} \int_0^\infty \re\left( \tfrac{e^{-\mathbf{i} u K} \left( \nabla_x \Phi^{(0)}_{T-t,w}(u;X_t,\Sigma_t) + \left( \nabla_s \Phi^{(0)}_{T-t,w}(u;X_t,\Sigma_t) + \nabla_s \Phi^{(0)}_{T-t,w}(u;X_t,\Sigma_t)^\top \right) U^\top \rho \right)}{\mathbf{i} \pi u} \right) du \\
		& \quad\quad -\frac{w}{2} - \int_0^\infty \re\left( \tfrac{e^{-\mathbf{i} u K} \left( \nabla_x \Phi^{(1)}_{T-t,w}(u;X_t,\Sigma_t) + \left( \nabla_s \Phi^{(1)}_{T-t,w}(u;X_t,\Sigma_t) + \nabla_s \Phi^{(1)}_{T-t,w}(u;X_t,\Sigma_t)^\top \right) U^\top \rho \right)}{\pi u} \right) du.
	\end{aligned}
\end{equation}
We then apply the fractional fast Fourier transform (FFT) in \cite{chourdakis05} to compute the integrals in \eqref{EqBasketHedgStrat}.

For the numerical experiment in Figure~\ref{FigAD}, we choose $d = 2$, $X_0 = \mu = (10,10)^\top$, $T = 1$, $K_{\text{str}} = 21$, $w := (1,1)^\top$, $\kappa = 1$, and $\beta = 3$, whereas the other parameters are randomly initialized.

\begin{figure}[h!]
	\centering
	\begin{minipage}[b][][b]{0.49\textwidth}
		\centering
		\includegraphics[height = 4.9cm,trim={0.63cm 0 0 0}]{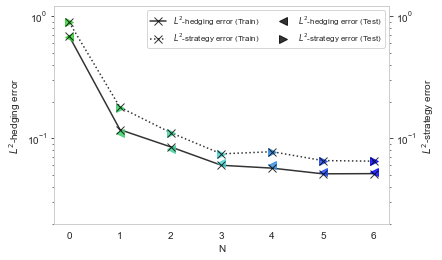}
		
		{\small {\bf (a)} Learning performance}
		\vspace{0.2cm}
	\end{minipage}
	\begin{minipage}[b][][b]{0.49\textwidth}
		\centering
		\includegraphics[height = 4.9cm]{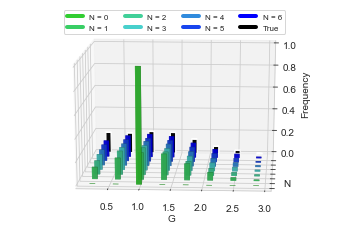}
		
		{\small {\bf (b)} Payoff distribution on test set}
		\vspace{0.2cm}
	\end{minipage}
	\begin{minipage}[b][][b]{0.49\textwidth}
		\centering
		\includegraphics[height = 4.9cm]{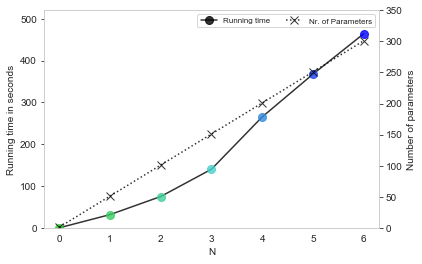}
		
		{\small {\bf (c)} Running time and number of parameters}
		\vspace{0.2cm}
	\end{minipage}
	\begin{minipage}[b][][b]{0.49\textwidth}
		\centering
		\includegraphics[height = 4.9cm]{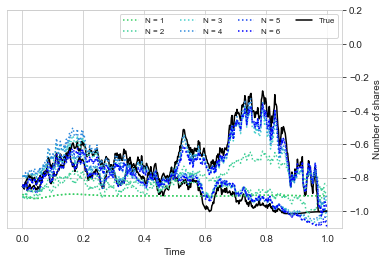}
		
		{\small {\bf (d)} $\theta^1$ and $\vartheta^{\widetilde{\varphi}_{1:N}(\widetilde{\omega}),1}$ for two samples of test set}
		\vspace{0.2cm}
	\end{minipage}
	\caption{Learning the $L^2$-optimal strategy of the Basket option $G$ defined in \eqref{EqBasketOption}, for $N = 0,...,6$. In (a), the the empirical $L^2(\mathbb{P})$-hedging error \eqref{EqAlg1} and the empirical $L^2(dt \otimes \mathbb{P})$-strategy error $\big( \frac{T}{KL} \sum_{l=1}^L \sum_{k=0}^K \big\Vert \theta_{t_k}(\omega_l) - \vartheta^{\widetilde{\varphi}_{1:N}(\widetilde{\omega})}_{t_k}(\omega_l) \big\Vert^2 \big)^{1/2}$ are displayed, where $\theta$ is defined in \eqref{EqBasketHedgStrat}. In (b), the distributions of $G$ (label ``True'') and $\widetilde{\varphi}_0 + \int_0^T \big( \vartheta^{\widetilde{\varphi}_{1:N}(\widetilde{\omega})}_t \big)^\top dX_t$ (label ``$N = 0,...,6$'') are depicted on the test set. In (c), the running time\textsuperscript{\ref{footnote2}} and the number of estimated parameters are shown. In (d), we compare $\theta^1$ (label ``True'') and $\vartheta^{\widetilde{\varphi}_{1:N}(\widetilde{\omega}),1}$ (label ``$N = 1,...,6$'') on two samples of the test set.}
	\label{FigAD}
\end{figure}

\section{Proofs}
\label{SecProofs}

\subsection{Proof of results in Section~\ref{SecExpIntSemimg}}
\label{SecProofsExpIntSemimg}

\begin{proof}[Proof of Example~\ref{ExPolynDiff}]
	We adapt the proof of \cite[Theorem~3.3]{filipovic16} and assume that $X := (X_t)_{t \in [0,T]}$ is an $E$-valued strong solution of the SDE \eqref{EqExPolynDiff0a}, where $E \subseteq \mathbb{R}^d$, where $E \ni x \mapsto b(x) := \beta_0 + \beta_1 x \in \mathbb{R}^d$ is the drift coefficient (with $\beta_0 \in \mathbb{R}^d$ and $\beta_1 \in \mathbb{R}^{d \times d}$), where $\sigma: E \rightarrow \mathbb{R}^{d \times d}$ is the volatility, and where $E \ni x \mapsto a(x) := \sigma(x) \sigma(x)^\top \in \mathbb{S}^d_+$ is the diffusion coefficient. Then, by using variation of constants, we conclude that $X_t = K_t + L_t$ for all $t \in [0,T]$, where $K_t := e^{\beta_1 t} X_0 + \int_0^t e^{\beta_1 (t-s)} \beta_0 ds$ and $L_t := e^{\beta_1 (T-t)} Y_t$, with $Y_t := \int_0^t \sigma^Y_s dB_s$ and $\sigma^Y_s := e^{\beta_1 (T-s)} \sigma\big( K_s + e^{-\beta_1(T-s)} Y_s \big)$. Moreover, by using the constant $C_2 := \sup_{t \in [0,T]} \big\Vert e^{\beta_1 (T-t)} \big\Vert_F > 0$ together with $\Vert \sigma(x) \Vert_F^2 = \trace\big( \sigma(x) \sigma(x)^\top \big) = \trace(a(x)) \leq \sqrt{d} \, \Vert a(x) \Vert_F$ for any $x \in E$, the linear growth condition in \eqref{EqExPolynDiff0b} (with constant $C_1 > 0$), and the constant $C_3 := C_1 C_2^2 \sqrt{d} \sup_{t \in [0,T]} \big( 1 + \Vert K_t \Vert + \big\Vert e^{-\beta_1(T-t)} \big\Vert_F \big) > 0$, it follows for every $t \in [0,T]$ that
	\begin{equation}
		\label{EqExPolynDiffProof1}
		\begin{aligned}
			\left\Vert \sigma^Y_t \right\Vert_F^2 & = \left\Vert e^{\beta_1 (T-t)} \sigma\left( K_t + e^{-\beta_1(T-t)} Y_t \right) \right\Vert_F^2 \leq \left\Vert e^{\beta_1 (T-t)} \right\Vert_F^2 \left\Vert \sigma\left( K_t + e^{-\beta_1(T-t)} Y_t \right) \right\Vert_F^2 \\
			& \leq C_2^2 \sqrt{d} \left\Vert a\left( K_t + e^{-\beta_1(T-t)} Y_t \right) \right\Vert_F \leq C_1 C_2^2 \sqrt{d} \left( 1 + \left\Vert K_t + e^{-\beta_1(T-t)} Y_t \right\Vert \right) \\
			& \leq C_1 C_2^2 \sqrt{d} \left( 1 + \Vert K_t \Vert + \left\Vert e^{-\beta_1(T-t)} \right\Vert_F \Vert Y_t \Vert \right) \leq C_3 \left( 1 + \Vert Y_t \Vert \right).
		\end{aligned}
	\end{equation}
	Now, let $f: \mathbb{R}^d \rightarrow \mathbb{R}$ be a smooth function such that $\sqrt{\Vert y \Vert} \leq f(y)$ for all $y \in \mathbb{R}^d$ with $\Vert y \Vert \leq 1$ as well as $f(y) = \sqrt{1+\Vert y \Vert}$ for all $y \in \mathbb{R}^d$ with $\Vert y \Vert > 1$. Then, the first and second order partial derivatives $\nabla f := \big( \frac{\partial f}{\partial y_i} \big)_{i=1,...,d}^\top: \mathbb{R}^d \rightarrow \mathbb{R}^d$ and $\nabla^2 f := \big( \frac{\partial^2 f}{\partial y_i \partial y_j} \big)_{i,j=1,...,d}: \mathbb{R}^d \rightarrow \mathbb{R}^{d \times d}$ can be computed on $\lbrace y \in \mathbb{R}^d: \Vert y \Vert > 1 \rbrace$ (see \cite[p.~954]{filipovic16}). Moreover, by applying Ito's formula on the process $t \mapsto Z_t := f(Y_t)$, we conclude that $dZ_t = \mu^Z_t dt + \sigma^Z_t dB_t$, $\mathbb{P}$-a.s., for all $t \in [0,T]$, where $\mu^Z_t := \frac{1}{2} \trace\big( \nabla^2 f(Y_t) \sigma^Y_t (\sigma^Y_t)^\top \big)$ and $\sigma^Z_t := \nabla f(Y_t)^\top \sigma^Y_t dB_t$. Hence, by using that $\nabla f: \mathbb{R}^d \rightarrow \mathbb{R}^d$ and $\nabla^2 f: \mathbb{R}^d \rightarrow \mathbb{R}^{d \times d}$ are uniformly bounded on $\mathbb{R}^d$ together with \eqref{EqExPolynDiffProof1}, there exists some $m, \rho > 0$ such that for every $t \in [0,T]$ it holds that $\mu^Z_t \leq m$ and $\big\Vert \sigma^Z_t \big\Vert \leq \rho$. Thus, for $\delta \in (0,1/(2\rho^2 T))$, we can apply \cite[Theorem~1.3]{hajek85} with the convex and non-decreasing function $\mathbb{R} \ni s \mapsto e^{\delta s^2} \in \mathbb{R}$ to conclude for every $t \in [0,T]$ and a corresponding normal distributed random variable $V_t \sim \mathcal{N}(f(0)+mt,\rho^2 t)$ that
	\begin{equation}
		\label{EqExPolynDiffProof2}
		\mathbb{E}\big[ e^{\delta Z_t^2} \big] \leq \mathbb{E}\big[ e^{\delta V_t^2} \big] \leq \mathbb{E}\big[ e^{\delta V_T^2} \big] < \infty.
	\end{equation}
	Next, we fix some $\varepsilon > 0$ with $C_2 \varepsilon \leq \delta$ and $C_1 \varepsilon T \leq \delta$. Then, by using the triangle inequality, the constants $C_4 := \sup_{t \in [0,T]} \Vert K_t \Vert > 0$ and $C_2 > 0$ (defined above), that $\Vert Y_t \Vert \leq f(Y_t)^2 = Z_t^2$ for any $t \in [0,T]$, and the inequality~\eqref{EqExPolynDiffProof2}, it follows for every $t \in [0,T]$ that
	\begin{equation}
		\label{EqExPolynDiffProof3}
		\mathbb{E}\left[ e^{\varepsilon \Vert X_t \Vert} \right] \leq \mathbb{E}\left[ e^{\varepsilon (\Vert K_t \Vert + \Vert L_t \Vert)} \right] \leq e^{C_4 \varepsilon} \mathbb{E}\left[ e^{C_2 \varepsilon \Vert Y_t \Vert} \right] \leq e^{C_4 \varepsilon} \mathbb{E}\left[ e^{\delta Z_t^2} \right] \leq e^{C_4 \varepsilon} \mathbb{E}\left[ e^{\delta V_T^2} \right] < \infty,
	\end{equation}
	which shows that $X$ satisfies Assumption~\ref{AssExpInt}~\ref{AssExpInt1}. Moreover, by using that the finite variation part $A$ satisfies $A_t = \int_0^t b(X_s) ds$ for any $t \in [0,T]$ together with Jensen's inequality and Fubini's theorem, Minkowski's inequality together with Jensen's inequality and $n := \lceil \gamma \rceil \in \mathbb{N}_0$, that $\frac{x^n}{n!} \leq \exp(x)$ for any $n \in \mathbb{N}_0$ and $x \in [0,\infty)$, and the inequality~\eqref{EqExPolynDiffProof3}, it holds for every $\gamma \in [1,\infty)$ that
	\begin{equation*}
		\begin{aligned}
			& \mathbb{E}\left[ \left( \int_0^T \Vert dA_t \Vert \right)^\gamma \right]^\frac{1}{\gamma} \leq T^{1-\frac{1}{\gamma}} \left( \int_0^T \mathbb{E}\left[ \Vert b(X_t) \Vert^\gamma \right] dt \right)^\frac{1}{\gamma} \leq T \sup_{t \in [0,T]} \mathbb{E}\left[ (\Vert \beta_0 \Vert + \Vert \beta_1 \Vert_F \Vert X_t \Vert)^\gamma \right]^\frac{1}{\gamma} \\
			& \quad\quad \leq \Vert \beta_0 \Vert T + \Vert \beta_1 \Vert_F T \sup_{t \in [0,T]} \mathbb{E}\left[ \Vert X_t \Vert^{n} \right]^\frac{1}{n} = \Vert \beta_0 \Vert T + \Vert \beta_1 \Vert_F T \frac{(n!)^\frac{1}{n}}{\varepsilon} \sup_{t \in [0,T]} \mathbb{E}\left[ \frac{(\varepsilon \Vert X_t \Vert)^{n}}{n!} \right]^\frac{1}{n} \\
			& \quad\quad \leq \Vert \beta_0 \Vert T + \Vert \beta_1 \Vert_F T \frac{(n!)^\frac{1}{n}}{\varepsilon} \sup_{t \in [0,T]} \mathbb{E}\left[ e^{\varepsilon \Vert X_t \Vert} \right]^\frac{1}{n} \leq \Vert \beta_0 \Vert T + \Vert \beta_1 \Vert_F T \frac{(n!)^\frac{1}{n}}{\varepsilon} e^{C_4 \frac{\varepsilon}{n}} \mathbb{E}\left[ e^{\delta V_T^2} \right]^\frac{1}{n} < \infty,
		\end{aligned}
	\end{equation*}
	which shows that $X$ satisfies Assumption~\ref{AssExpInt}~\ref{AssExpInt2}. In addition, by using that the local martingale part $M$ satisfies $M_t = \int_0^t \sigma(X_s) dB_s$ for any $t \in [0,T]$ together with Jensen's inequality, the linear growth condition in \eqref{EqExPolynDiff0b} (with constant $C_1 > 0$), Minkowski's inequality together with Jensen's inequality and $n := \lceil \gamma \rceil \in \mathbb{N}_0$, and the inequality~\eqref{EqExPolynDiffProof3}, we conclude for every $\gamma \in [1,\infty)$ that
	\begin{equation*}
		\begin{aligned}
			\mathbb{E}\left[ \Vert \langle M \rangle_T \Vert_F^\gamma \right]^\frac{1}{c} & \leq T^{1-\frac{1}{\gamma}} \left( \int_0^T \mathbb{E}\left[ \Vert a(X_t) \Vert_F^\gamma \right] dt \right)^\frac{1}{\gamma} \leq C_1 T \sup_{t \in [0,T]} \mathbb{E}\left[ \left( 1 + \Vert X_t \Vert \right)^\gamma \right]^\frac{1}{\gamma} \\
			& \leq C_1 T + C_1 T \sup_{t \in [0,T]} \mathbb{E}\left[ \Vert X_t \Vert^n \right]^\frac{1}{n} = C_1 T + C_1 T \frac{(n!)^\frac{1}{n}}{\varepsilon} \sup_{t \in [0,T]} \mathbb{E}\left[ \frac{(\varepsilon \Vert X_t \Vert)^n}{n!} \right]^\frac{1}{n} \\
			& \leq C_1 T + C_1 T \frac{(n!)^\frac{1}{n}}{\varepsilon} \sup_{t \in [0,T]} \mathbb{E}\left[ e^{\varepsilon \Vert X_t \Vert} \right] \leq C_1 T + C_1 T \frac{(n!)^\frac{1}{n}}{\varepsilon} e^{C_4 \frac{\varepsilon}{n}} \mathbb{E}\left[ e^{\delta V_T^2} \right]^\frac{1}{n} < \infty,
		\end{aligned}
	\end{equation*}
	which shows that $X$ satisfies Assumption~\ref{AssExpInt}~\ref{AssExpInt3}.
\end{proof}

\subsection{Proof of results in Section~\ref{SecStochIntItInt}.}

\subsubsection{Proof of results in Section~\ref{SecStochInt}}
\label{SecProofsStochInt}

\begin{proof}[Proof of Lemma~\ref{LemmaLpNorm}]
	For \ref{LemmaLpNorm1}, we first use that the space of c\`adl\`ag functions forms a vector space to conclude that we only need to show that $\Vert \cdot \Vert_{L^p(X)}$ is a norm. By definition, $\Vert \cdot \Vert_{L^p(X)}$ is absolutely homogeneous. Moreover, $\Vert \cdot \Vert_{L^p(X)}$ is positive definite as $L^p(X)$ is defined in terms of equivalence classes. In order to show the triangle inequality, let $f,g \in L^p(X)$. Then, by using the Kunita-Watanabe inequality to the continuous local martingales $t \mapsto M^f_t := \int_0^t f(s)^\top dM_s$ and $t \mapsto M^g_t := \int_0^t g(s)^\top dM_s$, and the identity $x + 2\sqrt{xy} + y = (\sqrt{x}+\sqrt{y})^2$ for any $x,y \geq 0$, it holds $\mathbb{P}$-a.s.~that
	\begin{equation*}
		\begin{aligned}
			& \int_0^T (f(t)+g(t))^\top d\langle M \rangle_t (f(t)+g(t)) \\
			& \quad\quad = \int_0^T f(t)^\top d\langle M \rangle_t f(t) + 2 \underbrace{\int_0^T f(t)^\top d\langle M \rangle_t g(t)}_{= \int_0^T d\langle M^f, M^g \rangle_t} + \int_0^T g(t)^\top d\langle M \rangle_t g(t) \\
			& \quad\quad \leq \big\langle M^f \big\rangle_T + 2 \big\langle M^f \big\rangle_T^\frac{1}{2} \big\langle M^g \big\rangle_T^\frac{1}{2} + \big\langle M^g \big\rangle_T = \left( \big\langle M^f \big\rangle_T^\frac{1}{2} + \big\langle M^g \big\rangle_T^\frac{1}{2} \right)^2.
		\end{aligned}
	\end{equation*}
	Finally, by using the triangle inequality and Minkowski's inequality, we conclude that
	\begin{equation*}
		\begin{aligned}
			& \Vert f + g \Vert_{L^p(X)} = \mathbb{E}\left[ \left( \int_0^T \left\vert (f(t)+g(t))^\top dA_t \right\vert \right)^p \right]^\frac{1}{p} + \mathbb{E}\left[ \left( \int_0^T (f(t)+g(t))^\top d\langle M \rangle_t (f(t)+g(t)) \right)^\frac{p}{2} \right]^\frac{1}{p} \\
			& \quad \leq \mathbb{E}\left[ \left( \int_0^T \left\vert f(t)^\top dA_t \right\vert + \int_0^T \left\vert g(t)^\top dA_t \right\vert \right)^p \right]^\frac{1}{p} + \mathbb{E}\left[ \left( \big\langle M^f \big\rangle_T^\frac{1}{2} + \big\langle M^g \big\rangle_T^\frac{1}{2} \right)^p \right]^\frac{1}{p} \\
			& \quad \leq \mathbb{E}\left[ \left( \int_0^T \left\vert f(t)^\top dA_t \right\vert \right)^p \right]^\frac{1}{p} + \mathbb{E}\left[ \left( \int_0^T \left\vert g(t)^\top dA_t \right\vert \right)^p \right]^\frac{1}{p} + \mathbb{E}\left[ \big\langle M^f \big\rangle_T^\frac{p}{2} \right]^\frac{1}{p} + \mathbb{E}\left[ \big\langle M^g \big\rangle_T^\frac{p}{2} \right]^\frac{1}{p} \\
			& \quad = \Vert f \Vert_{L^p(X)} + \Vert g \Vert_{L^p(X)},
		\end{aligned}
	\end{equation*}
	which shows \ref{LemmaLpNorm1}.
	
	For \ref{LemmaLpNorm2}, we first recall that every fixed $g := (g_1,...,g_d)^\top \in L^p(X)$ is $\mathcal{B}([0,T])/\mathcal{B}(\mathbb{R}^d)$-measurable and bounded (see Remark~\ref{RemBoundedMbl}), i.e.~$\Vert g \Vert_\infty := \sup_{t \in [0,T]} \Vert g(t) \Vert < \infty$. Moreover, by using that $X$ satisfies Assumption~\ref{AssExpInt}, we can define the finite constant
	\begin{equation}
		\label{EqLemmaLpIneqProof1}
		C_{p,T,X} := \mathbb{E}\left[ \left( \int_0^T \Vert dA_t \Vert \right)^p \right]^\frac{1}{p} + \sqrt{3d} \mathbb{E}\left[ \left\Vert \langle M \rangle_T \right\Vert_F^\frac{p}{2} \right]^\frac{1}{p} < \infty.
	\end{equation}
	In addition, by using the polarization identity $\langle M^i, M^j \rangle_t = \frac{1}{2} \big( \langle M^i + M^j \rangle_t - \langle M^i \rangle_t - \langle M^j \rangle_t \big)$ and the inequality $\sum_{i,j=1}^d c_{i,j} \leq d \big( \sum_{i,j=1}^d c_{i,j}^2 \big)^{1/2}$ for any $c_{i,j} \geq 0$, we conclude that
	\begin{equation}
		\label{EqLemmaLpIneqProof2}
		\begin{aligned}
			& \int_0^T g(t)^\top d\langle M \rangle_t g(t) = \frac{1}{2} \left\vert \sum_{i,j=1}^d \int_0^T g_i(t) \left( d\langle M^i + M^j \rangle_t - d\langle M^i \rangle_t - d\langle M^j \rangle_t \right) g_j(t) \right\vert \\
			& \leq \frac{1}{2} \sum_{i,j=1}^d \left( \left\vert \int_0^T g_i(t) d\langle M^i + M^j \rangle_t g_j(t) \right\vert + \left\vert \int_0^T g_i(t) d\langle M^i \rangle_t g_j(t) \right\vert + \left\vert \int_0^T g_i(t) d\langle M^j \rangle_t g_j(t) \right\vert \right) \\
			& \leq \frac{\Vert g \Vert_\infty^2}{2} \sum_{i,j=1}^d \left( \langle M^i + M^j \rangle_T + \langle M^i \rangle_T + \langle M^j \rangle_T \right) \leq 3 \Vert g \Vert_\infty^2 \sum_{i,j=1}^d \vert \langle M^i, M^j \rangle_T \vert \leq 3 d \Vert g \Vert_\infty^2 \Vert \langle M \rangle_T \Vert_F.
		\end{aligned}
	\end{equation}
	Hence, by using the Cauchy-Schwarz inequality and the inequalities~\eqref{EqLemmaLpIneqProof1}+\eqref{EqLemmaLpIneqProof2}, we obtain
	\begin{equation*}
		\begin{aligned}
			\Vert g \Vert_{L^p(X)} & = \mathbb{E}\left[ \left( \int_0^T \left\vert g(t)^\top dA_t \right\vert \right)^p \right]^\frac{1}{p} + \mathbb{E}\left[ \left( \int_0^T g(t)^\top d\langle M \rangle_t g(t) \right)^\frac{p}{2} \right]^\frac{1}{p} \\
			& \leq \Vert g \Vert_\infty \mathbb{E}\left[ \left( \int_0^T \Vert dA_t \Vert \right)^p \right]^\frac{1}{p} + \sqrt{3d} \Vert g \Vert_\infty \mathbb{E}\left[ \left\Vert \langle M \rangle_T \right\Vert_F^\frac{p}{2} \right]^\frac{1}{p} \leq C_{p,T,X} \Vert g \Vert_\infty < \infty,
		\end{aligned}
	\end{equation*}
	which completes the proof.
\end{proof}

\begin{proof}[Proof of Lemma~\ref{LemmaBDG}]
	Fix some $g \in L^p(X)$. Then, by using Minkowski's inequality and the Burkholder-Davis-Gundy inequality (with constant $C_p > 0$), it follows that
	\begin{equation*}
		\begin{aligned}
			\mathbb{E}\left[ \sup_{t \in [0,T]} \left\vert W(g)_t \right\vert^p \right]^\frac{1}{p} & \leq \mathbb{E}\left[ \sup_{t \in [0,T]} \left\vert \int_0^t g(s)^\top dA_s \right\vert^p \right]^\frac{1}{p} + \mathbb{E}\left[ \sup_{t \in [0,T]} \left\vert \int_0^t g(s)^\top dM_s \right\vert^p \right]^\frac{1}{p} \\
			& \leq \mathbb{E}\left[ \left( \int_0^T \left\vert g(s)^\top dA_s \right\vert \right)^p \right]^\frac{1}{p} + C_p \mathbb{E}\left[ \left( \int_0^T g(s)^\top d\langle M \rangle_s g(s) \right)^\frac{p}{2} \right]^\frac{1}{p} \\
			& \leq \max\left( 1, C_p \right) \Vert g \Vert_{L^p(X)},
		\end{aligned}
	\end{equation*}
	which completes the proof.
\end{proof}

\subsubsection{Proof of results in Section~\ref{SecItInt}}
\label{SecProofsItInt}

\begin{proof}[Proof of Lemma~\ref{LemmaLpXnNorm}]
	First, we consider two Banach spaces $(Y,\Vert \cdot \Vert_Y)$ and $(Z,\Vert \cdot \Vert_Z)$. Then, by using \cite[Proposition~2.1]{ryan02}, the tensor product $(Y \otimes Z,\Vert \cdot \Vert_{Y \otimes Z})$ is again a Banach space with projective norm
	\begin{equation*}
		\Vert x \Vert_{Y \otimes Z} = \inf\left\lbrace \sum_{j=1}^m \Vert y_j \Vert_Y \Vert z_j \Vert_Z: x = \sum_{j=1}^m y_j \otimes z_j, \, m \in \mathbb{N}, \, y_1,...,y_m \in Y, \, z_1,...,z_m \in Z \right\rbrace,
	\end{equation*}
	where it holds that $\Vert y \otimes z \Vert_{Y \otimes Z} = \Vert y \Vert_Y \Vert z \Vert_Z$ for all $y \in Y$ and $z \in Z$. Moreover, by following the proof of \cite[Proposition~2.1]{ryan02}, we observe that the completeness of $(Y,\Vert \cdot \Vert_Y)$ and $(Z,\Vert \cdot \Vert_Z)$ has never been used. Hence, if $(Y,\Vert \cdot \Vert_Y)$ and $(Z,\Vert \cdot \Vert_Z)$ are normed vector space, then $(Y \otimes Z,\Vert \cdot \Vert_{Y \otimes Z})$ is a normed vector space with $\Vert y \otimes z \Vert_{Y \otimes Z} = \Vert y \Vert_Y \Vert z \Vert_Z$ for all $y \in Y$ and $z \in Z$.
	
	Thus, by using the above arguments iteratively, we can show that $(L^{np}(X)^{\otimes n},\Vert \cdot \Vert_{L^{np}(X)^{\otimes n}})$ is a normed vector space such that for every $g_1,...,g_n \in L^{np}(X)$ it holds that
	\begin{equation*}
		\Vert g_1 \otimes \cdots \otimes g_n \Vert_{L^{np}(X)^{\otimes n}} = \Vert g_1 \otimes \cdots \otimes g_{n-1} \Vert_{L^{np}(X)^{\otimes (n-1)}} \Vert g_n \Vert_{L^{np}(X)} = ... = \prod_{k=1}^n \Vert g_k \Vert_{L^{np}(X)},
	\end{equation*}
	which completes the proof.
\end{proof}

\begin{proof}[Proof of Lemma~\ref{LemmaItIntLinearBDG}]
	For \ref{LemmaItIntLinearBDG1}, we first define $L^{np}(X) \times \cdots \times L^{np}(X) \ni (g_1,...,g_n) \mapsto L(g_1,...,g_n) := J^\circ_n(g_1 \otimes \cdots \otimes g_n)_t \rightarrow L^0(\Omega,\mathcal{F}_t,\mathbb{P})$, which is by \eqref{EqRemItInt1} multilinear, where $L^0(\Omega,\mathcal{F}_t,\mathbb{P})$ denotes the space of $\mathcal{F}_t$-measurable random variables. Moreover, by using the universal property of the tensor product in \cite[Proposition~1.4]{ryan02}, the map
	\begin{equation*}
		L^{np}(X)^{\otimes n} \ni g = \sum_{j=1}^m g_{j,1} \otimes \cdots \otimes g_{j,n} \quad \mapsto \quad \widetilde{L}(g) := \sum_{j=1}^m L(g_{j,1},...,g_{j,n}) \in \mathbb{R}
	\end{equation*}
	is linear. However, since $J^\circ_n(g)_t = \sum_{j=1}^m J^\circ_n(g_{j,1} \otimes \cdots \otimes g_{j,n})_t = \sum_{j=1}^m L(g_{j,1},...,g_{j,n}) = \widetilde{L}(g)$ for all $g = \sum_{j=1}^m g_{j,1} \otimes \cdots \otimes g_{j,n} \in L^{np}(X)^{\otimes n}$, we conclude that $J^\circ_n(\cdot)_t = \widetilde{L}$ is linear on $L^{np}(X)^{\otimes n}$.
	
	For \ref{LemmaItIntLinearBDG2}, we use the linearity of $J^\circ_n(\cdot)_t: L^{np}(X)^{\otimes n} \rightarrow L^0(\Omega,\mathcal{F}_t,\mathbb{P})$ in \ref{LemmaItIntLinearBDG1} to conclude that it suffices to show that $J^\circ_n(g)_t = J^\circ_n(0)_t = 0$, $\mathbb{P}$-a.s., for each representation $g = \sum_{j=1}^m g_{k,1} \otimes \cdots \otimes g_{j,n}$ of $0 \in L^{np}(X)^{\otimes n}$. For every fixed $E \in \mathcal{A}$, we define the multilinear form $L^{np}(X) \times \cdots \times L^{np}(X) \ni (f_1,...,f_n) \mapsto L(f_1,...,f_n) := \mathbb{E}\left[ \mathds{1}_E J^\circ_n(f_1 \otimes \cdots \otimes f_n)_t \right] \in \mathbb{R}$. Then, it follows for every representation $g := \sum_{j=1}^m g_{j,1} \otimes \cdots \otimes g_{j,n} \in L^{np}(X)^{\otimes n}$ of $0 \in L^{np}(X)^{\otimes n}$ that
	\begin{equation*}
		\mathbb{E}\left[ \mathds{1}_E J^\circ_n(g)_t \right] = \sum_{j=1}^m \mathbb{E}\left[ \mathds{1}_E J^\circ_n(g_{j,1} \otimes \cdots \otimes g_{j,n})_t \right] = \sum_{j=1}^m L(g_{j,1},...,g_{j,n}) = L(g).
	\end{equation*}
	Since the value of $L(g)$ does not dependent on the representation of $0 \in L^{np}(X)^{\otimes n}$ (see \cite[p.~2]{ryan02}), we can choose $g = 0 \in L^{np}(X)^{\otimes n}$ to conclude that $\mathbb{E}\left[ \mathds{1}_E J^\circ_n(g)_t \right] = L(g) = 0$. However, as $E \in \mathcal{A}$ was chosen arbitrarily, we obtain $J^\circ_n(g)_t = 0$, $\mathbb{P}$-a.s., which completes the proof.
	
	For \ref{LemmaItIntLinearBDG3}, we use induction on $n \in \mathbb{N}$. For $n = 1$, we observe that $J^\circ_1(g)_t = W(g)_t$ for all $g \in L^p(X)$ and $t \in [0,T]$, whence the conclusion follows from Lemma~\ref{LemmaBDG}. Now, we fix some $n \in \mathbb{N} \cap [2,\infty)$ and $g_1,...,g_n \in L^{np}(X)$, assume that \ref{LemmaItIntLinearBDG3} holds true for $1,...,n-1$, and aim to show \ref{LemmaItIntLinearBDG3} for $n$. To this end, we use the notation $g_{1:(n-1)} := g_1 \otimes \cdots \otimes g_{n-1} \in L^{np}(X)^{\otimes (n-1)}$ and H\"older's inequality to conclude that
	\begin{equation}
		\label{EqLemmaBDGItIntProof1}
		\begin{aligned}
			& \mathbb{E}\left[ \sup_{t \in [0,T]} \left\vert \int_0^t J^\circ_{n-1}(g_{1:(n-1)})_s g_n(s)^\top dA_s \right\vert^p \right]^\frac{1}{p} \\
			& \quad\quad \leq \mathbb{E}\left[ \sup_{t \in [0,T]} \left\vert J^\circ_{n-1}(g_{1:(n-1)})_t \right\vert^p \left( \int_0^T \left\vert g_k(s)^\top dA_s \right\vert \right)^p \right]^\frac{1}{p} \\
			& \quad\quad \leq \mathbb{E}\left[ \sup_{t \in [0,T]} \left\vert J^\circ_{n-1}(g_{1:(n-1)})_t \right\vert^\frac{np}{n-1} \right]^\frac{n-1}{np} \mathbb{E}\left[ \left( \int_0^T \left\vert g_k(s)^\top dA_s \right\vert \right)^{np} \right]^\frac{1}{np}.
		\end{aligned}
	\end{equation}
	Moreover, the Burkholder-Davis-Gundy inequality (with constant $C_p > 0$) and H\"older's inequality imply that
	\begin{equation}
		\label{EqLemmaBDGItIntProof2}
		\begin{aligned}
			& \mathbb{E}\left[ \sup_{t \in [0,T]} \left\vert \int_0^t J^\circ_{n-1}(g_{1:(n-1)})_s g_k(s)^\top dM_s \right\vert^p \right]^\frac{1}{p} \\
			& \quad\quad \leq C_p \mathbb{E}\left[ \left\vert \int_0^T J^\circ_{n-1}(g_{1:(n-1)})_s^2 g_k(s)^\top d\langle M \rangle_s g_k(s) \right\vert^\frac{p}{2} \right]^\frac{1}{p} \\
			& \quad\quad \leq C_p \mathbb{E}\left[ \sup_{t \in [0,T]} \left\vert J^\circ_{n-1}(g_{1:(n-1)})_t \right\vert^p \left( \int_0^T g_k(s)^\top d\langle M \rangle_s g_k(s) \right)^\frac{p}{2} \right]^\frac{1}{p} \\
			& \quad\quad \leq C_p \mathbb{E}\left[ \sup_{t \in [0,T]} \left\vert J^\circ_{n-1}(g_{1:(n-1)})_t \right\vert^\frac{np}{n-1} \right]^\frac{n-1}{np} \mathbb{E}\left[ \left( \int_0^T g_k(s)^\top d\langle M \rangle_s g_k(s) \right)^\frac{np}{2} \right]^\frac{1}{np}. 
		\end{aligned}
	\end{equation}
	In addition, by using the notation $g_{1:(n-2)} := g_1 \otimes \cdots \otimes g_{n-2} \in L^{np}(X)^{\otimes (n-2)}$, the Kunita-Watanabe inequality, and H\"older's inequality, we conclude that
	\begin{equation}
		\label{EqLemmaBDGItIntProof3}
		\begin{aligned}
			& \mathbb{E}\left[ \sup_{t \in [0,T]} \left\vert \int_0^t J^\circ_{n-2}(g_{1:(n-2)})_s g_{n-1}(s)^\top d\langle X \rangle_s g_n(s) \right\vert^p \right]^\frac{1}{p} \\
			& \quad\quad \leq \mathbb{E}\left[ \sup_{t \in [0,T]} \left\vert J^\circ_{n-2}(g_{1:(n-2)})_t \right\vert^p \left\vert \int_0^T g_{n-1}(s)^\top d\langle M \rangle_s g_n(s) \right\vert^p \right]^\frac{1}{p} \\
			& \quad\quad \leq \mathbb{E}\left[ \sup_{t \in [0,T]} \left\vert J^\circ_{n-2}(g_{1:(n-2)})_t \right\vert^p \left( \int_0^T g_{n-1}(s)^\top d\langle M \rangle_s g_{n-1}(s) \right)^\frac{p}{2} \left( \int_0^T g_n(s)^\top d\langle M \rangle_s g_n(s) \right)^\frac{p}{2} \right]^\frac{1}{p} \\
			& \quad\quad \leq \mathbb{E}\left[ \sup_{t \in [0,T]} \left\vert J^\circ_{n-2}(g_{1:(n-2)})_t \right\vert^\frac{np}{n-2} \right]^\frac{n-2}{np} \mathbb{E}\left[ \left( \int_0^T g_{n-1}(s)^\top d\langle M \rangle_s g_{n-1}(s) \right)^\frac{np}{2} \right]^\frac{1}{np} \\
			& \quad\quad\quad\quad \cdot \mathbb{E}\left[ \left( \int_0^T g_n(s)^\top d\langle M \rangle_s g_n(s) \right)^\frac{np}{2} \right]^\frac{1}{np}.
		\end{aligned}
	\end{equation}
	Hence, by using the notation $g_{1:n} := g_1 \otimes \cdots \otimes g_n \in L^{np}(X)^{\otimes n}$, the identity~\eqref{EqRemItInt1} together with $dX_t = dM_t + dA_t$, $t \in [0,T]$, and Minkowski's inequality, the inequalities~\eqref{EqLemmaBDGItIntProof1}-\eqref{EqLemmaBDGItIntProof3}, and the induction hypothesis (i.e.~that \ref{LemmaItIntLinearBDG3} holds true for $1,...,n-1$ with exponent $\frac{np}{n-1}$) together with $\Vert g_{1:n_1} \Vert_{L^{np}(X)^{\otimes n_1}} = \prod_{k=1}^{n_1} \Vert g_k \Vert_{L^{np}(X)}$ for $n_1 \in \lbrace n-2, n-1 \rbrace$ (see Lemma~\ref{LemmaLpXnNorm}), we obtain that
	\begin{equation}
		\label{EqLemmaBDGItIntProof4}
		\begin{aligned}
			\mathbb{E}\left[ \sup_{t \in [0,T]} \left\vert J^\circ_n(g_{1:n})_t \right\vert^p \right]^\frac{1}{p} & \leq \mathbb{E}\left[ \sup_{t \in [0,T]} \left\vert \int_0^t J^\circ_{n-1}(g_{1:(n-1)})_s g_n(s)^\top dA_s \right\vert^p \right]^\frac{1}{p} \\
			& \quad + \mathbb{E}\left[ \sup_{t \in [0,T]} \left\vert \int_0^t J^\circ_{n-1}(g_{1:(n-1)})_s g_n(s)^\top dM_s \right\vert^p \right]^\frac{1}{p} \\
			& \quad + \frac{1}{2} \mathbb{E}\left[ \sup_{t \in [0,T]} \left\vert \int_0^t J^\circ_{n-2}(g_{1:(n-2)})_s g_{n-1}(s)^\top d\langle X \rangle_s g_n(s) \right\vert^p \right]^\frac{1}{p} \\
			& \leq \mathbb{E}\left[ \sup_{t \in [0,T]} \left\vert J^\circ_{n-1}(g_{1:(n-1)})_t \right\vert^\frac{np}{n-1} \right]^\frac{n-1}{np} \mathbb{E}\left[ \left( \int_0^T \left\vert g_k(s)^\top dA_s \right\vert \right)^{np} \right]^\frac{1}{np} \\
			& \quad + C_p \mathbb{E}\left[ \sup_{t \in [0,T]} \left\vert J^\circ_{n-1}(g_{1:(n-1)})_t \right\vert^\frac{np}{n-1} \right]^\frac{n-1}{np} \mathbb{E}\left[ \left( \int_0^T g_k(s)^\top d\langle M \rangle_s g_k(s) \right)^\frac{np}{2} \right]^\frac{1}{np} \\
			& \quad + \frac{1}{2} \mathbb{E}\left[ \sup_{t \in [0,T]} \left\vert J^\circ_{n-2}(g_{1:(n-2)})_t \right\vert^\frac{np}{n-2} \right]^\frac{n-2}{np} \mathbb{E}\left[ \left( \int_0^T g_{n-1}(s)^\top d\langle M \rangle_s g_{n-1}(s) \right)^\frac{np}{2} \right]^\frac{1}{np} \\
			& \quad\quad\quad \cdot \mathbb{E}\left[ \left( \int_0^T g_n(s)^\top d\langle M \rangle_s g_n(s) \right)^\frac{np}{2} \right]^\frac{1}{np} \\
			& \leq \left( \prod_{k=1}^{n-1} \max\left( 1, C_\frac{(n-1) \frac{np}{n-1}}{k} \right) \right) \left( \prod_{k=1}^{n-1} \Vert g_k \Vert_{L^{np}(X)} \right) \max(1,C_p) \Vert g_n \Vert_{L^{np}(X)} \\
			& \quad + \frac{1}{2} \left( \prod_{k=1}^{n-2} \Vert g_k \Vert_{L^{np}(X)} \right) \Vert g_{n-1} \Vert_{L^{np}(X)} \Vert g_n \Vert_{L^{np}(X)} \\
			& \leq \frac{3}{2} \left( \prod_{k=1}^n \max\left( 1, C_\frac{np}{k} \right) \right) \Vert g_1 \otimes \cdots \otimes g_n \Vert_{L^{np}(X)^{\otimes n}}.
		\end{aligned}
	\end{equation}
	For the general case of $g \in L^{np}(X)^{\otimes n}$, we fix some $\varepsilon > 0$ and assume that $\sum_{j=1}^m g_{j,1} \otimes \cdots \otimes g_{j,n}$ is a representation of $g \in L^{np}(X)^{\otimes n}$ such that $\sum_{j=1}^m \prod_{k=1}^n \Vert g_{j,k} \Vert_{L^{np}(X)} \leq \Vert g \Vert_{L^{np}(X)^{\otimes n}} + \varepsilon/C_{n,p}$. Then, by using Minkowski's inequality and \eqref{EqLemmaBDGItIntProof4}, it follows that
	\begin{equation*}
		\begin{aligned}
			\mathbb{E}\left[ \sup_{t \in [0,T]} \left\vert J^\circ_n(g)_t \right\vert^p \right]^\frac{1}{p} & \leq \sum_{j=1}^m \mathbb{E}\left[ \sup_{t \in [0,T]} \left\vert J^\circ_n(g_{j,1} \otimes \cdots \otimes g_{j,n})_t \right\vert^p \right]^\frac{1}{p} \\
			& \leq C_{n,p} \sum_{j=1}^m \prod_{k=1}^n \Vert g_{j,k} \Vert_{L^{np}(X)} \\
			& \leq C_{n,p} \Vert g \Vert_{L^{np}(X)^{\otimes n}} + \varepsilon.
		\end{aligned}
	\end{equation*}
	Since $\varepsilon > 0$ was chosen arbitrarily, we obtain the conclusion in \ref{LemmaItIntLinearBDG3}.
\end{proof}

\begin{proof}[Proof of Proposition~\ref{PropMon}]
	Fix some $g = \sum_{j=1}^m g_{j,1} \otimes \cdots \otimes g_{j,n} \in L^{np}_{\sym}(X)^{\otimes n}$ and $t \in [0,T]$. Then, we show the conclusion by induction over $n \in \mathbb{N}$. For $n = 1$, we observe that
	\begin{equation*}
		J^\circ_1(g)_t = \sum_{j=1}^m \int_0^t \circ dW(g_{j,1})_t = \sum_{j=1}^m W(g_{j,1})_t.
	\end{equation*}
	Now, we fix some $n \in \mathbb{N} \cap [2,\infty)$, assume that the conclusion holds true for $n-1$, and aim to show that it also holds true for $n$. Indeed, by using that $g = \sym(g)$ together with the linearity of $J^\circ_n: L^{np}(X)^{\otimes n} \rightarrow L^p(\Omega,\mathcal{F}_t,\mathbb{P})$, the set $\mathcal{S}_{n,k}$ consisting of all permutations $\sigma: \lbrace 1,...,n \rbrace \setminus \lbrace k \rbrace \rightarrow \lbrace 1,...,n \rbrace \setminus \lbrace k \rbrace$, $l = 1,...,n$, the induction hypothesis (i.e.~that the conclusion holds true for $n-1$), and the definition of the Stratonovich integral, it follows that
	\begin{equation*}
		\begin{aligned}
			J^\circ_n(g) & = J^\circ_n(\sym(g)) = \frac{1}{n!} \sum_{\sigma \in \mathcal{S}_n} \sum_{j=1}^m J^\circ_n(g_{j,\sigma(1)} \otimes \cdots \otimes g_{j,\sigma(n)}) \\
			& = \frac{1}{n!} \sum_{j=1}^m \sum_{k=1}^n \sum_{\sigma \in \mathcal{S}_{n,k}} \int_0^t J^\circ_{n-1}\left( g_{j,\sigma(1)} \otimes \cdots \otimes g_{j,\sigma(k)-1} \otimes g_{j,\sigma(k)+1} \otimes \cdots \otimes g_{j,\sigma(n)} \right)_s \circ dW(g_{j,\sigma(k)})_s \\
			& = \frac{1}{n} \sum_{j=1}^m \sum_{k=1}^n \int_0^t J^\circ_{n-1}\left( \sym\left( g_{j,1} \otimes \cdots \otimes g_{j,k-1} \otimes g_{j,k+1} \otimes \cdots \otimes g_{j,n} \right) \right)_s \circ dW(g_{j,k})_s \\
			& = \frac{1}{n} \sum_{j=1}^m \sum_{k=1}^n \int_0^t \frac{\prod_{l=1, \, l \neq k}^n W(g_{j,l})_s}{(n-1)!} \circ dW(g_{j,k})_s \\
			& = \frac{1}{n!} \sum_{j=1}^m \prod_{k=1}^n W(g_{j,k})_t,
		\end{aligned}
	\end{equation*}
	which completes the proof.
\end{proof}

\subsection{Proof of results in Section~\ref{SecChaosHedg}}

\subsubsection{Proof of results in Section~\ref{SecChaos}}
\label{SecProofsChaos}

\begin{proof}[Proof of Proposition~\ref{PropPolDense}]
	Let $p \in [1,\infty)$ and $q \in (1,\infty]$ with $\frac{1}{p} + \frac{1}{q} = 1$, where $\frac{1}{\infty} := 0$. Now, we assume by contradiction that $\Pol(X)$ defined in \eqref{EqPropPolDense1} is not dense in $L^p(\Omega,\mathcal{F}_T,\mathbb{P})$. Then, by the Hahn-Banach theorem, there exists a non-zero continuous linear functional $l: L^p(\Omega,\mathcal{F}_T,\mathbb{P}) \rightarrow \mathbb{R}$ that vanishes on $\Pol(X)$. Moreover, by the Riesz representation theorem, there exists some $Z \in L^q(\Omega,\mathcal{F}_T,\mathbb{P})$ such that for every $Y \in L^p(\Omega,\mathcal{F}_T,\mathbb{P})$ it holds that $l(Y) = \mathbb{E}[ZY]$. Hence, for every $Y \in \Pol(X)$, we have $l(Y) = \mathbb{E}[ZY] = 0$, e.g., $l(1) = \mathbb{E}[Z] = 0$. Thus, it suffices to show that $Z = 0 \in L^q(\Omega,\mathcal{F}_T,\mathbb{P})$, which contradicts the assumption that the continuous linear functional $l: L^p(\Omega,\mathcal{F}_T,\mathbb{P}) \rightarrow \mathbb{R}$ is non-zero. 
	
	To this end, we define for every fixed $m \in \mathbb{N}$, $i_{1:m} := (i_1,....,i_m) \in \lbrace 1,...,d \rbrace^m$, and $t_{1:m} := (t_1,...,t_m) \in [0,T]^m$ the random vector $\Omega \ni \omega \mapsto X^{i_{1:m}}_{t_{1:m}}(\omega) := \big( X^{i_1}_{t_1}(\omega),...,X^{i_m}_{t_m}(\omega) \big)^\top \in \mathbb{R}^m$ and show that $\mathbb{E}\big[Z f\big( X^{i_{1:m}}_{t_{1:m}} \big)\big] = 0$ for all $f \in \mathcal{S}(\mathbb{R}^m;\mathbb{C})$, where $\mathcal{S}(\mathbb{R}^m;\mathbb{C})$ denotes the Schwartz space (see \cite[p.~330]{folland92}). For this purpose, we define the tempered distribution $\big( f \mapsto U(f) := \mathbb{E}\big[ Z f\big( X^{i_{1:m}}_{t_{1:m}} \big) \big] \big) \in \mathcal{S}'(\mathbb{R}^m;\mathbb{C})$, where $\mathcal{S}'(\mathbb{R}^m;\mathbb{C})$ denotes the dual space of $\mathcal{S}(\mathbb{R}^m;\mathbb{C})$ (see \cite[p.~332]{folland92}). Moreover, we denote by $\big( f \mapsto \widehat{U}(f) := U(\widehat{f}) \big) \in \mathcal{S}'(\mathbb{R}^m;\mathbb{C})$ its Fourier transform in the sense of distributions (see \cite[Equation~9.28]{folland92}), where $\mathbb{R}^m \ni \lambda \mapsto \widehat{f}(\lambda) := \int_{\mathbb{R}^m} e^{-i\lambda^\top x} f(x) dx \in \mathbb{R}$ is the Fourier transform of $f \in \mathcal{S}(\mathbb{R}^m;\mathbb{C})$ (see \cite[Equation~7.1]{folland92}). Then, Fubini's theorem implies for every $f \in \mathcal{S}(\mathbb{R}^m;\mathbb{C})$ that
	\begin{equation}
		\label{EqPropPolDenseProof1}
		\widehat{U}(f) = U(\widehat{f}) = \mathbb{E}\left[ Z \widehat{f}\big( X^{i_{1:m}}_{t_{1:m}} \big) \right] = \mathbb{E}\left[ Z \int_{\mathbb{R}^m} e^{-\mathbf{i} \, (X^{i_{1:m}}_{t_{1:m}})^\top x} f(x) dx \right] = \int_{\mathbb{R}^m} \widehat{u}(x) f(x) dx,
	\end{equation}
	where $\mathbb{R}^m \ni x \mapsto \widehat{u}(x) := \mathbb{E}\big[ Z e^{-\mathbf{i} x^\top X^{i_{1:m}}_{t_{1:m}}} \big] \in \mathbb{C}$. Moreover, by using the Cauchy-Schwarz inequality, that $\big\Vert X^{i_{1:m}}_{t_{1:m}} \big\Vert \leq \sum_{k=1}^m \big\vert X^{i_k}_{t_k} \big\vert \leq \sum_{k=1}^m \Vert X_{t_k} \Vert$, the generalized H\"older inequality (with exponents $\frac{1}{q} + \sum_{k=1}^m \frac{1}{mp} = 1$), and that $X$ satisfies Assumption~\ref{AssExpInt} (with $\varepsilon > 0$), we conclude for every $z \in V := \big\lbrace z \in \mathbb{C}^m: \Vert \im(z) \Vert < \frac{\varepsilon}{mp} \big\rbrace$ (with $\im(z) := (\im(z_1),...,\im(z_m))^\top$, for $z \in \mathbb{C}^m$) that
	\begin{equation*}
		\begin{aligned}
			& \mathbb{E}\left[ \left\vert Z e^{-\mathbf{i} z^\top X^{i_{1:m}}_{t_{1:m}}} \right\vert \right] \leq \mathbb{E}\left[ \vert Z \vert e^{\Vert \im(z) \Vert \big\Vert X^{i_{1:m}}_{t_{1:m}} \big\Vert} \right]^\frac{1}{p} \leq \mathbb{E}\left[ \vert Z \vert e^{\Vert \im(z) \Vert \sum_{k=1}^m \Vert X_{t_k} \Vert} \right]^\frac{1}{p} \\
			& \quad\quad \leq \Vert Z \Vert_{L^q(\mathbb{P})} \prod_{k=1}^m \mathbb{E}\left[ e^{mp \Vert \im(z) \Vert \Vert X_{t_k} \Vert} \right]^\frac{1}{mp} \leq \Vert Z \Vert_{L^q(\mathbb{P})} \prod_{k=1}^m \mathbb{E}\left[ e^{\varepsilon \Vert X_{t_k} \Vert} \right]^\frac{1}{mp} < \infty.
		\end{aligned}
	\end{equation*}
	Therefore, the function $V \ni z \mapsto \widehat{v}(z) := \mathbb{E}\big[ Z e^{-\mathbf{i} z^\top X^{i_{1:m}}_{t_{1:m}}} \big] \in \mathbb{R}$ is holomorphic on $V$, which implies that the restriction $\widehat{u} := \widehat{v}\vert_{\mathbb{R}^m}: \mathbb{R}^m \rightarrow \mathbb{C}$ is real analytic. Moreover, by using that $\mathbb{E}[ZY] = 0$ for any $Y \in \Pol(X)$, it follows for every $n \in \mathbb{N}_0$ and $(j_1,...,j_n) \in \lbrace 1,...,d \rbrace^n$ that
	\begin{equation*}
		\frac{\partial^n \widehat{u}}{\partial z_{j_1} \cdots \partial z_{j_n}}(0) = (-\mathbf{i})^n \mathbb{E}\left[ Z X^{i_{j_1}}_{t_{j_1}} \cdots X^{i_{j_n}}_{t_{j_n}} \right] = 0.
	\end{equation*}
	Hence, by using that $\widehat{u}: \mathbb{R}^m \rightarrow \mathbb{C}$ is real analytic, the function $\widehat{u}: \mathbb{R}^m \rightarrow \mathbb{C}$ is constant equal to $\widehat{u}(0) = \mathbb{E}[Z] = 0$, which implies together with \eqref{EqPropPolDenseProof1} that $\widehat{U}(f) = \int_{\mathbb{R}^m} \widehat{u}(x) f(x) dx = \widehat{u}(0) \int_{\mathbb{R}^m} f(x) dx = 0$ for all $f \in \mathcal{S}(\mathbb{R}^m;\mathbb{C})$. Thus, by using that the Fourier transform is bijective on $\mathcal{S}'(\mathbb{R}^m;\mathbb{C})$, it follows that $U = 0 \in \mathcal{S}'(\mathbb{R}^m;\mathbb{C})$, which shows that $\mathbb{E}\big[Z f\big( X^{i_{1:m}}_{t_{1:m}} \big)\big] = 0$ for all $f \in \mathcal{S}(\mathbb{R}^m;\mathbb{C})$.
	
	Finally, we show that $Z = 0 \in L^q(\Omega,\mathcal{F}_T,\mathbb{P})$. To this end, we fix some $m \in \mathbb{N}$, $i_{1:m} := (i_1,....,i_m) \in \lbrace 1,...,d \rbrace^m$, $t_{1:m} := (t_1,...,t_m) \in [0,T]^m$, and $F \in \mathcal{B}(\mathbb{R}^m)$ and define again the random vector $\Omega \ni \omega \mapsto X^{i_{1:m}}_{t_{1:m}}(\omega) := \big( X^{i_1}_{t_1}(\omega),...,X^{i_m}_{t_m}(\omega) \big)^\top \in \mathbb{R}^m$. Then, by using Lusin's theorem (see \cite[Theorem~3.14]{rudin91}) and a mollification argument, the indicator function $\mathds{1}_F: \mathbb{R}^m \rightarrow \mathbb{R}$ can be approximated by a sequence of smooth compactly supported functions $(f_n)_{n \in \mathbb{N}} \subseteq C^\infty_c(\mathbb{R}^m) \subseteq \mathcal{S}(\mathbb{R}^m;\mathbb{C})$ with respect to $\Vert \cdot \Vert_{L^p(\mathbb{P} \circ (X^{i_{1:m}}_{t_{1:m}})^{-1})}$, where $\mathcal{B}(\mathbb{R}^m) \ni \widetilde{F} \mapsto \big( \mathbb{P} \circ (X^{i_{1:m}}_{t_{1:m}})^{-1} \big)\big[\widetilde{F}\big] := \mathbb{P}\big[ X^{i_{1:m}}_{t_{1:m}} \in \widetilde{F} \big] \in [0,1]$ denotes the push-forward measure. Hence, by using the triangle inequality, that $\mathbb{E}\big[Z f\big( X^{i_{1:m}}_{t_{1:m}} \big)\big] = 0$ for any $f \in \mathcal{S}(\mathbb{R}^m;\mathbb{C})$, and H\"older's inequality, we conclude that
	\begin{equation}
		\label{EqPropPolDenseProof2}
		\begin{aligned}
			\bigg\vert \mathbb{E}\left[ Z \mathds{1}_F\big(X^{i_{1:m}}_{t_{1:m}}\big)\right] \bigg\vert & \leq \lim_{n \rightarrow \infty} \bigg\vert \mathbb{E}\left[Z (\mathds{1}_F-f_n)\big(X^{i_{1:m}}_{t_{1:m}}\big)\right] \bigg\vert + \lim_{n \rightarrow \infty} \bigg\vert \underbrace{\mathbb{E}\left[Z f_n\big(X^{i_{1:m}}_{t_{1:m}}\big)\right]}_{=0} \bigg\vert \\
			& \leq \Vert Z \Vert_{L^q(\mathbb{P})} \lim_{n \rightarrow \infty} \big\Vert (\mathds{1}_F-f_n)\big(X^{i_{1:m}}_{t_{1:m}}\big) \big\Vert_{L^p(\mathbb{P})} \\
			& = \Vert Z \Vert_{L^q(\mathbb{P})} \lim_{n \rightarrow \infty} \Vert \mathds{1}_F-f_n \Vert_{L^p(\mathbb{P} \circ (X^{i_{1:m}}_{t_{1:m}})^{-1})} = 0.
		\end{aligned}
	\end{equation}
	Since $\mathbb{F} := (\mathcal{F}_t)_{t \in [0,T]}$ is the usual $\mathbb{P}$-augmented filtration$^\text{\ref{FootnoteAugmFilt}}$ generated by $X$, the $\sigma$-algebra $\mathcal{F}_T$ is equal to the $\mathbb{P}$-completion of $\sigma(\lbrace X_t: t \in [0,T] \rbrace)$. Thus, by using that $X$ has continuous sample paths, $\mathcal{F}_T$ is equal to the $\mathbb{P}$-completion of $\sigma\big( \bigcup_{m \in \mathbb{N}} \bigcup_{i_{1:m} \in \lbrace 1,...,d \rbrace^m} \bigcup_{t_{1:m} \in [0,T]^m} \big\lbrace \big( X^{i_{1:m}}_{t_{1:m}} \big)^{-1}(F): F \in \mathcal{B}(\mathbb{R}^m) \big\rbrace \big)$. Hence, by applying the Dynkin $\pi$-$\lambda$ theorem (see \cite[Theorem~2.1.2]{durrett10}, with corresponding $\pi$-system $\bigcup_{m \in \mathbb{N}} \bigcup_{i_{1:m} \in \lbrace 1,...,d \rbrace^m} \bigcup_{t_{1:m} \in [0,T]^m} \big\lbrace \big( X^{i_{1:m}}_{t_{1:m}} \big)^{-1}(F): F \in \mathcal{B}(\mathbb{R}^m) \big\rbrace$ and $\lambda$-system $\sigma\big(\big\lbrace E \in \mathcal{F}_T: \mathbb{E}[Z \mathds{1}_E] = 0 \big\rbrace \big)$), we conclude from \eqref{EqPropPolDenseProof2} that $\mathbb{E}[Z \mathds{1}_E] = 0$ for all $E \in \mathcal{F}_T$. This shows that $Z = 0$, $\mathbb{P}$-a.s., and thus $Z = 0 \in L^q(\Omega,\mathcal{F}_T,\mathbb{P})$.
\end{proof}

\begin{proof}[Proof of Theorem~\ref{ThmChaos}]
	For \ref{ThmChaos1}, we fix some $g_n = \sum_{j=1}^{m_n} g_{n,j}^{\otimes n} \in L^{np}_{\operatorname{diag}}(X)^{\otimes n}$. Then, by using the convention $g_{n,j}^{\otimes 0} := 1$ (if $n = 1$), Remark~\ref{RemItInt}, and Proposition~\ref{PropMon}, it follows $\mathbb{P}$-a.s.~that
	\begin{equation*}
		\begin{aligned}
			J^\circ_n(g_n)_T & = \sum_{j=1}^{m_n} J^\circ_n\left( g_{n,j}^{\otimes n} \right)_T = \sum_{j=1}^{m_n} \int_0^T J^\circ_{n-1}\left( g_{n,j}^{\otimes (n-1)} \right)_t \circ dW(g_{n,j})_t \\
			& = \sum_{j=1}^{m_n} \int_0^T J^\circ_{n-1}\left( g_{n,j}^{\otimes (n-1)} \right)_t dW(g_{n,j})_t + \frac{1}{2} \sum_{j=1}^{m_n} \int_0^T J^\circ_{n-1}\left( g_{n,j}^{\otimes (n-1)} \right)_t g_{n,j}(t)^\top d\langle X \rangle_t g_{n,j}(t) \\
			& = \sum_{j=1}^{m_n} \int_0^T \frac{W(g_{n,j})_t^{n-1}}{(n-1)!} g_{n,j}(t)^\top dX_t + \frac{1}{2} \sum_{j=1}^{m_n} \int_0^T \frac{W(g_{n,j})_t^{n-1}}{(n-1)!} g_{n,j}(t)^\top d\langle X \rangle_t g_{n,j}(t),
		\end{aligned}
	\end{equation*}
	which completes the proof of \ref{ThmChaos1}.
	
	For \ref{ThmChaos2}, let $p \in [1,\infty)$ and $q \in (1,\infty]$ with $\frac{1}{p} + \frac{1}{q} = 1$, where $\frac{1}{\infty} := 0$. Now, we assume by contradiction that $\bigoplus_{n \in \mathbb{N}_0} \big\lbrace J^\circ_n(g_n)_T: g_n \in L^{np}_{\diag}(X)^{\otimes n} \big\rbrace$ is not dense in $L^p(\Omega,\mathcal{F}_T,\mathbb{P})$. Then, by the Hahn-Banach theorem, there exists a non-zero continuous linear functional $l: L^p(\Omega,\mathcal{F}_T,\mathbb{P}) \rightarrow \mathbb{R}$ that vanishes on $\bigoplus_{n \in \mathbb{N}_0} \big\lbrace J^\circ_n(g_n)_T: g_n \in L^{np}_{\diag}(X)^{\otimes n} \big\rbrace$. Moreover, by the Riesz representation theorem, there exists some $Z \in L^q(\Omega,\mathcal{F}_T,\mathbb{P})$ such that for every $Y \in L^p(\Omega,\mathcal{F}_T,\mathbb{P})$ it holds that $l(Y) = \mathbb{E}[ZY]$, e.g., $l(Y) = \mathbb{E}[ZY] = 0$ for all $Y \in \bigoplus_{n \in \mathbb{N}_0} \big\lbrace J^\circ_n(g_n)_T: g_n \in L^{np}_{\diag}(X)^{\otimes n} \big\rbrace$. Thus, it suffices to show that $Z = 0 \in L^q(\Omega,\mathcal{F}_T,\mathbb{P})$, which contradicts the assumption that $l: L^p(\Omega,\mathcal{F}_T,\mathbb{P}) \rightarrow \mathbb{R}$ is non-zero.
	
	To this end, we fix some $m \in \mathbb{N}$, $(i_1,...,i_m) \in \lbrace 1,...,d \rbrace^m$, $\mathbf{k} := (k_1,...,k_m) \in \mathbb{N}_0^m$, and $(t_1,...,t_m) \in [0,T]^m$. Moreover, we recall that $X$ satisfies Assumption~\ref{AssExpInt} (with $\varepsilon > 0$) and let $\delta > 0$ such that $\delta \leq \frac{\varepsilon}{2mp}$. In addition, for every fixed $\lambda := (\lambda_1,...,\lambda_m)^\top \in (-\delta,\delta)^m$, we define the simple function	
	\begin{equation*}
		[0,T] \ni t \quad \mapsto \quad g_\lambda(t) := \sum_{l=1}^m \lambda_l \mathds{1}_{[0,t_l]}(t) e_{i_l} \in \mathbb{R}^d,
	\end{equation*}
	where $e_i$ denotes the $i$-th unit vector of $\mathbb{R}^d$, $i = 1,...,d$. Then, for every $\lambda := (\lambda_1,...,\lambda_m)^\top \in (-\delta,\delta)^m$ and $t \in [0,T]$, we observe that
	\begin{equation}
		\label{EqThmChaosProof1}
		W(g_\lambda)_t = \sum_{l=1}^m \lambda_l \int_0^t \mathds{1}_{[0,t_l]}(s) e_{i_l}^\top dX_s = \sum_{l=1}^m \lambda_l \left( X^{i_l}_{t \wedge t_l} - X^{i_l}_0 \right).
	\end{equation}
	Furthermore, by using that $g_\lambda: [0,T] \rightarrow \mathbb{R}^d$ is c\`agl\`ad, Lemma~\ref{LemmaLpNorm}~\ref{LemmaLpNorm2} ensures that $g_\lambda \in L^{np}(X)$, for all $n \in \mathbb{N}$. Furthermore, by using the identity~\eqref{EqThmChaosProof1}, that $X_0$ is $\mathbb{P}$-a.s.~constant equal to some $x_0 \in \mathbb{R}^d$, H\"older's inequality, that $\delta m p \leq \varepsilon$, and that $X$ satisfies Assumption~\ref{AssExpInt} (with $\varepsilon > 0$), it holds that
	\begin{equation}
		\label{EqThmChaosProof2}
		\begin{aligned}
			& \mathbb{E}\left[ \vert Z \vert e^{\vert W(g_\lambda)_T \vert} \right] \leq \mathbb{E}\left[ \vert Z \vert e^{\sum_{l=1}^m \vert \lambda_l \vert \big\vert X^{i_l}_{t_l} - X^{i_l}_0 \big\vert} \right] \leq e^{\delta m \Vert x_0 \Vert} \max_{l=1,...,m} \mathbb{E}\left[ \vert Z \vert e^{\delta m \Vert X_{t_l} \Vert} \right] \\
			& \quad\quad \leq \Vert Z \Vert_{L^q(\mathbb{P})} e^{\delta m \Vert x_0 \Vert} \max_{l=1,...,m} \mathbb{E}\left[ e^{\delta m p \Vert X_{t_l} \Vert} \right]^\frac{1}{p} \leq \Vert Z \Vert_{L^q(\mathbb{P})} e^{\delta m \Vert x_0 \Vert} \max_{l=1,...,m} \mathbb{E}\left[ e^{\varepsilon \Vert X_{t_l} \Vert} \right]^\frac{1}{p} < \infty.
		\end{aligned}
	\end{equation}
	Hence, by using that $\sum_{n=0}^N \frac{W(g_\lambda)_T^n}{n!}$ converges pointwise to $e^{W(g_\lambda)_T}$ together with $\big\vert Z \sum_{n=0}^N \frac{W(g_\lambda)_T^n}{n!} \big\vert \leq \vert Z \vert \sum_{n=0}^N \frac{\vert W(g_\lambda)_T \vert^n}{n!} \leq \vert Z \vert e^{\vert W(g_\lambda)_T \vert}$, where the latter is by \eqref{EqThmChaosProof2} integrable, we can use the dominated convergence theorem, Proposition~\ref{PropMon}, and that $J^\circ_n\left( g_\lambda^{\otimes n} \right)_T \in \bigoplus_{n \in \mathbb{N}_0} \big\lbrace J^\circ_n(g_n)_T: g_n \in L^{np}_{\diag}(X)^{\otimes n} \big\rbrace$ to conclude that
	\begin{equation}
		\label{EqThmChaosProof3}
		u(\lambda) := \mathbb{E}\left[ Z e^{W(g_\lambda)_T} \right] = \lim_{N \rightarrow \infty} \mathbb{E}\left[ Z \sum_{n=0}^N \frac{W(g_\lambda)_T^n}{n!} \right] = \lim_{N \rightarrow \infty} \sum_{n=0}^N \mathbb{E}\left[ Z J^\circ_n\left( g_\lambda^{\otimes n} \right)_T \right]  = 0.
	\end{equation}
	Since $\lambda \in (-\delta,\delta)^m$ was chosen arbitrarily, this shows that $u: (-\delta,\delta)^m \rightarrow \mathbb{R}$ vanishes on $(-\delta,\delta)^m$.
	
	Next, we compute the $\vert \mathbf{k} \vert$-th order partial derivative of $u: (-\delta,\delta)^m \rightarrow \mathbb{R}$, $k_l$-times into the direction $e_l$, $l = 1,...,m$, and aim to show that we can interchange them with the expectation of the random function $(-\delta,\delta)^m \times \Omega \ni (\lambda,\omega) \mapsto U(\lambda,\omega) := Z e^{W(g_\lambda)_T(\omega)} \in \mathbb{R}$, i.e.~that
	\begin{equation}
		\label{EqThmChaosProof4}
		\partial_{\mathbf{k}} u(\lambda) := \frac{\partial^{\vert \mathbf{k} \vert} u}{\partial \lambda_1^{k_1} \cdots \partial \lambda_m^{k_m}}(\lambda) = \mathbb{E}\left[ \frac{\partial^{\vert \mathbf{k} \vert} U}{\partial \lambda_1^{k_1} \cdots \partial \lambda_m^{k_m}}(\lambda,\cdot) \right] =: \mathbb{E}\left[ \partial_{\mathbf{k}} U(\lambda,\cdot) \right].
	\end{equation}
	If \eqref{EqThmChaosProof4} holds true, then we can immediately conclude from \eqref{EqThmChaosProof3} and \eqref{EqThmChaosProof1} that
	\begin{equation}
		\label{EqThmChaosProof5}
		\begin{aligned}
			0 & = \frac{\partial^{\vert \mathbf{k} \vert} u}{\partial \lambda_1^{k_1} \cdots \partial \lambda_m^{k_m}}(0) = \mathbb{E}\left[ \frac{\partial^{\vert \mathbf{k} \vert} U}{\partial \lambda_1^{k_1} \cdots \partial \lambda_m^{k_m}}(0,\cdot) \right] = \mathbb{E}\left[ \frac{\partial^{\vert \mathbf{k} \vert}}{\partial \lambda_1^{k_1} \cdots \partial \lambda_m^{k_m}} \left( Z e^{W(g_\lambda)_T} \right) \bigg\vert_{\lambda = 0} \right] \\
			& = \mathbb{E}\left[ Z \frac{\partial^{\vert \mathbf{k} \vert}}{\partial \lambda_1^{k_1} \cdots \partial \lambda_m^{k_m}} \left( \prod_{l=1}^m e^{\lambda_l \big( X^{i_l}_{t_l} - X^{i_l}_0 \big)} \right) \bigg\vert_{\lambda = 0} \right] = \mathbb{E}\left[ Z \prod_{l=1}^m \left( e^{\lambda_l \big( X^{i_l}_{t_l} - X^{i_l}_0 \big)} \left( X^{i_l}_{t_l} - X^{i_l}_0 \right)^{k_l} \right) \bigg\vert_{\lambda = 0} \right] \\
			& = \mathbb{E}\left[ Z \left( X^{i_1}_{t_1} - X^{i_1}_0 \right)^{k_1} \cdots \left( X^{i_m}_{t_m} - X^{i_m}_0 \right)^{k_m} \right].
		\end{aligned}
	\end{equation}
	In order to prove \eqref{EqThmChaosProof4}, we first use again \eqref{EqThmChaosProof1}, the inequality $e^{\delta \vert x \vert} \vert x \vert^j = \frac{j!}{\delta^j} e^{\delta \vert x \vert} \frac{\vert \delta x \vert^j}{j!} \leq \frac{j!}{\delta^j} e^{2 \delta \vert x \vert}$ for any $j \in \mathbb{N}_0$ and $x \in \mathbb{R}$, that $X_0$ is $\mathbb{P}$-a.s.~constant equal to $x_0 \in \mathbb{R}^d$, H\"older's inequality, that $2\delta mp \leq \varepsilon$ together with $X$ satisfying Assumption~\ref{AssExpInt} (with $\varepsilon > 0$) to obtain for every $\mathbf{j} := (j_1,...,j_m) \in \mathbb{N}^m_0$ that
	\begin{equation}
		\label{EqThmChaosProof6}
		\begin{aligned}
			& \mathbb{E}\left[ \sup_{\widetilde{\lambda} \in (-\delta,\delta)^m} \left\vert \partial_{\mathbf{j}} U(\widetilde{\lambda},\cdot) \right\vert \right] = \mathbb{E}\left[ \sup_{\widetilde{\lambda} \in (-\delta,\delta)^m} \left\vert \frac{\partial^{\vert \mathbf{j} \vert}}{\partial \lambda_1^{j_1} \cdots \partial \lambda_m^{j_m}} \left( Z e^{W(g_\lambda)_T} \right) \bigg\vert_{\lambda = \widetilde{\lambda}} \right\vert \right] \\
			& = \mathbb{E}\left[ \vert Z \vert \sup_{\widetilde{\lambda} \in (-\delta,\delta)^m} \left\vert \prod_{l=1}^m \left( e^{\widetilde{\lambda}_l \big( X^{i_l}_{t_l} - X^{i_l}_0 \big)} \left( X^{i_l}_{t_l} - X^{i_l}_0 \right)^{j_l} \right) \right\vert \right] \leq \frac{\prod_{l=1}^m j_l!}{\delta^{\vert \mathbf{j} \vert}} \mathbb{E}\left[ \vert Z \vert e^{2 \delta \sum_{l=1}^m \big\vert X^{i_l}_{t_l} - X^{i_l}_0 \big\vert} \right] \\
			& \leq \frac{\prod_{l=1}^m j_l!}{\delta^{\vert \mathbf{j} \vert}} e^{2 \delta m \Vert x_0 \Vert} \max_{l=1,...,m} \mathbb{E}\left[ \vert Z \vert e^{2 \delta m \Vert X_{t_l} \Vert} \right] \leq \frac{\prod_{l=1}^m j_l!}{\delta^{\vert \mathbf{j} \vert}} \Vert Z \Vert_{L^q(\mathbb{P})} e^{2 \delta m \Vert x_0 \Vert} \max_{l=1,...,m} \mathbb{E}\left[ e^{\varepsilon \Vert X_{t_l} \Vert} \right]^\frac{1}{p} < \infty.
		\end{aligned}
	\end{equation}
	Now, we show the identity \eqref{EqThmChaosProof5} by induction on $\vert \mathbf{j} \vert = 0,...,\vert \mathbf{k} \vert$, i.e.~for every $\mathbf{j} := (j_1,...,j_m) \in \mathbb{N}^m_0$ with $j_l \leq k_l$ for all $l = 1,...,m$ we show that for every $\lambda \in (-\delta,\delta)^m$ it holds that
	\begin{equation}
		\label{EqThmChaosProof7}
		\partial_{\mathbf{j}} u(\lambda) = \mathbb{E}\left[ \partial_{\mathbf{j}} U(\lambda,\cdot) \right].
	\end{equation}
	For the induction initialization $\mathbf{j} = (0,...,0)$ with $\vert \mathbf{j} \vert = 0$, there is nothing to prove. For the induction step, we assume without loss of generality that \eqref{EqThmChaosProof7} holds true for some $\mathbf{j} := (j_1,...,j_m) \in \mathbb{N}^m_0$ with $j_{l_0} < k_{l_0}$ for some $l_0 = 1,...,m$, $j_l = 0$ for all $l = 1,...,l_0-1$, and $j_l = k_l$ for all $l = l_0+1,...,m$. Then, for every fixed $(\lambda_1,...,\lambda_{l_0-1},\lambda_{l_0+1},...,\lambda_m) \in \mathbb{R}^{m-1}$ and $\omega \in \Omega$, the functions $(-\delta,\delta) \ni \lambda_{l_0} \mapsto \frac{\partial}{\partial \lambda_{l_0}} \partial_{\mathbf{j}} U\big((\lambda_1,...,\lambda_{l_0-1},\lambda_{l_0},\lambda_{l_0+1},...,\lambda_m),\omega\big)$ as well as $(-\delta,\delta) \ni \lambda_{l_0} \mapsto \mathbb{E}\big[ \frac{\partial}{\partial \lambda_{l_0}} \partial_{\mathbf{j}} U\big((\lambda_1,...,\lambda_{l_0-1},\lambda_{l_0},\lambda_{l_0+1},...,\lambda_m),\cdot\big) \big]$ exist and are continuous. Moreover, by using \eqref{EqThmChaosProof6}, we observe for every $\lambda_{l_0} \in (-\delta,\delta)$ that
	\begin{equation*}
		\mathbb{E}\left[ \left\vert \partial_{\mathbf{j}} U\big((\lambda_1,...,\lambda_{l_0-1},\lambda_{l_0},\lambda_{l_0+1},...,\lambda_m),\cdot\big) \right\vert \right] \leq \mathbb{E}\left[ \sup_{\widetilde{\lambda} \in (-\delta,\delta)^m} \left\vert \partial_{\mathbf{j}} U(\widetilde{\lambda},\cdot) \right\vert \right] < \infty.
	\end{equation*}
	as well as
	\begin{equation*}
		\mathbb{E}\left[ \sup_{\lambda_{l_0} \in (-\delta,\delta)} \left\vert \frac{\partial}{\partial \lambda_{l_0}} \partial_{\mathbf{j}} U\big((\lambda_1,...,\lambda_{l_0-1},\lambda_{l_0},\lambda_{l_0+1},...,\lambda_m),\cdot\big) \right\vert \right] \leq \mathbb{E}\left[ \sup_{\widetilde{\lambda} \in (-\delta,\delta)^m} \left\vert \partial_{\mathbf{j}+e_{l_0}} U(\widetilde{\lambda},\cdot) \right\vert \right] < \infty,
	\end{equation*}
	where $e_{l_0} := (0,...,0,1,0,...,0) \in \mathbb{N}^m_0$ with $1$ at position $l_0$. Hence, we can apply \cite[Theorem~A.5.2]{durrett10} to conclude that \eqref{EqThmChaosProof7} holds true with $\mathbf{j} + e_{l_0}$ instead of $\mathbf{j}$, proving the induction step and therefore \eqref{EqThmChaosProof4}.
	
	Finally, we show that $Z = 0 \in L^q(\Omega,\mathcal{F}_T,\mathbb{P})$. To this end, we fix some $m \in \mathbb{N}$, $(i_1,...,i_m) \in \lbrace 1,...,d \rbrace^m$, $(K_1,...,K_m) \in \mathbb{N}_0^m$, and $(t_1,...,t_m) \in [0,T]^m$. Then, by using $m$-times the binomial theorem (with $\binom{\mathbf{K}}{\mathbf{k}} := \frac{K_1!}{k_1! (K_1-k_1)!} \cdots \frac{K_m!}{k_m! (K_m-k_m)!}$), that $X_0$ is $\mathbb{P}$-a.s.~constant equal to $x_0 \in \mathbb{R}^d$,	and the identity~\eqref{EqThmChaosProof5}, it follows that
	\begin{equation*}
		\begin{aligned}
			& \mathbb{E}\left[ Z \left( X^{i_1}_{t_1} \right)^{K_1} \cdots \left( X^{i_m}_{t_m} \right)^{K_m} \right] = \mathbb{E}\left[ Z \left( X^{i_1}_{t_1} - X^{i_1}_0 + x^{i_1}_0 \right)^{K_1} \cdots \left( X^{i_m}_{t_m} - X^{i_m}_0 + x^{i_m}_0 \right)^{K_m} \right] \\
			& = \sum_{k_1=0}^{K_1} \cdots \sum_{k_m=0}^{K_m} \binom{\mathbf{K}}{\mathbf{k}} \mathbb{E}\left[ Z \left( X^{i_1}_{t_1} - X^{i_1}_0 \right)^{k_1} \cdots \left( X^{i_m}_{t_m} - X^{i_m}_0 \right)^{k_m} \right] \left( x^{i_1}_0 \right)^{K_1-k_1} \cdots \left( x^{i_m}_0 \right)^{K_m-k_m} = 0.
		\end{aligned}
	\end{equation*}
	However, since $\Pol(X)$ defined in \eqref{EqPropPolDense1} is by Proposition~\ref{PropPolDense} dense in $L^p(\Omega,\mathcal{F}_T,\mathbb{P})$, we conclude from the Hahn-Banach theorem that $Z = 0 \in L^q(\Omega,\mathcal{F}_T,\mathbb{P})$.
\end{proof}

\subsubsection{Proof of results in Section~\ref{SecLpHedging}}
\label{SecProofsLpHedging}

For the results in this section, we denote by $\mathbb{G} := (\mathcal{G}_t)_{t \in [0,T]}$ the $\mathbb{P}$-completion of the filtration\footnote{The $\mathbb{P}$-completion of the filtration generated by $X$ is defined as the smallest filtration $\mathbb{G} := (\mathcal{G}_t)_{t \in [0,T]}$ such that $\mathbb{G}$ is complete with respect to $(\Omega,\mathcal{A},\mathbb{P})$ and $X$ is $\mathbb{G}$-adapted.} generated by $X$. In particular, it holds for every $t \in [0,T)$ that $\bigcap_{s \in (t,T]} \mathcal{G}_s = \mathcal{F}_t$, whereas $\mathcal{G}_T = \mathcal{F}_T$. Moreover, we recall that the $\mathbb{F}$-predictable $\sigma$-algebra on $[0,T] \times \Omega$ coincides with the $\mathbb{G}$-predictable $\sigma$-algebra on $[0,T] \times \Omega$ (see \cite[Theorem~IV.97]{dellacherie78}). Furthermore, if for example the semimartingale $X$ is a strong Markov process, then $\mathbb{G} = \mathbb{F}$ (see \cite[Theorem~2.7.7]{karatzas98}).

\begin{lemma}
	\label{LemmaElem}
	Let $X$ be a continuous semimartingale satisfying Assumption~\ref{AssExpInt} and let $p \in [1,\infty)$. Then, the set of bounded elementary $\mathbb{F}$-predictable $\mathbb{R}^d$-valued processes
	\begin{equation}
		\label{EqLemmaElem1}
		\Theta^\infty_{e,d}(X) := \left\lbrace t \mapsto \sum_{k=1}^K \mathds{1}_{(t_{k-1},t_k]}(t) Z_k: 
		\begin{matrix}
			K \in \mathbb{N}, \, 0 = t_0 < t_1 < ... < t_K = T, \\
			Z_k \in L^\infty(\Omega,\mathcal{G}_{t_{k-1}},\mathbb{P};\mathbb{R}^d), \, k = 1,...,K
		\end{matrix}
		\right\rbrace
	\end{equation}
	is dense in $\Theta^p(X)$.
\end{lemma}
\begin{proof}
	First, we observe that $\Theta^\infty_{e,d}(X) \subseteq \Theta^p(X)$ holds by definition. In order to see that $\Theta^\infty_{e,d}(X)$ is dense in $\Theta^p(X)$, we first show that any given $\theta := \big( \theta^1_t, ..., \theta^d_t \big)_{t \in [0,T]}^\top \in \Theta^p(X)$ can be approximated by the sequence of bounded $\mathbb{F}$-predictable processes $\theta^{(n)} := \big( \mathds{1}_{\lbrace \vert \theta^1_t \vert \leq n \rbrace} \theta^1_t, ..., \mathds{1}_{\lbrace \vert \theta^d_t \vert \leq n \rbrace} \theta^d_t \big)_{t \in [0,T]}^\top$, $n \in \mathbb{N}$, with respect to $\Vert \cdot \Vert_{\Theta^p(X)}$. Indeed, by applying the dominated convergence theorem twice, it follows that\footnote{For every $i = 1,...,d$ the vector space $\Theta^p(X^i)$ consists of $\mathbb{F}$-predictable real-valued processes $\theta := (\theta_t)_{t \in [0,T]}$ such that $\Vert \theta \Vert_{\Theta^p(X^i)} := \mathbb{E}\big[ \big( \int_0^T \big\vert \theta^i_t dA^i_t \big\vert \big)^p \big]^{1/p} + \mathbb{E}\big[ \big( \int_0^T \big( \theta^i_t \big)^2 d\langle M^i \rangle_t \big)^{p/2} \big]^{1/p} < \infty$.}
	\begin{equation*}
		\begin{aligned}
			& \lim_{n \rightarrow \infty} \big\Vert \theta - \theta^{(n)} \big\Vert_{\Theta^p(X)} = \lim_{n \rightarrow \infty} \sum_{i=1}^d \big\Vert \theta^i - \theta^{(i,n)} \big\Vert_{\Theta^p(X^i)} \\
			& = \sum_{i=1}^d \lim_{n \rightarrow \infty} \mathbb{E}\left[ \left( \int_0^T \mathds{1}_{\lbrace \vert \theta^i_t \vert > n \rbrace} \left\vert \theta^i_t dA^i_t \right\vert \right)^p \right]^\frac{1}{p} + \sum_{i=1}^d \lim_{n \rightarrow \infty} \mathbb{E}\left[ \left( \int_0^T \mathds{1}_{\lbrace \vert \theta^i_t \vert > n \rbrace} \left( \theta^i_t \right)^2 d\langle M^i \rangle_t \right)^\frac{p}{2} \right]^\frac{1}{p} \\
			& = \sum_{i=1}^d \mathbb{E}\left[ \left( \lim_{n \rightarrow \infty} \int_0^T \mathds{1}_{\lbrace \vert \theta^i_t \vert > n \rbrace} \left\vert \theta^i_t dA^i_t \right\vert \right)^p \right]^\frac{1}{p} + \sum_{i=1}^d \mathbb{E}\left[ \left( \lim_{n \rightarrow \infty} \int_0^T \mathds{1}_{\lbrace \vert \theta^i_t \vert > n \rbrace} \left( \theta^i_t \right)^2 d\langle M^i \rangle_t \right)^\frac{p}{2} \right]^\frac{1}{p} = 0,
		\end{aligned}		
	\end{equation*}
	where $e_i$ denotes the $i$-th unit vector of $\mathbb{R}^d$. Hence, it suffices to show that every bounded $\mathbb{F}$-predictable process can be approximated by elementary $\mathbb{F}$-predictable processes from $\Theta^\infty_{e,d}(X)$.
	
	Now, for every fixed $i = 1,...,d$, we prove that the vector space of bounded $\mathbb{F}$-predictable real-valued processes is contained in $\mathcal{H}_i$ defined as the closure of $\Theta^\infty_{e,1}(X^i)$ (given as in \eqref{EqLemmaElem1} but with $d=1$) with respect to $\Vert \cdot \Vert_{\Theta^p(X^i)}$. To this end, we first observe that $\mathcal{E}_T := \lbrace \lbrace 0 \rbrace \times E: E \in \mathcal{G}_0 \rbrace \cup \lbrace (s,t] \times E: 0 \leq s < t \leq T, \, E \in \mathcal{G}_s \rbrace$ is a $\pi$-system with $[0,T] \times \Omega \in \mathcal{E}_T$, which generates the $\mathbb{F}$-predictable $\sigma$-algebra $\mathcal{P}_T := \sigma(\mathcal{E}_T)$ on $[0,T] \times \Omega$. Moreover, $\mathcal{H}_i$ is a vector space and for every $E \in \mathcal{E}_T$ it holds that $\mathds{1}_E \in \mathcal{H}_i$. In addition, for every sequence $(\vartheta^{(i,n)})_{n \in \mathbb{N}} \subseteq \mathcal{H}_i$ with $0 \leq \vartheta^{(i,n)}_t(\omega) \leq \vartheta^{(i,n+1)}_t(\omega)$ for all $(t,\omega) \in [0,T] \times \Omega$ and $n \in \mathbb{N}$ converging pointwise to some bounded $\mathbb{F}$-predictable real-valued process $\vartheta^i := \big( \vartheta^i_t \big)_{t \in [0,T]}$, we use the dominated convergence theorem twice to obtain that
	\begin{equation*}
		\begin{aligned}
			& \lim_{n \rightarrow \infty} \big\Vert \vartheta^i_t - \vartheta^{(i,n)}_t \big\Vert_{\Theta^p(X^i)} \\
			& \quad\quad = \lim_{n \rightarrow \infty} \mathbb{E}\left[ \left( \int_0^T \left\vert \left( \vartheta^i_t - \vartheta^{(i,n)}_t \right) dA^i_t \right\vert \right)^p \right]^\frac{1}{p} + \lim_{n \rightarrow \infty} \mathbb{E}\left[ \left( \int_0^T \left( \vartheta^i_t - \vartheta^{(i,n)}_t \right)^2 d\langle M^i \rangle_t \right)^\frac{p}{2} \right]^\frac{1}{p} \\
			& \quad\quad = \mathbb{E}\left[ \left( \lim_{n \rightarrow \infty} \int_0^T \left\vert \left( \vartheta^i_t - \vartheta^{(i,n)}_t \right) dA^i_t \right\vert \right)^p \right]^\frac{1}{p} + \mathbb{E}\left[ \left( \lim_{n \rightarrow \infty} \int_0^T \left( \vartheta^i_t - \vartheta^{(i,n)}_t \right)^2 d\langle M^i \rangle_t \right)^\frac{p}{2} \right]^\frac{1}{p} = 0,
		\end{aligned}
	\end{equation*}
	showing that $\vartheta^i \in \mathcal{H}_i$. Hence, we can apply the monotone class theorem (see \cite[Theorem~6.1.3]{durrett10}) to conclude that $\mathcal{H}_i$ contains all bounded $\mathbb{F}$-predictable real-valued processes.

	Finally, for every bounded $\mathbb{F}$-predictable process $\theta := \big( \theta^1_t, ..., \theta^d_t \big)_{t \in [0,T]}^\top$ and every $i = 1,...,d$, there exists by the previous step a sequence $\big( \theta^{(i,n)} \big)_{n \in \mathbb{N}} \subseteq \Theta^\infty_{e,1}(X^i)$ with $\lim_{n \rightarrow \infty} \big\Vert \theta^i - \theta^{(i,n)} \big\Vert_{\Theta^p(X^i)} = 0$. Hence, by defining $\theta^{(n)} := \big( \theta^{(1,n)}_t,...,\theta^{(d,n)}_t \big)_{t \in [0,T]}^\top \in \Theta^\infty_{e,d}(X)$, $n \in \mathbb{N}$, it follows that
	\begin{equation*}
		\lim_{n \rightarrow \infty} \big\Vert \theta - \theta^{(n)} \big\Vert_{\Theta^p(X)} = \sum_{i=1}^d \lim_{n \rightarrow \infty} \big\Vert \theta^i - \theta^{(n,i)} \big\Vert_{\Theta^p(X^i)} = 0,
	\end{equation*}
	which completes the proof.
\end{proof}

\begin{corollary}
	\label{CorLpDense}
	Let $X$ be a continuous semimartingale satisfying Assumption~\ref{AssExpInt}, let $p \in [1,\infty)$, and $t \in [0,T]$. Then, the following holds true:
	\begin{enumerate}
		\item\label{CorLpDense1} The set of polynomials
		\begin{equation*}
			\quad\quad\quad \Pol(X\vert_{[0,t]}) := \linspan\left\lbrace \left( X^{i_1}_{t_1} \right)^{k_1} \cdots \left( X^{i_m}_{t_m} \right)^{k_m}: \,\,
			\begin{matrix}
				m \in \mathbb{N}, \, (i_1,...,i_m) \in \lbrace 1,...,d \rbrace^m, \\
				(k_1,...,k_m) \in \mathbb{N}_0^m, \, (t_1,...,t_m) \in [0,t]^m
			\end{matrix}
			\right\rbrace
		\end{equation*}
		is dense in $L^p(\Omega,\mathcal{G}_t,\mathbb{P})$.
		\item\label{CorLpDense2} The direct sum $\bigoplus_{n \in \mathbb{N}_0} \big\lbrace J^\circ_n(g_n)_t: g_n \in L^{np}_{\diag}(X\vert_{[0,t]})^{\otimes n} \big\rbrace$ is dense in $L^p(\Omega,\mathcal{G}_t,\mathbb{P})$.
	\end{enumerate}
\end{corollary}
\begin{proof}
	For \ref{CorLpDense1}, we can follow the proof of Proposition~\ref{PropPolDense} (but with $t$ replacing $T$) and use that $\mathcal{G}_t$ is defined as the $\mathbb{P}$-completion of $\sigma(\lbrace X_s: s \in [0,t] \rbrace)$. For \ref{CorLpDense2}, we can follow the proof of Theorem~\ref{ThmChaos} (but with $t$ replacing $T$), where we now apply \ref{CorLpDense1} instead of Theorem~\ref{PropPolDense}.
\end{proof}

\begin{proposition}
	\label{PropGp}
	Let $X$ be a continuous semimartingale satisfying Assumption~\ref{AssExpInt} and let $p \in [1,\infty)$. Then, $\mathcal{G}^p(X) := \big\lbrace \int_0^T \theta_t^\top dX_t: \theta \in \Theta^p(X) \big\rbrace$ is contained in the closure of
	\begin{equation*}
		\mathcal{J}^p(X) := \left\lbrace \int_0^T \left( \vartheta^{g_{1:N}}_t \right)^\top dX_t: N, m_n \in \mathbb{N} \text{ and } g_{n,j,0},g_{n,j,1} \in L^{np}(X), \, n = 1,...,N, \, j = 1,...,m_n \right\rbrace
	\end{equation*}
	with respect to $\Vert \cdot \Vert_{L^p(\mathbb{P})}$, where $\vartheta^{g_{1:N}}_t := \sum_{n=1}^N \sum_{j=1}^{m_n} \frac{W(g_{n,j,0})_t^{n-1}}{(n-1)!} g_{n,j,1}(t)$, for $t \in [0,T]$.
\end{proposition}
\begin{proof}
	Fix some $\theta \in \Theta^p(X)$ and $\varepsilon > 0$. Then, by using Lemma~\ref{LemmaElem}, there exists an elementary $\mathbb{F}$-predictable process $\widetilde{\theta} \in \Theta^\infty_{e,d}(X)$ of the form $t \mapsto \sum_{k=1}^K \mathds{1}_{(t_{k-1},t_k]}(t) Z_k$, for some $K \in \mathbb{N}$, $0 = t_0 < t_1 < ... < t_K = T$, and $Z_k := (Z_{k,1},...,Z_{k,d})^\top \in L^\infty(\Omega,\mathcal{G}_{t_{k-1}},\mathbb{P};\mathbb{R}^d)$, $k = 1,...,K$, such that
	\begin{equation}
		\label{EqPropGpProof1}
		\big\Vert \theta - \widetilde{\theta} \big\Vert_{\Theta^p(X)} < \frac{\varepsilon}{2 \max(1,C_p)},
	\end{equation}
	where $C_p > 0$ is the constant in the upper Burkholder-Davis-Gundy inequality. Moreover, for every $k = 1,...,K$ and $i = 1,...,d$, we apply Corollary~\ref{CorLpDense}~\ref{CorLpDense2} on $Z_{k,i} \in L^\infty(\Omega,\mathcal{G}_{t_{k-1}},\mathbb{P}) \subseteq L^{2p}(\Omega,\mathcal{G}_{t_{k-1}},\mathbb{P})$ to obtain some $N, m_n \in \mathbb{N}$ and $h_{k,i,n,j} \in L^{2np}_{\diag}(X)$, $n = 0,...,N-1$ and $j = 1,...,m_n$, such that
	\begin{equation}
		\label{EqPropGpProof2}
		\mathbb{E}\left[ \left\vert Z_{k,i} - \sum_{n=0}^{N-1} \sum_{j=1}^{m_n} J^\circ_n\left( h_{k,i,n,j}^{\otimes n} \right)_{t_{k-1}} \right\vert^{2p} \right]^\frac{1}{2p} < \frac{\varepsilon}{2 \max(1,C_p) C_{A,M,p} d K},
	\end{equation}
	where $C_{A,M,p} := 1+\mathbb{E}\big[ \big( \int_0^T \Vert dA_t \Vert \big)^{2p} \big]^{1/(2p)} + \mathbb{E}\big[ \Vert \langle M \rangle_T \Vert_F^p \big]^{1/(2p)} > 0$. From this, we define for every $(k,i,j) \in \lbrace 1,...,K \rbrace \times \lbrace 1,...,d \rbrace \times \lbrace 1,...,m_n \rbrace$ the functions $g_{1,(k,i,j),0}, g_{1,(k,i,j),1} \in L^{np}(X)$ by
	\begin{equation*}
		\begin{aligned}
			\quad\quad\,\,\,\, [0,T] \ni t \quad \mapsto \quad g_{1,(k,i,j),0}(t) & := 0 \in \mathbb{R}^d \\
			\quad\quad\,\,\,\, [0,T] \ni t \quad \mapsto \quad g_{1,(k,i,j),1}(t) & := h_{k,i,0,j} \mathds{1}_{(t_{k-1},t_k]} e_i \in \mathbb{R}^d,
		\end{aligned}
	\end{equation*}
	where $e_i$ denotes the $i$-th unit vector of $\mathbb{R}^d$. Moreover, for every $n = 2,...,N$ and $(k,i,j) \in \lbrace 1,...,K \rbrace \times \lbrace 1,...,d \rbrace \times \lbrace 1,...,m_n \rbrace$, we define the functions $g_{n,(k,i,j),0}, g_{n,(k,i,j),1} \in L^{np}(X)$ by
	\begin{equation*}
		\begin{aligned}
			[0,T] \ni t \quad \mapsto \quad g_{n,(k,i,j),0}(t) & := h_{k,i,n-1,j} \in \mathbb{R}^d \\
			[0,T] \ni t \quad \mapsto \quad g_{n,(k,i,j),1}(t) & := \mathds{1}_{(t_{k-1},t_k]} e_i \in \mathbb{R}^d
		\end{aligned}
	\end{equation*}
	In addition, for every $k = 1,...,K$ and $i = 1,...,d$, we define $\widetilde{Z}_{k,i} := \sum_{n=0}^{N-1} \sum_{j=1}^{m_n} J_n\big(  h_{k,i,n,j}^{\otimes n} \big)_{t_{k-1}}$. Furthermore, we define $\vartheta^{g_{1:N}} := \big( \vartheta^{g_{1:N},1}_t, ..., \vartheta^{g_{1:N},1}_t \big)_{t \in [0,T]}^\top$ for every $i = 1,...,d$ and $t \in [0,T]$ by
	\begin{equation}
		\label{EqPropGpProof3}
		\begin{aligned}
			\vartheta^{g_{1:N},i}_t & := \sum_{n=1}^N \sum_{j=1}^{m_n} \frac{W\left( g_{n,(k,i,j),0} \right)_{t_{k-1}}^{n-1}}{(n-1)!} g_{n,(k,i,j),1}(t) \\
			& = \sum_{j=1}^{m_n} h_{k,i,0,j} \mathds{1}_{(t_{k-1},t_k]}(t) e_i + \sum_{n=2}^N \sum_{j=1}^{m_n} \frac{W\left( h_{k,i,n-1,j} \right)_{t_{k-1}}^{n-1}}{(n-1)!} \mathds{1}_{(t_{k-1},t_k]}(t) e_i \\
			& = \sum_{j=1}^{m_n} h_{k,i,0,j} \mathds{1}_{(t_{k-1},t_k]}(t) e_i + \sum_{n=2}^N \sum_{j=1}^{m_n} J^\circ_{n-1}\left(  h_{k,i,n-1,j}^{\otimes (n-1)} \right)_{t_{k-1}} \mathds{1}_{(t_{k-1},t_k]}(t) e_i \\
			& = \left( \sum_{n=0}^{N-1} \sum_{j=1}^{m_n} J^\circ_n\left(  h_{k,i,n,j}^{\otimes n} \right)_{t_{k-1}} \right) \mathds{1}_{(t_{k-1},t_k]}(t) e_i = \mathds{1}_{(t_{k-1},t_k]}(t) \widetilde{Z}_{k,i} e_i,
		\end{aligned}
	\end{equation}
	where the third equality follows from Proposition~\ref{PropMon}. Hence, by using \eqref{EqPropGpProof3}, that $\big\vert x^\top y \big\vert \leq \Vert x \Vert \Vert y \Vert$ and $x^\top \widetilde{M} x \leq \Vert x \Vert^2 \Vert \widetilde{M} \Vert_F$, H\"older's inequality, Minkowski's inequality together with $\big( \sum_{k=1}^K c_{k,i} \big)^{1/2} \leq \sum_{k=1}^K c_{k,i}^{1/2}$ for all $c_{k,i} \geq 0$, Proposition~\ref{PropMon}, and the inequality~\eqref{EqPropGpProof2}, it follows that
	\begin{equation}
		\label{EqPropGpProof4}
		\begin{aligned}
			& \left\Vert \widetilde{\theta} - \vartheta^{g_{1:N}} \right\Vert_{\Theta^p(X)} \\
			& = \sum_{i=1}^d \mathbb{E}\left[ \left( \int_0^T \left\vert \left( \widetilde{\theta}^i_t - \vartheta^{g_{1:N},i}_t \right) dA^i_t \right\vert \right)^p \right]^\frac{1}{p} + \sum_{i=1}^d \mathbb{E}\left[ \left( \int_0^T \left( \widetilde{\theta}^i_t - \vartheta^{g_{1:N},i}_t \right)^2 d\langle M^i \rangle_t \right)^\frac{p}{2} \right]^\frac{1}{p} \\
			& = \sum_{i=1}^d \mathbb{E}\left[ \left( \int_0^T \left\vert \left( \sum_{k=1}^K \mathds{1}_{(t_{k-1},t_k]}(t) \left( Z_{k,i} - \widetilde{Z}_{k,i} \right) \right) dA^i_t \right\vert \right)^p \right]^\frac{1}{p} \\
			& \quad\quad + \sum_{i=1}^d \mathbb{E}\left[ \left( \int_0^T \left( \sum_{k=1}^K \mathds{1}_{(t_{k-1},t_k]}(t) \left( Z_{k,i} - \widetilde{Z}_{k,i} \right) \right)^2 d\langle M^i \rangle_t \right)^\frac{p}{2} \right]^\frac{1}{p} \\
			& \leq \sum_{i=1}^d \mathbb{E}\left[ \left( \sum_{k=1}^K \left\vert Z_{k,i} - \widetilde{Z}_{k,i} \right\vert \int_{t_{k-1}}^{t_k} \left\vert dA^i_t \right\vert \right)^p \right]^\frac{1}{p} + \sum_{i=1}^d \mathbb{E}\left[ \left( \sum_{k=1}^K \left\vert Z_{k,i} - \widetilde{Z}_{k,i} \right\vert^2 \left( \langle M^i \rangle_{t_k} - \langle M^i \rangle_{t_{k-1}} \right) \right)^\frac{p}{2} \right]^\frac{1}{p} \\
			& \leq \sum_{i=1}^d \sum_{k=1}^K \mathbb{E}\left[ \vert Z_{k,i} - \widetilde{Z}_{k,i} \vert^{2p} \right] \mathbb{E}\left[ \left( \int_0^T \Vert dA_t \Vert \right)^{2p} \right]^\frac{1}{2p} + \sum_{i=1}^d \sum_{k=1}^K \mathbb{E}\left[ \vert Z_{k,i} - \widetilde{Z}_{k,i} \vert^{2p} \right] \mathbb{E}\left[ \Vert \langle M \rangle_T \Vert_F^p \right]^\frac{1}{2p} \\
			& \leq C_{A,M,p} \sum_{i=1}^d \sum_{k=1}^K \mathbb{E}\left[ \left\vert Z_{k,i} - \sum_{n=0}^{N-1} \sum_{j=1}^{m_n} J^\circ_n\left( h_{k,i,n,j}^{\otimes n} \right)_{t_{k-1}} \right\vert^{2p} \right]^\frac{1}{2p} \\
			& < C_{A,M,p} dK \frac{\varepsilon}{2 \max(1,C_p) C_{A,M,p} dK} = \frac{\varepsilon}{2 \max(1,C_p)}.
		\end{aligned}
	\end{equation}
	Thus, by using Minkowski's inequality, the Burkholder-Davis-Gundy inequality with constant $C_p > 0$, and the inequalities \eqref{EqPropGpProof1}+\eqref{EqPropGpProof4}, it follows that 
	\begin{equation*}
		\begin{aligned}
			& \mathbb{E}\left[ \left\vert \int_0^T \theta_t^\top dX_t - \int_0^T \left( \vartheta^{g_{1:N}}_t \right)^\top dX_t \right\vert^p \right]^\frac{1}{p} \leq \sum_{i=1}^d \mathbb{E}\left[ \left\vert \int_0^T \left( \theta^i_t - \vartheta^{g_{1:N},i}_t \right) dX^i_t \right\vert^p \right]^\frac{1}{p} \\
			& \quad\quad \leq \sum_{i=1}^d \mathbb{E}\left[ \left\vert \int_0^T \left( \theta^i_t - \vartheta^{g_{1:N},i}_t \right) dA^i_t \right\vert^p \right]^\frac{1}{p} + \sum_{i=1}^d \mathbb{E}\left[ \left\vert \int_0^T \left( \theta^i_t - \vartheta^{g_{1:N},i}_t \right) dM^i_t \right\vert^p \right]^\frac{1}{p} \\
			& \quad\quad \leq \sum_{i=1}^d \mathbb{E}\left[ \left\vert \int_0^T \left( \theta^i_t - \vartheta^{g_{1:N},i}_t \right) dA^i_t \right\vert^p \right]^\frac{1}{p} + C_p \sum_{i=1}^d \mathbb{E}\left[ \left( \int_0^T \left( \theta^i_t - \vartheta^{g_{1:N},i}_t \right)^2 d\langle M^i \rangle_t \right)^\frac{p}{2} \right]^\frac{1}{p} \\
			& \quad\quad \leq \max(1,C_p) \sum_{i=1}^d \big\Vert \theta^i - \vartheta^{g_{1:N},i} \big\Vert_{\Theta^p(X^i)} = \max(1,C_p) \big\Vert \theta - \vartheta^{g_{1:N}} \big\Vert_{\Theta^p(X)} \\
			& \quad\quad \leq \max(1,C_p) \big\Vert \theta - \widetilde{\theta} \big\Vert_{\Theta^p(X)} + \max(1,C_p) \big\Vert \widetilde{\theta} - \vartheta^{g_{1:N}} \big\Vert_{\Theta^p(X)} \\
			& \quad\quad < \max(1,C_p) \frac{\varepsilon}{2 \max(1,C_p)} + \max(1,C_p) \frac{\varepsilon}{2 \max(1,C_p)} = \varepsilon.
		\end{aligned}
	\end{equation*}
	Since $\theta \in \Theta^p(X)$ and $\varepsilon > 0$ were chosen arbitrarily, this shows that $\mathcal{G}^p(X)$ is contained in the closure of $\mathcal{J}^p(X)$ with respect to $\Vert \cdot \Vert_{L^p(\mathbb{P})}$.
\end{proof}

\begin{proof}[Proof of Theorem~\ref{ThmLpHedging}]	
	Fix some $G \in L^p(\Omega,\mathcal{F}_T,\mathbb{P})$ and $\varepsilon > 0$. Then, by definition of the infimum, there exists some $(\widetilde{c},\widetilde{\theta}) \in \mathbb{R} \times \Theta^p(X)$ such that
	\begin{equation}
		\label{EqThmLpHedgingProof1}
		\left\Vert G - \widetilde{c} - \int_0^T \widetilde{\theta}_t^\top dX_t \right\Vert_{L^p(\mathbb{P})} - \frac{\varepsilon}{2} < + \inf_{(c,\theta) \in \mathbb{R} \times \Theta^p(X)} \left\Vert G - c - \int_0^T \theta_t^\top dX_t \right\Vert_{L^p(\mathbb{P})}.
	\end{equation}
	Moreover, by applying Proposition~\ref{PropGp} to $\int_0^T \widetilde{\theta}_t^\top dX_t \in \mathcal{G}^p(X)$, there exists some $N,m_n \in \mathbb{N}$ and $g_{n,j,0},g_{n,j,1} \in L^{np}(X)$, $n = 1,...,N$ and $j = 1,...,m_n$ such that $\int_0^T \left( \vartheta^{g_{1:N}}_t \right)^\top dX_t$ with $\vartheta^{g_{1:N}}_t := \sum_{n=1}^N \sum_{j=1}^{m_n} \frac{W(g_{n,j,0})_t^{n-1}}{(n-1)!} g_{n,j,1}(t)$, $t \in [0,T]$, satisfies
	\begin{equation}
		\label{EqThmLpHedgingProof2}
		\left\Vert \int_0^T \widetilde{\theta}_t^\top dX_t - \int_0^T \left( \vartheta^{g_{1:N}}_t \right)^\top dX_t \right\Vert_{L^p(\mathbb{P})} < \frac{\varepsilon}{2}.
	\end{equation}
	Moreover, we set $g_0 := \widetilde{c} \in \mathbb{R}$. Hence, by using the inequality~\eqref{EqThmLpHedgingProof1}, the reverse triangle inequality of $\Vert \cdot \Vert_{L^p(\mathbb{P})}$, and the inequality~\eqref{EqThmLpHedgingProof2}, it follows that
	\begin{equation*}
		\begin{aligned}
			& \left\Vert G - g_0 - \int_0^T \left( \vartheta^{g_{1:N}}_t \right)^\top dX_t \right\Vert_{L^p(\mathbb{P})} - \inf_{(c,\theta) \in \mathbb{R} \times \Theta^p(X)} \left\Vert G - c - \int_0^T \theta_t^\top dX_t \right\Vert_{L^p(\mathbb{P})} \\
			& \quad\quad < \left\Vert G - g_0 - \int_0^T \left( \vartheta^{g_{1:N}}_t \right)^\top dX_t \right\Vert_{L^p(\mathbb{P})} - \left\Vert G - \widetilde{c} - \int_0^T \widetilde{\theta}_t^\top dX_t \right\Vert_{L^p(\mathbb{P})} + \frac{\varepsilon}{2} \\
			& \quad\quad \leq \left\Vert \int_0^T \widetilde{\theta}_t^\top dX_t - \int_0^T \left( \vartheta^{g_{1:N}}_t \right)^\top dX_t \right\Vert_{L^p(\mathbb{P})} + \frac{\varepsilon}{2} < \frac{\varepsilon}{2} + \frac{\varepsilon}{2} = \varepsilon,
		\end{aligned}
	\end{equation*}
	which completes the proof.
\end{proof}

\subsection{Proof of results in Section~\ref{SecUATs}}
\label{SecProofsUATs}

\subsubsection{Proof of results in Section~\ref{SecUAT}}
\label{SecProofsUAT}

\begin{lemma}
	\label{LemmaContDenseLp}
	Let $X$ be a continuous semimartingale satisfying Assumption~\ref{AssExpInt} and let $p \in [1,\infty)$. Then, $C([0,T];\mathbb{R}^d)$ is dense in $L^p(X)$.
\end{lemma}
\begin{proof}
	First, we show that the set of step functions is dense in $L^p(X)$, where a step function is of the form  $[0,T] \ni t \mapsto \mathds{1}_{\lbrace 0 \rbrace}(t) v_0 + \sum_{l=1}^m \mathds{1}_{(t_{l-1},t_l]}(t) v_l \in \mathbb{R}^d$, for some $m \in \mathbb{N}$, $0 = t_0 < t_1 < ... < t_m = T$, and $v_0,...,v_m \in \mathbb{R}^d$. To this end, we fix some $g \in L^p(X)$ and $\varepsilon > 0$. Then, by using that $g \in D_l([0,T];\mathbb{R}^d)$, i.e.~that $[0,T] \ni t \mapsto g(T-t) \in \mathbb{R}^d$ is c\`adl\`ag, we can apply \cite[Lemma~12.1]{billingsley99} to conclude that there exists some $0 = t_0 < t_1 < ... < t_m = T$ and $v_1,...,v_m \in \mathbb{R}^d$ such that for every $l = 1,...,m$ it holds that $\sup_{u,v \in (t_{l-1},t_l]} \Vert g(u) - g(v) \Vert < C_{p,T,X}^{-1} \varepsilon$, where the constant $C_{p,T,X} > 0$ was introduced in Lemma~\ref{LemmaLpNorm}~\ref{LemmaLpNorm2}. Hence, by using Lemma~\ref{LemmaLpNorm}~\ref{LemmaLpNorm2}, it follows for the step function $[0,T] \ni t \mapsto s(t) := \mathds{1}_{\lbrace 0 \rbrace}(t) g(0) + \sum_{l=1}^m \mathds{1}_{(t_{l-1},t_l]}(t) g(t_l) \in \mathbb{R}^d$ that
	\begin{equation*}
		\begin{aligned}
			\Vert g - s \Vert_{L^p(X)} & \leq C_{p,T,X} \Vert g - s \Vert_\infty \leq C_{p,T,X} \left( \Vert g(0) - s(0) \Vert + \max_{l=1,...,m} \sup_{u \in (t_{l-1},t_l]} \Vert g(u) - s(u) \Vert \right) \\
			& = C_{p,T,X} \max_{l=1,...,m} \sup_{u \in (t_{l-1},t_l]} \Vert g(u) - g(t_l) \Vert < C_{p,T,X} \frac{\varepsilon}{C_{p,T,X}} = \varepsilon.
		\end{aligned}
	\end{equation*}
	Since $\varepsilon > 0$ and $g \in L^p(X)$ were chosen arbitrarily, this shows that step functions are dense in $L^p(X)$.
	
	Next, we show that every fixed step function $[0,T] \ni t \mapsto s(t) := \mathds{1}_{\lbrace 0 \rbrace}(t) v_0 + \sum_{l=1}^m \mathds{1}_{(t_{l-1},t_l]}(t) v_l \in \mathbb{R}^d$ can be approximated by some element in $C([0,T];\mathbb{R}^d)$ with respect to $\Vert \cdot \Vert_{L^p(X)}$, where $m \in \mathbb{N}$, $0 = t_0 < t_1 < ... < t_m = T$, and $v_0,...,v_m \in \mathbb{R}^d$. To this end, we define for every fixed $\delta \in (0,t_{\min})$ with $t_{\min} := \min_{l=1,...,m} (t_l-t_{l-1})$ the function
	\begin{equation*}
		[0,T] \ni t \quad \mapsto \quad f_\delta(t) :=
		\begin{cases}
			v_{l-1} (1-\delta) + v_l \delta, & t \in (t_{l-1},t_{l-1}+\delta], \\
			v_l, & t \in (t_{l-1}+\delta,t_l].
		\end{cases}
	\end{equation*}
	Then, $f_\delta: [0,T] \rightarrow \mathbb{R}^d$ is continuous with $\Vert f_\delta \Vert_\infty = \Vert s \Vert_\infty$ and converges pointwise to the fixed step function $s: [0,T] \rightarrow \mathbb{R}^d$, as $\delta \rightarrow 0$. Moreover, by using that $X$ is a continuous semimartingale, there exists some $E \in \mathcal{A}$ with $\mathbb{P}[E]=1$ such that for every $\omega \in E$ the paths $t \mapsto X_t(\omega)$, $t \mapsto A_t(\omega)$, and $t \mapsto \langle M \rangle_t(\omega)$ are continuous. Now, for every fixed $\omega \in E$, we use the dominated convergence theorem (with $f_\delta \rightarrow s$ pointwise, as $\delta \rightarrow 0$, and $\Vert f_\delta \Vert_\infty = \Vert s \Vert_\infty$ for all $\delta > 0$) to conclude that
	\begin{equation}
		\label{EqLemmaContDenseLpProof1}
		\lim_{\delta \rightarrow 0} \int_0^T \Vert s(t) - f_\delta(t) \Vert \Vert dA_t(\omega) \Vert = 0 \quad \text{and} \quad \lim_{\delta \rightarrow 0} \int_0^T (s(t)-f_\delta(t))^\top d\langle M \rangle_t(\omega) (s(t)-f_\delta(t)) = 0.
	\end{equation}
	In addition, by using \eqref{EqLemmaLpIneqProof2}, we observe for every $\delta \in (0,1)$ that, $\mathbb{P}$-a.s.,
	\begin{equation}
		\label{EqLemmaContDenseLpProof2}
		\begin{aligned}
			\left( \int_0^T \Vert s(t) - f_\delta(t) \Vert \Vert dA_t \Vert \right)^p & \leq (2 \Vert s \Vert_\infty)^p \left( \int_0^T \Vert dA_t \Vert \right)^p, \quad \text{and} \\
			\left( \int_0^T (s(t)-f_\delta(t))^\top d\langle M \rangle_t (s(t)-f_\delta(t)) \right)^\frac{p}{2} & \leq (3d)^\frac{p}{2} (2\Vert s \Vert_\infty)^p \Vert \langle M \rangle_T \Vert_F^\frac{p}{2},
		\end{aligned}
	\end{equation}
	where both right-hand sides do not depend on $\delta \in (0,1)$ and are $\mathbb{P}$-integrable due to \eqref{EqLemmaLpIneqProof1} (using that $X$ satisfies Assumption~\ref{AssExpInt}). Hence, by another application of the dominated convergence theorem (with $\big( \int_0^T \Vert s(t) - f_\delta \Vert \Vert dA_t \Vert \big)^p \rightarrow 0$ and $\big( \int_0^T (s(t)-f_\delta(t))^\top d\langle M \rangle_t (s(t)-f_\delta(t)) \big)^{p/2} \rightarrow 0$, as $\delta \rightarrow 0$, $\mathbb{P}$-a.s. (see \eqref{EqLemmaContDenseLpProof1}), and \eqref{EqLemmaContDenseLpProof2}), we have
	\begin{equation*}
		\begin{aligned}
			\lim_{\delta \rightarrow 0} \Vert s - f_\delta \Vert_{L^p(X)} & \leq \lim_{\delta \rightarrow 0} \mathbb{E}\left[ \left( \int_0^T \Vert s(t)-f_\delta(t) \Vert \Vert dA_t \Vert \right)^p \right]^\frac{1}{p} \\
			& \quad\quad + \lim_{\delta \rightarrow 0} \mathbb{E}\left[ \left( \int_0^T (s(t)-f_\delta(t))^\top d\langle M \rangle_t (s(t)-f_\delta(t)) \right)^\frac{p}{2} \right]^\frac{1}{p} = 0,
		\end{aligned}
	\end{equation*}
	which completes the proof.
\end{proof}

\begin{lemma}
	\label{LemmaUAT}
	Let $X$ be a continuous semimartingale satisfying Assumption~\ref{AssExpInt}, let $p \in [1,\infty)$, and $\rho \in C(\mathbb{R})$ be non-polynomial. Then, $\mathcal{NN}^\rho_{d,1}$ is dense in $L^p(X)$.
\end{lemma}
\begin{proof}
	For $p \in [1,\infty)$, let $g \in L^p(X)$ and fix some $\varepsilon > 0$. Then, by using Lemma~\ref{LemmaContDenseLp}, there exists some $f := (f_1,...,f_d)^\top \in C([0,T];\mathbb{R}^d)$ such that
	\begin{equation}
		\label{EqLemmaUATProof1}
		\Vert g - f \Vert_{L^p(X)} < \frac{\varepsilon}{2}.
	\end{equation}
	Moreover, for every fixed $i = 1,...,d$, we apply the universal approximation theorem in\cite[Theorem~1]{leshno93} to conclude that there exists a neural network $\mathbb{R} \ni t \mapsto \varphi_i(t) := \sum_{n=1}^N y_{n,i} \rho(a_{n,i} t + b_{n,i}) \in \mathbb{R}$, for some $a_{n,i},,b_{n,i}, y_{n,i} \in \mathbb{R}$, $n = 1,...,N$, such that
	\begin{equation}
		\label{EqLemmaUATProof2}
		\Vert f_i - \varphi_i \Vert_\infty := \sup_{t \in [0,T]} \vert g_i(t) - \varphi_i(t) \vert < \frac{\varepsilon}{2 C_{p,T,X} \sqrt{d}},
	\end{equation}
	where the constant $C_{p,T,X} > 0$ was introduced in Lemma~\ref{LemmaLpNorm}~\ref{LemmaLpNorm2}. From this, we define the neural network $[0,T] \ni t \mapsto \varphi := (\varphi_1,...,\varphi_d)^\top = \sum_{n=1}^N y_n \rho(a_n t + b_n) \in \mathbb{R}^d$ satisfying $\varphi \in \mathcal{NN}^\rho_{d,1}$, where $\rho \in C(\mathbb{R})$ is applied componentwise, and where $a_n := (a_{n,1},...,a_{n,d})^\top \in \mathbb{R}^d$, $b_n := (b_{n,1},...,b_{n,d})^\top \in \mathbb{R}^d$, and $y_n := (y_{n,1},...,y_{n,d})^\top \in \mathbb{R}^d$, $n = 1,...,N$. Then, by using Lemma~\ref{LemmaLpNorm}~\ref{LemmaLpNorm2}, Minkowski's integral inequality, and the inequalities \eqref{EqLemmaUATProof1}+\eqref{EqLemmaUATProof2}, it follows that
	\begin{equation*}
		\begin{aligned}
			\Vert g - \varphi \Vert_{L^p(X)} & \leq \Vert g - f \Vert_{L^p(X)} + \Vert f - \varphi \Vert_{L^p(X)} \\
			& \leq \Vert g - f \Vert_{L^p(X)} + C_{p,T,X} \Vert f - \varphi \Vert_\infty \\
			& \leq \Vert g - f \Vert_{L^p(X)} + C_{p,T,X} \sup_{t \in [0,T]} \left( \sum_{i=1}^d \vert f_i(t) - \varphi_i(t) \vert^2 \right)^\frac{1}{2} \\
			& \leq \Vert g - f \Vert_{L^p(X)} + C_{p,T,X} \sqrt{d} \max_{i=1,...,d} \sup_{t \in [0,T]} \vert f_i(t) - \varphi_i(t) \vert \\
			& < \frac{\varepsilon}{2} + C_{p,T,X} \sqrt{d} \frac{\varepsilon}{2 C_{p,T,X} \sqrt{d}} = \varepsilon.
		\end{aligned}
	\end{equation*}
	Since $\varepsilon > 0$ and $g \in L^p(X)$ were chosen arbitrarily, this shows that $\mathcal{NN}^\rho_{d,1}$ is dense in $L^p(X)$.
\end{proof}

\begin{proof}[Proof of Proposition~\ref{PropUAT}]
	For $n = 0$ and $p \in [1,\infty)$, we first observe that $\mathcal{NN}^\rho_{d,0} := \mathbb{R}$ is dense in $L^{np}(X)^{\otimes 0} := \mathbb{R}$. Moreover, for $n = 1$, the conclusion follows from Lemma~\ref{LemmaUAT} by using the identification $L^{np}(X)^{\otimes 1} \cong L^p(X)$. Now, for $n \geq 2$, we fix some $\varepsilon \in (0,1)$ and $g \in L^{np}_{\diag}(X)^{\otimes n}$, which has a representation $g = \sum_{j=1}^m g_j^{\otimes n}$, for some $g_1,...,g_m \in L^{np}(X)$, such that
	\begin{equation}
		\label{EqPropUATProof1}
		\left\Vert g - \sum_{j=1}^m g_j^{\otimes n} \right\Vert_{L^{np}(X)^{\otimes n}} < \frac{\varepsilon}{2}.
	\end{equation}
	Then, for every fixed $j = 1,...,m$, we apply Lemma~\ref{LemmaUAT} to obtain some $\varphi_j \in \mathcal{NN}^\rho_{d,1}$ such that
	\begin{equation*}
		\Vert g_j - \varphi_j \Vert_{L^{np}(X)} < \frac{\varepsilon}{2 m n \left( 1 + \Vert g_j \Vert_{L^{np}(X)} \right)^{n-1}} \leq 1,
	\end{equation*}
	which implies that $\Vert \varphi_j \Vert_{L^{np}(X)} \leq 1 + \Vert g_j \Vert_{L^{np}(X)}$. Using this, the telescoping sum $g_j^{\otimes n} - \varphi_j^{\otimes n} = \sum_{l=1}^n g_j^{\otimes (n-l)} \otimes (g_j - \varphi_j) \otimes \varphi_j^{\otimes (l-1)}$, the triangle inequality, and Lemma~\ref{LemmaLpXnNorm}, it follows that
	\begin{equation}
		\label{EqPropUATProof2}
		\begin{aligned}
			\left\Vert g_j^{\otimes n} - \varphi_j^{\otimes n} \right\Vert_{L^{np}(X)^{\otimes n}} & \leq \sum_{l=1}^n \left\Vert g_j^{\otimes (n-l)} \otimes \left( g_j - \varphi_j \right) \otimes g_j^{\otimes (l-1)} \right\Vert_{L^{np}(X)^{\otimes n}} \\
			& \leq \sum_{l=1}^n \Vert g_j \Vert_{L^{np}(X)}^{n-l} \Vert g_j - \varphi_j \Vert_{L^{np}(X)} \Vert \varphi_j \Vert_{L^{np}(X)}^{l-1} \\
			& \leq n \left( 1 + \Vert g_j \Vert_{L^{np}(X)} \right)^{n-1} \Vert g_j - \varphi_j \Vert_{L^{np}(X)} < \frac{\varepsilon}{2m}.
		\end{aligned}
	\end{equation}
	Hence, by using \eqref{EqPropUATProof1} and \eqref{EqPropUATProof2}, we conclude for $\varphi := \sum_{j=1}^m \varphi_j^{\otimes n} \in \mathcal{NN}^\rho_{d,n}$ that
	\begin{equation*}
		\Vert g - \varphi \Vert_{L^{np}(X)^{\otimes n}} \leq \left\Vert g - \sum_{j=1}^m g_j^{\otimes n} \right\Vert_{L^{np}(X)^{\otimes n}} + \sum_{j=1}^m \left\Vert g_j^{\otimes n} - \varphi_j^{\otimes n} \right\Vert_{L^{np}(X)^{\otimes n}} < \frac{\varepsilon}{2} + \sum_{j=1}^m \frac{\varepsilon}{2m} = \varepsilon.
	\end{equation*}
	Since $\varepsilon > 0$ and $g \in L^{np}_{\diag}(X)^{\otimes n}$ were chosen arbitrarily, $\mathcal{NN}^\rho_{d,n}$ is dense in $L^{np}_{\diag}(X)^{\otimes n}$.
\end{proof}

\begin{proof}[Proof of Theorem~\ref{ThmUAT}]
	Fix some $G \in L^p(\Omega,\mathcal{F}_T,\mathbb{P})$ and $\varepsilon > 0$. Then, by using Theorem~\ref{ThmChaos}, there exists some $N \in \mathbb{N}$ and $g_n \in L^{np}_{\diag}(X)^{\otimes n}$, $n = 0,...,N$, such that
	\begin{equation}
		\label{EqThmUATProof1}
		\left\Vert G - \sum_{n=0}^N J^\circ_n(g_n)_T \right\Vert_{L^p(\mathbb{P})} < \frac{\varepsilon}{2}.
	\end{equation}
	Now, for every $n = 1,...,N$, we apply Proposition~\ref{PropUAT} to obtain some $\varphi_n \in \mathcal{NN}^\rho_{d,n}$ such that
	\begin{equation}
		\label{EqThmUATProof2}
		\left\Vert g_n - \varphi_n \right\Vert_{L^{np}(X)^{\otimes n}} < \frac{\varepsilon}{2 C_{n,p} N},
	\end{equation}
	where the constant $C_{n,p} > 0$ was introduced in Lemma~\ref{LemmaItIntLinearBDG}~\ref{LemmaItIntLinearBDG3}. Hence, by using $\varphi_0 := g_0 \in \mathbb{R}$, Minkowksi's inequality, Lemma~\ref{LemmaItIntLinearBDG}~\ref{LemmaItIntLinearBDG3}, and the inequalities \eqref{EqThmUATProof1}+\eqref{EqThmUATProof2}, it follows that
	\begin{equation*}
		\begin{aligned}
			\left\Vert G - \sum_{n=0}^N J^\circ_n(\varphi_n)_T \right\Vert_{L^p(\mathbb{P})} & \leq \left\Vert G - \sum_{n=0}^N J^\circ_n(g_n)_T \right\Vert_{L^p(\mathbb{P})} + \left\Vert \sum_{n=0}^N J^\circ_n(g_n-\varphi_n)_T \right\Vert_{L^p(\mathbb{P})} \\
			& \leq \left\Vert G - \sum_{n=0}^N J^\circ_n(g_n)_T \right\Vert_{L^p(\mathbb{P})} + \sum_{n=1}^N \left\Vert J^\circ_n(g_n-\varphi_n)_T \right\Vert_{L^p(\mathbb{P})} \\
			& \leq \left\Vert G - \sum_{n=0}^N J^\circ_n(g_n)_T \right\Vert_{L^p(\mathbb{P})} + \sum_{n=1}^N C_{n,p} \Vert g_n - \varphi_n \Vert_{L^{np}(X)^{\otimes n}} \\
			& < \frac{\varepsilon}{2} + \sum_{n=1}^N C_{n,p} \frac{\varepsilon}{2 C_{n,p} N} = \varepsilon,
		\end{aligned}
	\end{equation*}
	which completes the proof.
\end{proof}

\subsubsection{Proof of results in Section~\ref{SecRandUAT}}
\label{SecProofsRandUAT}

\begin{lemma}
	\label{LemmaRandNNWellDef}
	Let $X$ be a continuous semimartingale satisfying Assumption~\ref{AssExpInt}, let $p,r \in [1,\infty)$, $n \in \mathbb{N}$, let $\rho \in C(\mathbb{R})$, and let Assumption~\ref{AssCondCDF} hold. Then, $\mathcal{RN}^\rho_{d,n} \subseteq L^r(\widetilde{\Omega},\widetilde{\mathcal{A}},\widetilde{\mathbb{P}};\overline{L^{np}(X)^{\otimes n}})$.
\end{lemma}
\begin{proof}
	First, we observe that $(\overline{L^{np}(X)^{\otimes n}},\Vert \cdot \Vert_{L^{np}(X)^{\otimes n}})$ is as completion of $L^{np}(X)^{\otimes n}$ a Banach space, which we claim to be separable. To this end, we use the polynomial functions $\Pol_\mathbb{Q}([0,T];\mathbb{R}^d) := \big\lbrace [0,T] \ni t \mapsto \sum_{n=0}^N q_n t^n \in \mathbb{R}^d: N \in \mathbb{N}_0, \, q_1,...,q_N \in \mathbb{Q}^d \big\rbrace \subseteq D_l([0,T];\mathbb{R}^d)$ to define the countable set
	\begin{equation*}
		\Pol_\mathbb{Q}([0,T];\mathbb{R}^d)^{\otimes n} := \left\lbrace p_1 \otimes \cdots \otimes p_n \in L^{np}(X)^{\otimes n}: p_1,...,p_n \in \Pol_\mathbb{Q}([0,T];\mathbb{R}^d) \right\rbrace.
	\end{equation*}
	Then, by applying the Stone-Weierstrass theorem componentwise, it follows that $\Pol_\mathbb{Q}([0,T];\mathbb{R}^d)$ is dense in $C([0,T];\mathbb{R}^d)$ with respect to $\Vert \cdot \Vert_\infty$. Hence, by combining Lemma~\ref{LemmaLpNorm}~\ref{LemmaLpNorm2} with Lemma~\ref{LemmaContDenseLp} (as in the proof of Lemma~\ref{LemmaUAT}), we conclude that $\Pol_\mathbb{Q}([0,T];\mathbb{R}^d)$ is dense in $L^p(X)$. Thus, by similar arguments as in the proof of Proposition~\ref{PropUAT}, it follows that $\Pol_\mathbb{Q}([0,T];\mathbb{R}^d)^{\otimes n}$ is dense in $L^{np}(X)^{\otimes n}$.
	
	Next, we show that every $\widetilde{\varphi} \in \mathcal{RN}^\rho_{d,n}$ is $\widetilde{\mathbb{P}}$-strongly measurable. By linearity of $\mathcal{RN}^\rho_{d,n}$, it suffices to prove for every fixed $j \in \mathbb{N}$ as well as for every fixed $\widetilde{\mathcal{F}}_{\rand}/\mathcal{B}(\mathbb{R})$-measurable and $\widetilde{\mathbb{P}}$-a.s.~bounded random variable $\widetilde{y}_n: \widetilde{\Omega} \rightarrow \mathbb{R}$ that
	\begin{equation}
		\label{EqLemmaRandNNWellDefProof1}
		\widetilde{\Omega} \ni \widetilde{\omega} \quad \mapsto \quad \widetilde{R}_j(\widetilde{\omega}) := \widetilde{y}_j(\omega) \rho\left( \widetilde{a}_j(\omega) \cdot + \widetilde{b}_j(\omega) \right)^{\otimes n} \in L^{np}(X)^{\otimes n} \subseteq \overline{L^{np}(X)^{\otimes n}}
	\end{equation} 
	is $\widetilde{\mathbb{P}}$-strongly measurable, where $\widetilde{y}_j(\omega) \rho\big( \widetilde{a}_j(\omega) \cdot + \widetilde{b}_j(\omega) \big)$ denotes $[0,T] \ni t \mapsto \widetilde{y}_j(\omega) \rho\big( \widetilde{a}_j(\omega) t + \widetilde{b}_j(\omega) \big) \in \mathbb{R}^d$ with $\rho \in C(\mathbb{R})$ applied componentwise. To this end, we first show for every fixed $y \in \mathbb{R}$ that the map
	\begin{equation}
		\label{EqLemmaRandNNWellDefProof2}
		\mathbb{R}^d \times \mathbb{R}^d \ni (a,b) \quad \mapsto \quad y \rho(a \cdot + b) \in L^{np}(X)
	\end{equation}
	is continuous. Indeed, fix some $\varepsilon > 0$ and let $(a_k,b_k)_{k \in \mathbb{N}} \subseteq \mathbb{R}^d \times \mathbb{R}^d$ be an arbitrary sequence converging to any fixed $(a,b) \in \mathbb{R}^d \times \mathbb{R}^d$, where we use the notation $a_k := (a_{k,1},...,a_{k,d})^\top \in \mathbb{R}^d$ and $a := (a_1,...,a_d)^\top \in \mathbb{R}^d$, and analogously for the vectors $b_k,b \in \mathbb{R}^d$. Then, we can define the finite constants $C_a := \Vert a \Vert + \sup_{k \in \mathbb{N}} \Vert a_k \Vert \geq 0 $ and $C_b := \Vert b \Vert + \sup_{k \in \mathbb{N}} \Vert b_k \Vert \geq 0$ as well as the compact set $K := \left\lbrace yt + z:  t \in [0,T], \, y \in [-C_a,C_a], \, z \in [-C_b,C_b] \right\rbrace$. Hence, by using that $\rho \in C(\mathbb{R})$ is continuous, thus uniformly continuous on the compact subset $K \subseteq \mathbb{R}$, there exists some $\delta > 0$ such that for every $s_1,s_2 \in K$ with $\vert s_1 - s_2 \vert < \delta$ it holds that
	\begin{equation}
		\label{EqLemmaRandNNWellDefProof3}
		\vert \rho(s_1) - \rho(s_2) \vert < \frac{\varepsilon}{(1+\vert y \vert) C_{np,T,X} \sqrt{d}},
	\end{equation}
	where $C_{np,T,X} > 0$ is the constant introduced in Lemma~\ref{LemmaLpNorm}~\ref{LemmaLpNorm2}. Now, by using that $(a_k,b_k)_{k \in \mathbb{N}} \subseteq \mathbb{R}^d \times \mathbb{R}^d$ converges to $(a,b) \in \mathbb{R}^d \times \mathbb{R}^d$, there exists some $k_0 \in \mathbb{N}$ such that for every $k \in \mathbb{N} \cap [k_0,\infty)$ it holds that $\Vert (a,b) - (a_k,b_k) \Vert < \delta/(1+T)$. This implies for every $i = 1,...,d$ and $t \in [0,T]$ that
	\begin{equation}
		\label{EqLemmaRandNNWellDefProof4}
		\big\vert (a_i t + b_i) - (a_{k,i} t + b_{k,i}) \big\vert \leq \vert a_i - a_{k,i} \vert T + \vert b_i - b_{k,i} \vert \leq (1+T) \Vert (a,b) - (a_k,b_k) \Vert < \delta.
	\end{equation}
	Thus, by using Lemma~\ref{LemmaLpNorm}~\ref{LemmaLpNorm2} and by combining \eqref{EqLemmaRandNNWellDefProof3} with \eqref{EqLemmaRandNNWellDefProof4}, it follows for every $k \in \mathbb{N} \cap [k_0,\infty)$ that
	\begin{equation*}
		\begin{aligned}
			& \Vert y \rho(a \cdot + b) - y \rho(a_k \cdot + b_k) \Vert_{L^{np}(X)} \leq C_{np,T,X} \vert y \vert \Vert \rho(a \cdot + b) - \rho(a_k \cdot + b_k) \Vert_\infty \\
			& \quad\quad \leq \vert y \vert C_{np,T,X} \sup_{t \in [0,T]} \left( \sum_{i=1}^d \vert \rho(a_i t + b_i) - \rho(a_{k,i} t + b_{k,i}) \vert^2 \right)^\frac{1}{2} \\
			& \quad\quad \leq \vert y \vert C_{np,T,X} \sqrt{d} \max_{i=1,...,d} \sup_{t \in [0,T]} \vert \rho(a_i t + b_i) - \rho(a_{k,i} t + b_{k,i}) \vert \\
			& \quad\quad < \vert y \vert C_{np,T,X} \sqrt{d} \frac{\varepsilon}{(1+\vert y \vert) C_{np,T,X} \sqrt{d}} = \varepsilon.
		\end{aligned}
	\end{equation*}
	Since $\varepsilon > 0$ was chosen arbitrarily, this shows that \eqref{EqLemmaRandNNWellDefProof2} is continuous. Now, we use this to conclude that
	\begin{equation}
		\label{EqLemmaRandNNWellDefProof5}
		\mathbb{R}^d \times \mathbb{R}^d \times \mathbb{R} \ni (a,b,y) \quad \mapsto \quad y \rho(a \cdot + b)^{\otimes n} \in L^{np}(X)^{\otimes n}
	\end{equation}
	is also continuous. To this end, let $(a_k,b_k,y_k)_{k \in \mathbb{N}} \subseteq \mathbb{R}^d \times \mathbb{R}^d \times \mathbb{R}$ be an arbitrary sequence converging to any fixed $(a,b,y) \in \mathbb{R}^d \times \mathbb{R}^d \times \mathbb{R}$. Then, by using the triangle inequality, the telescoping sum $\rho(a \cdot + b)^{\otimes n} - \rho(a_k \cdot + b_k)^{\otimes n} = \sum_{l=1}^n \rho(a \cdot + b)^{\otimes (n-l)} \otimes (\rho(a \cdot + b) - \rho(a_k \cdot + b_k)) \otimes \rho(a_k \cdot + b_k)^{\otimes (l-1)}$, Lemma~\ref{LemmaLpXnNorm}, and that \eqref{EqLemmaRandNNWellDefProof2} is continuous, it follows that
	\begin{equation*}
		\begin{aligned}
			& \lim_{k \rightarrow \infty} \left\Vert y \rho(a \cdot + b)^{\otimes n} - y_k \rho(a_k \cdot + b_k)^{\otimes n} \right\Vert_{L^{np}(X)^{\otimes n}} \\
			& \quad\quad \leq \lim_{k \rightarrow \infty} \vert y - y_k \vert \left\Vert \rho(a \cdot + b)^{\otimes n} \right\Vert_{L^{np}(X)^{\otimes n}} + \lim_{k \rightarrow \infty} \vert y_k \vert \left\Vert \rho(a \cdot + b)^{\otimes n} - \rho(a_k \cdot + b_k)^{\otimes n} \right\Vert_{L^{np}(X)^{\otimes n}} \\
			& \quad\quad \leq \vert y \vert \lim_{k \rightarrow \infty} \sum_{l=1}^n \left\Vert \rho(a \cdot + b)^{\otimes (n-l)} \otimes \left( \rho(a \cdot + b) - \rho(a_k \cdot + b_k) \right) \otimes \rho(a_k \cdot + b_k)^{\otimes (l-1)} \right\Vert_{L^{np}(X)^{\otimes n}} \\
			& \quad\quad = \vert y \vert \lim_{k \rightarrow \infty} \sum_{l=1}^n \Vert \rho(a \cdot + b) \Vert_{L^{np}(X)}^{n-l} \left\Vert \rho(a \cdot + b) - \rho(a_k \cdot + b_k) \right\Vert_{L^{np}(X)} \Vert \rho(a_k \cdot + b_k) \Vert_{L^{np}(X)}^{l-1} \\
			& \quad\quad = n \vert y \vert \Vert \rho(a \cdot + b) \Vert_{L^{np}(X)}^{n-1} \lim_{k \rightarrow \infty} \left\Vert \rho(a \cdot + b) - \rho(a_k \cdot + b_k) \right\Vert_{L^{np}(X)} = 0,
		\end{aligned}
	\end{equation*}
	which shows that \eqref{EqLemmaRandNNWellDefProof5} is indeed continuous. Hence, the map \eqref{EqLemmaRandNNWellDefProof1} is measurable as composition of the $\widetilde{\mathcal{F}}/\mathcal{B}(\mathbb{R}^d \times \mathbb{R}^d \times \mathbb{R})$-measurable map $\widetilde{\Omega} \ni \widetilde{\omega} \mapsto (\widetilde{a}_j(\widetilde{\omega}),\widetilde{b}_j(\widetilde{\omega}),\widetilde{y}_j(\widetilde{\omega})) \in \mathbb{R}^d \times \mathbb{R}^d \times \mathbb{R}$ and the continuous map \eqref{EqLemmaRandNNWellDefProof5}. Since $(\overline{L^{np}(X)^{\otimes n}},\Vert \cdot \Vert_{L^{np}(X)^{\otimes n}})$ is by the first step separable, it follows from \cite[Corollary~1.1.10 \& Proposition 1.1.16.(2)]{hytoenen16} that the map \eqref{EqLemmaRandNNWellDefProof1} is $\widetilde{\mathbb{P}}$-strongly measurable.
	
	Finally, we show that $\mathcal{RN}^\rho_{d,n} \subseteq L^r(\widetilde{\Omega},\widetilde{\mathcal{A}},\widetilde{\mathbb{P}};\overline{L^{np}(X)^{\otimes n}})$. Again, by linearity of $\mathcal{RN}^\rho_{d,n}$, it suffices to prove that $\widetilde{R}_j \in L^r(\widetilde{\Omega},\widetilde{\mathcal{A}},\widetilde{\mathbb{P}};\overline{L^{np}(X)^{\otimes n}})$, where $\widetilde{R}_j: \widetilde{\Omega} \rightarrow \overline{L^{np}(X)^{\otimes n}}$ is defined in \eqref{EqLemmaRandNNWellDefProof1}. Indeed, by using that $\widetilde{y}_j: \widetilde{\Omega} \rightarrow \mathbb{R}$ is $\widetilde{\mathbb{P}}$-a.s.~bounded, Lemma~\ref{LemmaLpXnNorm} together with the fact that $(\widetilde{a}_j,\widetilde{b}_j) \sim (\widetilde{a}_1,\widetilde{b}_1)$ are identically distributed, Lemma~\ref{LemmaLpNorm}~\ref{LemmaLpNorm2}, and Assumption~\ref{AssCondCDF}, it follows that
	\begin{equation*}
		\begin{aligned}
			\widetilde{\mathbb{E}}\left[ \left\Vert \widetilde{R}_j \right\Vert_{L^{np}(X)^{\otimes n}}^r \right] & = \widetilde{\mathbb{E}}\left[ \left\Vert \widetilde{y}_j \rho\left( \widetilde{a}_j \cdot + \widetilde{b}_j \right)^{\otimes n} \right\Vert_{L^{np}(X)^{\otimes n}}^r \right] \\
			& \leq \left\Vert \widetilde{y}_j \right\Vert_{L^\infty(\widetilde{\mathbb{P}})}^r \widetilde{\mathbb{E}}\left[ \left\Vert \rho\left( \widetilde{a}_j \cdot + \widetilde{b}_j \right)^{\otimes n} \right\Vert_{L^{np}(X)^{\otimes n}}^r \right] \\
			& = \left\Vert \widetilde{y}_j \right\Vert_{L^\infty(\widetilde{\mathbb{P}})}^r \widetilde{\mathbb{E}}\left[ \left\Vert \rho\left( \widetilde{a}_1 \cdot + \widetilde{b}_1 \right) \right\Vert_{L^{np}(X)}^{nr} \right] \\
			& \leq C_{np,T,X} \left\Vert \widetilde{y}_j \right\Vert_{L^\infty(\widetilde{\mathbb{P}})}^r \widetilde{\mathbb{E}}\left[ \left\Vert \rho\left( \widetilde{a}_1 \cdot + \widetilde{b}_1 \right) \right\Vert_\infty^{nr} \right] < \infty.
		\end{aligned}
	\end{equation*}
	This shows that $\widetilde{R}_j \in L^r(\widetilde{\Omega},\widetilde{\mathcal{A}},\widetilde{\mathbb{P}};\overline{L^{np}(X)^{\otimes n}})$ and thus $\mathcal{RN}^\rho_{d,n} \subseteq L^r(\widetilde{\Omega},\widetilde{\mathcal{A}},\widetilde{\mathbb{P}};\overline{L^{np}(X)^{\otimes n}})$.
\end{proof}

\begin{lemma}
	\label{LemmaRandUAT}
	Let $X$ be a continuous semimartingale satisfying Assumption~\ref{AssExpInt}, let $p,r \in [1,\infty)$, $g \in L^p(X)$, let $\rho \in C(\mathbb{R})$ be non-polynomial, and let Assumption~\ref{AssCondCDF} hold. Then, for every $\varepsilon > 0$ there exists some $\widetilde{\varphi} \in \mathcal{RN}^\rho_{d,1}$ such that $\widetilde{\mathbb{E}}\big[ \Vert g - \widetilde{\varphi} \Vert_{L^p(X)}^r \big]^{1/r} < \varepsilon$.
\end{lemma}
\begin{proof}
	For $p,r \in [1,\infty)$, let $g \in L^p(X)$ and fix some $\varepsilon > 0$. Since $\rho \in C(\mathbb{R})$ is non-polynomial, there exists by Lemma~\ref{LemmaUAT} a fully trained neural network $\varphi := \sum_{n=1}^N y_n \rho(a_n \cdot + b_n) \in \mathcal{NN}^\rho_{d,1}$, with some $a_n := (a_{n,1},...,a_{n,d})^\top \in \mathbb{R}^d$, $b_n := (b_{n,1},...,b_{n,d})^\top \in \mathbb{R}^d$, and $y_n \in \mathbb{R}$, $n = 1,...,N$, such that
	\begin{equation}
		\label{EqLemmaRandUATProof1}
		\Vert g - \varphi \Vert_{L^{np}(X)} < \frac{\varepsilon}{3}.
	\end{equation}
	Hereby, $y_n \rho(a_n \cdot + b_n)$ denotes the function $[0,T] \ni t \mapsto y_n \rho(a_n t + b_n) \in \mathbb{R}^d$ with $\rho \in C(\mathbb{R})$ being applied componentwise. Now, for every fixed $n = 1,...,N$ and $m,j \in \mathbb{N}$, we define the map
	\begin{equation}
		\label{EqLemmaRandUATProof2}
		\widetilde{\Omega} \ni \widetilde{\omega} \quad \mapsto \quad \widetilde{R}_{n,m,j}(\widetilde{\omega}) := \widetilde{y}_{n,m,j}(\widetilde{\omega}) \rho\left( \widetilde{a}_j(\widetilde{\omega}) \cdot + \widetilde{b}_j(\widetilde{\omega}) \right) \in L^p(X) \subseteq \overline{L^p(X)},
	\end{equation}
	where $\rho\big( \widetilde{a}_j(\widetilde{\omega}) \cdot + \widetilde{b}_j(\widetilde{\omega}) \big)$ denotes the function $[0,T] \ni t \mapsto \rho\big( \widetilde{a}_j(\omega) t + \widetilde{b}_j(\omega) \big) \in \mathbb{R}^d$ and where
	\begin{equation*}
		\widetilde{\Omega} \ni \widetilde{\omega} \quad \mapsto \quad \widetilde{y}_{n,m,j}(\widetilde{\omega}) := \frac{y_n}{C_{n,m}} \mathds{1}_{E_{n,m}} (\widetilde{a}_j(\widetilde{\omega}),\widetilde{b}_j(\widetilde{\omega})) \in \mathbb{R}
	\end{equation*}
	with $E_{n,m} := \big\lbrace (x,y) \in \mathbb{R}^d \times \mathbb{R}^d: \Vert (x,y)-(a_n,b_n) \Vert < \frac{1}{m} \big\rbrace \in \mathcal{B}(\mathbb{R}^d \times \mathbb{R}^d)$ and $C_{n,m} := \widetilde{\mathbb{P}}\big[ \big\lbrace \widetilde{\omega} \in \widetilde{\Omega}: (\widetilde{a}_j(\widetilde{\omega}),\widetilde{b}_j(\widetilde{\omega})) \in E_{n,m} \big\rbrace \big] > 0$ due to Assumption~\ref{AssCondCDF}. Moreover, by using that the random variable $\widetilde{y}_{n,m,j}: \widetilde{\Omega} \rightarrow \mathbb{R}$ is $\widetilde{\mathcal{F}}_{\rand}/\mathcal{B}(\mathbb{R})$-measurable and $\widetilde{\mathbb{P}}$-a.s.~bounded, we have $\widetilde{R}_{n,m,j} \in \mathcal{RN}^\rho_{d,1}$ and thus $\widetilde{R}_{n,m,j} \in L^r(\widetilde{\Omega},\widetilde{\mathcal{A}},\widetilde{\mathbb{P}};\overline{L^p(X)})$ by Lemma~\ref{LemmaRandNNWellDef}.
	
	Now, we show that there exists some $m_0 \in \mathbb{N}$ such that for every fixed $n = 1,...,N$ the expectation $\widetilde{\mathbb{E}}\big[ \widetilde{R}_{n,m_0,1} \big] \in \overline{L^p(X)}$ is $\frac{\varepsilon}{3N}$-close to the function $y_n \rho(a_n \cdot + b_n) \in \overline{L^p(X)}$ with respect to $\Vert \cdot \Vert_{L^p(X)}$. To this end, we first observe that $E_{n,m+1} \subseteq E_{n,m} \subseteq E_{n,1}$ holds true, for all $m \in \mathbb{N}$, and that the set $K := \left\lbrace xt + z:  t \in [0,T], \, (x,z) \in E_{n,1} \right\rbrace$ is compact. Hence, by using that $\rho \in C(\mathbb{R})$ is continuous, thus uniformly continuous on the compact subset $K \subseteq \mathbb{R}$, there exists some $\delta > 0$ such that for every $s_1,s_2 \in K$ with $\vert s_1 - s_2 \vert < \delta$ it holds that
	\begin{equation}
		\label{EqLemmaRandUATProof3}
		\vert \rho(s_1) - \rho(s_2) \vert < \frac{\varepsilon}{3N (1+\vert y_n \vert) C_{p,T,X} \sqrt{d}},
	\end{equation}
	where $C_{p,T,X} > 0$ is the constant introduced in Lemma~\ref{LemmaLpNorm}~\ref{LemmaLpNorm2}. Moreover, by choosing $m_0 \in \mathbb{N}$ large enough such that $m_0 \geq (T+1)/\delta$, we conclude for every $i = 1,...,d$, $t \in [0,T]$, and $(x,z) \in E_{n,m_0}$ that
	\begin{equation}
		\label{EqLemmaRandUATProof4}
		\big\vert (a_{n,i} t + b_n) - (x_i t + z_i) \big\vert \leq \vert a_{n,i} - x_i \vert T + \vert b_{n,i} - z_i \vert \leq (T+1) \Vert (a_n,b_n) - (x,y) \Vert < \frac{T+1}{m_0} \leq \delta,
	\end{equation}
	where we use the notation $x := (x_1,...,x_d)^\top \in \mathbb{R}$ and $z := (z_1,...,z_d)^\top \in \mathbb{R}$. Hence, by using \cite[Proposition~1.2.2]{hytoenen16}, Lemma~\ref{LemmaLpNorm}~\ref{LemmaLpNorm2}, and the inequalities \eqref{EqLemmaRandUATProof3}+\eqref{EqLemmaRandUATProof4} together with $(y,z) \in E_{n,m_0}$, it follows that
	\begin{equation}
		\begin{aligned}
			& \left\Vert y_n \rho(a_n \cdot + b_n) - \widetilde{\mathbb{E}}\left[ \widetilde{R}_{n,m_0,1} \right] \right\Vert_{L^p(X)} \\
			& \quad\quad = \left\Vert \widetilde{\mathbb{E}}\left[ \frac{y_n}{C_{n,m_0}} \mathds{1}_{E_{n,m_0}} (\widetilde{a}_1,\widetilde{b}_1) \left( \rho(a_n \cdot + b_n) - \rho\left( \widetilde{a}_1 \cdot + \widetilde{b}_1 \right) \right) \right] \right\Vert_{L^p(X)} \\
			& \quad\quad \leq \frac{\vert y_n \vert}{C_{n,m_0}} \widetilde{\mathbb{E}}\left[ \mathds{1}_{E_{n,m_0}}(\widetilde{a}_1,\widetilde{b}_1) \left\Vert \rho(a_n \cdot + b_n) - \rho\left( \widetilde{a}_1 \cdot + \widetilde{b}_1 \right) \right\Vert_{L^p(X)} \right] \\
			& \quad\quad \leq \frac{\vert y_n \vert}{C_{n,m_0}} \widetilde{\mathbb{E}}\left[ \mathds{1}_{E_{n,m_0}}(\widetilde{a}_1,\widetilde{b}_1) \right] \sup_{(x,z) \in E_{n,m_0}} \left\Vert \rho(a_n \cdot + b_n) - \rho(x \cdot + z) \right\Vert_{L^p(X)} \\
			& \quad\quad \leq \frac{\vert y_n \vert}{C_{n,m_0}} C_{n,m_0} C_{p,T,X} \sup_{(x,z) \in E_{n,m_0}} \Vert \rho(a_n t + b_n) - \rho(x t + z) \Vert_\infty \\
			& \quad\quad = \vert y_n \vert C_{p,T,X} \sup_{(x,z) \in E_{n,m_0}} \sup_{t \in [0,T]} \left( \sum_{i=1}^d \vert \rho(a_{n,i} t + b_n) - \rho(x_i t + z_i) \vert^2 \right)^\frac{1}{2} \\
			& \quad\quad \leq \vert y_n \vert C_{p,T,X} \sqrt{d} \max_{i=1,...,d} \sup_{(t,(x,z)) \in [0,T] \times E_{n,m_0}} \vert \rho(a_{n,i} t + b_n) - \rho(x_i t + z_i) \vert \\
			& \quad\quad < \vert y_n \vert C_{p,T,X} \sqrt{d} \frac{\varepsilon}{3N (1+\vert y_n \vert) C_{p,T,X} \sqrt{d}} \leq \frac{\varepsilon}{3N},
		\end{aligned}
	\end{equation}
	which shows that $\widetilde{\mathbb{E}}\big[ \widetilde{R}_{n,m_0,1} \big] \in \overline{L^p(X)}$ is $\frac{\varepsilon}{3N}$-close to $y_n \rho(a_n \cdot + b_n) \in \overline{L^p(X)}$.
	
	Next, we approximate for every fixed $n = 1,...,N$ the constant random variable $\big( \widetilde{\omega} \mapsto \widetilde{\mathbb{E}}\big[\widetilde{R}_{n,m_0,1}\big] \big) \in L^1(\widetilde{\Omega},\widetilde{\mathcal{A}},\widetilde{\mathbb{P}};\overline{L^p(X)})$ by the average of the i.i.d.~sequence $\big( \widetilde{\omega} \mapsto \widetilde{R}_{n,m_0,j}(\widetilde{\omega}) \big) \in L^1(\widetilde{\Omega},\widetilde{\mathcal{A}},\widetilde{\mathbb{P}};\overline{L^p(X)})$ defined in \eqref{EqLemmaRandUATProof2}. More precisely, by applying the strong law of large numbers for Banach space-valued random variables in \cite[Theorem~3.3.10]{hytoenen16}, we conclude that
	\begin{equation}
		\label{EqLemmaRandUATProof5}
		\frac{1}{J} \sum_{j=1}^J \widetilde{R}_{n,m_0,j} \quad \overset{J \rightarrow \infty}{\longrightarrow} \quad \widetilde{\mathbb{E}}\left[ \widetilde{R}_{n,m_0,1} \right] \quad\quad \text{in} \quad L^1(\widetilde{\Omega},\widetilde{\mathcal{A}},\widetilde{\mathbb{P}};\overline{L^p(X)}) \quad \text{and} \quad \widetilde{\mathbb{P}}\text{-a.s.}.
	\end{equation}
	In order to generalize the convergence to $L^r(\widetilde{\Omega},\widetilde{\mathcal{A}},\widetilde{\mathbb{P}};\overline{L^p(X)})$, we define the sequence of non-negative random variables $(Z_{n,J})_{J \in \mathbb{N}}$ by $Z_{n,J} := \big\Vert \widetilde{\mathbb{E}}\big[ \widetilde{R}_{n,m_0,1} \big] - \frac{1}{J} \sum_{j=1}^J \widetilde{R}_{n,m_0,j} \big\Vert_{L^p(X)}^r$, $J \in \mathbb{N}$. Then, by using the triangle inequality of $\Vert \cdot \Vert_{L^p(X)}$, Minkowski's inequality, \cite[Proposition~1.2.2]{hytoenen16}, that $\widetilde{R}_{n,m_0,j} \sim \widetilde{R}_{n,m_0,1}$ is identically distributed, Jensen's inequality, and that $\widetilde{R}_{n,m_0,1} \in \mathcal{RN}^\rho_{d,1} \subseteq L^q(\widetilde{\Omega},\widetilde{\mathcal{A}},\widetilde{\mathbb{P}};\overline{L^p(X)})$ for any $q \in [1,\infty)$ (see Lemma~\ref{LemmaRandNNWellDef}), it follows for every $J \in \mathbb{N}$ and $q \in [1,\infty)$ that
	\begin{equation*}
		\begin{aligned}
			\widetilde{\mathbb{E}}\left[ Z_{n,J}^\frac{q}{r} \right]^\frac{1}{q} & = \widetilde{\mathbb{E}}\left[ \left\Vert \widetilde{\mathbb{E}}\left[ \widetilde{R}_{n,m_0,1} \right] - \frac{1}{J} \sum_{j=1}^J \widetilde{R}_{n,m_0,j} \right\Vert_{L^p(X)}^q \right]^\frac{1}{q} \\
			& \leq \left\Vert \widetilde{\mathbb{E}}\left[ \widetilde{R}_{n,m_0,1} \right] \right\Vert_{L^p(X)} + \widetilde{\mathbb{E}}\left[ \left( \frac{1}{J} \sum_{j=1}^J \left\Vert \widetilde{R}_{n,m_0,j} \right\Vert_{L^p(X)} \right)^q \right]^\frac{1}{q} \\
			& \leq \widetilde{\mathbb{E}}\left[ \left\Vert \widetilde{R}_{n,m_0,1} \right\Vert_{L^p(X)} \right] + \frac{1}{J} \sum_{j=1}^J \widetilde{\mathbb{E}}\left[ \left\Vert \widetilde{R}_{n,m_0,j} \right\Vert_{L^p(X)}^q \right]^\frac{1}{q} \\
			& \leq 2 \widetilde{\mathbb{E}}\left[ \left\Vert \widetilde{R}_{n,m_0,1} \right\Vert_{L^p(X)}^q \right]^\frac{1}{q} < \infty.
		\end{aligned}
	\end{equation*}
	Since the right-hand side is finite and does not depend on $J \in \mathbb{N}$, we conclude that $\sup_{J \in \mathbb{N}} \widetilde{\mathbb{E}}\big[ Z_{n,J}^{q/r} \big] < \infty$ for all $q \in (1,\infty)$. Hence, the family of random variables $(Z_{n,J})_{J \in \mathbb{N}}$ is by de la Vall\'ee-Poussin's theorem in \cite[Theorem~6.19]{klenke14} uniformly integrable. Thus, by using \eqref{EqLemmaRandUATProof5}, i.e.~that $Z_{n,J} \rightarrow 0$, $\widetilde{\mathbb{P}}$-a.s., as $J \rightarrow \infty$, together with Vitali's convergence theorem, it follows that
	\begin{equation*}
		\lim_{J \rightarrow \infty} \mathbb{E}\left[ \left\Vert \widetilde{\mathbb{E}}\left[ \widetilde{R}_{n,m_0,1} \right] - \frac{1}{J} \sum_{j=1}^J \widetilde{R}_{n,m_0,j} \right\Vert_{L^p(X)}^r \right] = \lim_{J \rightarrow \infty} \widetilde{\mathbb{E}}\left[ Z_{n,J} \right] = 0.
	\end{equation*}
	From this, we conclude that there exists some $J_n \in \mathbb{N}$ such that
	\begin{equation}
		\label{EqLemmaRandUATProof6}
		\widetilde{\mathbb{E}}\left[ \left\Vert \widetilde{\mathbb{E}}\left[ \widetilde{R}_{n,m_0,1} \right] - \frac{1}{J_n} \sum_{j=1}^{J_n} \widetilde{R}_{n,m_0,j} \right\Vert_{L^p(X)}^r \right]^\frac{1}{r} < \frac{\varepsilon}{3N},
	\end{equation}
	which shows that the average of the i.i.d.~sequence $(\widetilde{R}_{n,m_0,j})_{j = 1,...,J_n}$ is $\frac{\varepsilon}{3N}$-close to $\widetilde{\mathbb{E}}\big[\widetilde{R}_{n,m_0,1}\big]$.
	
	Finally, we define the random neural network $\widetilde{\varphi} := \sum_{n=1}^N \frac{1}{J_n} \sum_{j=1}^{J_n} \widetilde{R}_{n,m_0,j} \in \mathcal{RN}^\rho_{d,1}$. Then, by combining \eqref{EqLemmaRandUATProof1}, \eqref{EqLemmaRandUATProof4}, and \eqref{EqLemmaRandUATProof6} with the triangle inequality and Minkowski's inequality, it follows that
	\begin{equation*}
		\begin{aligned}
			\widetilde{\mathbb{E}}\left[ \left\Vert g - \widetilde{\varphi} \right\Vert_{L^p(X)}^r \right]^\frac{1}{r} & \leq \widetilde{\mathbb{E}}\left[ \left\Vert g - \varphi \right\Vert_{L^p(X)}^r \right]^\frac{1}{r} + \widetilde{\mathbb{E}}\left[ \left\Vert \varphi - \sum_{n=1}^N \widetilde{\mathbb{E}}\left[ \widetilde{R}_{n,m_0,1} \right] \right\Vert_{L^p(X)}^r \right]^\frac{1}{r} \\
			& \quad\quad + \widetilde{\mathbb{E}}\left[ \left\Vert \sum_{n=1}^N \widetilde{\mathbb{E}}\left[ \widetilde{R}_{n,m_0,1} \right] - \sum_{n=1}^N \frac{1}{J_n} \sum_{j=1}^{J_n} \widetilde{R}_{n,m_0,j} \right\Vert_{L^p(X)}^r \right]^\frac{1}{r} \\
			& \leq \frac{\varepsilon}{3} + \sum_{n=1}^N \left\Vert \rho(a_n \cdot + b_n) - \widetilde{\mathbb{E}}\left[ \widetilde{R}_{n,m_0,1} \right] \right\Vert_{L^p(X)} \\
			& \quad\quad + \sum_{n=1}^N \widetilde{\mathbb{E}}\left[ \left\Vert \widetilde{\mathbb{E}}\left[ \widetilde{R}_{n,m_0,1} \right] - \frac{1}{J_n} \sum_{j=1}^{J_n} \widetilde{R}_{n,m_0,j} \right\Vert_{L^p(X)}^r \right]^\frac{1}{r} \\
			& \leq \frac{\varepsilon}{3} + N \frac{\varepsilon}{3N} + N \frac{\varepsilon}{3N} = \varepsilon,
		\end{aligned}
	\end{equation*}
	which completes the proof.
\end{proof}

\begin{proof}[Proof of Proposition~\ref{PropRandUAT}]
	For $n = 0$ and $p \in [1,\infty)$, the conclusion holds trivially true as $\mathcal{RN}^\rho_{d,0} = \mathbb{R}$. Moreover, for $n = 1$, the conclusion follows from Lemma~\ref{LemmaRandUAT} by using the identification $L^{np}(X)^{\otimes 1} \cong L^p(X)$. Now, for $n \geq 2$, we fix some $\varepsilon \in (0,1)$ and $g \in L^{np}_{\diag}(X)^{\otimes n}$, which has a representation $g = \sum_{j=1}^m g_j^{\otimes n}$, for some $g_1,...,g_m \in L^{np}(X)$, such that
	\begin{equation}
		\label{EqPropRandUATProof1}
		\left\Vert g - \sum_{j=1}^m g_j^{\otimes n} \right\Vert_{L^{np}(X)^{\otimes n}} < \frac{\varepsilon}{2}.
	\end{equation}
	Then, for every fixed $j = 1,...,m$, we apply Lemma~\ref{LemmaUAT} to obtain some $\widetilde{\varphi}_j \in \mathcal{RN}^\rho_{d,1}$ such that
	\begin{equation*}
		\widetilde{\mathbb{E}}\left[ \left\Vert g_j - \widetilde{\varphi}_j(\cdot) \right\Vert_{L^{np}(X)}^{nr} \right]^\frac{1}{nr} < \frac{\varepsilon}{2 m n \left( 1 + \Vert g_j \Vert_{L^{np}(X)} \right)^{n-1}} \leq 1,
	\end{equation*}
	which implies that $\widetilde{\mathbb{E}}\big[ \Vert \widetilde{\varphi}_j(\cdot) \Vert_{L^{np}(X)}^{nr} \big]^{nr} \leq 1 + \Vert g_j \Vert_{L^{np}(X)}$. Using this, the telescoping sum $g_j^{\otimes n} - \widetilde{\varphi}_j(\widetilde{\omega})^{\otimes n} = \sum_{l=1}^n g_j^{\otimes (n-l)} \otimes (g_j - \widetilde{\varphi}_j(\widetilde{\omega})) \otimes \widetilde{\varphi}_j(\widetilde{\omega})^{\otimes (l-1)}$, Minkowski's inequality, Lemma~\ref{LemmaLpXnNorm}, H\"older's inequality, and Jensen's inequality, it follows that
	\begin{equation}
		\label{EqPropRandUATProof2}
		\begin{aligned}
			\widetilde{\mathbb{E}}\left[ \left\Vert g_j^{\otimes n} - \widetilde{\varphi}_j(\cdot)^{\otimes n} \right\Vert_{L^{np}(X)^{\otimes n}}^r \right]^\frac{1}{r} & \leq \sum_{l=1}^n \widetilde{\mathbb{E}}\left[ \left\Vert g_j^{\otimes (n-l)} \otimes \left( g_j - \widetilde{\varphi}_j(\cdot) \right) \otimes \widetilde{\varphi}_j(\cdot)^{\otimes (l-1)} \right\Vert_{L^{np}(X)^{\otimes n}}^r \right]^\frac{1}{r} \\
			& \leq \sum_{l=1}^n \Vert g_j \Vert_{L^{np}(X)}^{n-l} \widetilde{\mathbb{E}}\left[ \Vert g_j - \widetilde{\varphi}_j(\cdot) \Vert_{L^{np}(X)}^r \Vert \widetilde{\varphi}_j(\cdot) \Vert_{L^{np}(X)}^{(l-1)r} \right]^\frac{1}{r} \\
			& \leq \sum_{l=1}^n \Vert g_j \Vert_{L^{np}(X)}^{n-l} \widetilde{\mathbb{E}}\left[ \Vert g_j - \widetilde{\varphi}_j(\cdot) \Vert_{L^{np}(X)}^{lr} \right]^\frac{1}{lr} \widetilde{\mathbb{E}}\left[ \Vert \widetilde{\varphi}_j(\cdot) \Vert_{L^{np}(X)}^{lr} \right]^\frac{l-1}{lr} \\
			& \leq \widetilde{\mathbb{E}}\left[ \Vert g_j - \widetilde{\varphi}_j(\cdot) \Vert_{L^{np}(X)}^{nr} \right]^\frac{1}{nr} \sum_{l=1}^n \Vert g_j \Vert_{L^{np}(X)}^{n-l} \widetilde{\mathbb{E}}\left[ \Vert \widetilde{\varphi}_j(\cdot) \Vert_{L^{np}(X)}^{nr} \right]^\frac{l-1}{nr} \\
			& \leq \widetilde{\mathbb{E}}\left[ \Vert g_j - \widetilde{\varphi}_j(\cdot) \Vert_{L^{np}(X)}^{nr} \right]^\frac{1}{nr} \sum_{l=1}^n \Vert g_j \Vert_{L^{np}(X)}^{n-l} \left( 1 + \Vert g_j \Vert_{L^{np}(X)} \right)^{l-1} \\
			& < \frac{\varepsilon}{2 m n \left( 1 + \Vert g_j \Vert_{L^{np}(X)} \right)^{n-1}} n \left( 1 + \Vert g_j \Vert_{L^{np}(X)} \right)^{n-1} = \frac{\varepsilon}{2 m}.
		\end{aligned}
	\end{equation}
	Hence, by using \eqref{EqPropRandUATProof1} and \eqref{EqPropRandUATProof2}, we conclude for $\widetilde{\varphi} := \sum_{j=1}^m \widetilde{\varphi}_j^{\otimes n} \in \mathcal{RN}^\rho_{d,n}$ that
	\begin{equation*}
		\begin{aligned}
			\widetilde{\mathbb{E}}\left[ \left\Vert g - \widetilde{\varphi}(\cdot) \right\Vert_{L^{np}(X)^{\otimes n}}^r \right]^\frac{1}{r} & \leq \widetilde{\mathbb{E}}\left[ \left\Vert g - \sum_{j=1}^m g_j^{\otimes n} \right\Vert_{L^{np}(X)^{\otimes n}}^r \right]^\frac{1}{r} + \sum_{j=1}^m \widetilde{\mathbb{E}}\left[ \left\Vert g_j^{\otimes n} - \widetilde{\varphi}_j(\cdot)^{\otimes n} \right\Vert_{L^{np}(X)^{\otimes n}}^r \right]^\frac{1}{r} \\
			& < \frac{\varepsilon}{2} + \sum_{j=1}^m \frac{\varepsilon}{2m} = \varepsilon,
		\end{aligned}
	\end{equation*}
	which completes the proof.
\end{proof}

\begin{proof}[Proof of Theorem~\ref{ThmRandUAT}]
	Fix some $G \in L^p(\Omega,\mathcal{F}_T,\mathbb{P})$ and $\varepsilon > 0$. Then, by using Theorem~\ref{ThmChaos}, there exists some $N \in \mathbb{N}$ and $g_n \in L^{np}_{\diag}(X)^{\otimes n}$, $n = 0,...,N$, such that
	\begin{equation}
		\label{EqThmRandUATProof1}
		\left\Vert G - \sum_{n=0}^N J^\circ_n(g_n)_T \right\Vert_{L^p(\mathbb{P})} < \frac{\varepsilon}{2}.
	\end{equation}
	Now, for every $n = 1,...,N$, we apply Proposition~\ref{PropRandUAT} to obtain some $\widetilde{\varphi}_n \in \mathcal{RN}^\rho_{d,n}$ with
	\begin{equation}
		\label{EqThmRandUATProof2}
		\widetilde{\mathbb{E}}\left[ \left\Vert g_n - \widetilde{\varphi}_n(\cdot) \right\Vert_{L^{np}(X)^{\otimes n}}^r \right]^\frac{1}{r} < \frac{\varepsilon}{2 C_{n,p} N},
	\end{equation}
	where the constant $C_{n,p} > 0$ was introduced in Lemma~\ref{LemmaItIntLinearBDG}~\ref{LemmaItIntLinearBDG3}. From this, we define the map $\widetilde{\Omega} \ni \widetilde{\omega} \mapsto \sum_{n=0}^N J^\circ_n(\widetilde{\varphi}_n(\widetilde{\omega}))_T \in L^p(\Omega,\mathcal{F}_T,\mathbb{P})$ with $\widetilde{\varphi}_0 := g_0 \in \mathbb{R}$. Since $\widetilde{\Omega} \ni \widetilde{\omega} \mapsto \widetilde{\varphi}_n(\widetilde{\omega}) \in L^{np}(X)^{\otimes n}$ is by Lemma~\ref{LemmaRandNNWellDef} $\widetilde{\mathbb{P}}$-strongly measurable and $J^\circ_n(\cdot)_T: L^{np}(X)^{\otimes n} \rightarrow L^p(\Omega,\mathcal{F}_T,\mathbb{P})$ is by Lemma~\ref{LemmaItIntLinearBDG}~\ref{LemmaItIntLinearBDG3} linear and bounded, thus continuous and measurable, it follows from \cite[Corollary~1.1.11]{hytoenen16} that $\widetilde{\Omega} \ni \widetilde{\omega} \mapsto \sum_{n=0}^N J^\circ_n(\widetilde{\varphi}_n(\widetilde{\omega}))_T \in L^p(\Omega,\mathcal{F}_T,\mathbb{P})$ is also $\widetilde{\mathbb{P}}$-strongly measurable. Hence, by using Minkowksi's inequality, Lemma~\ref{LemmaItIntLinearBDG}~\ref{LemmaItIntLinearBDG3}, and the inequalities \eqref{EqThmRandUATProof1}+\eqref{EqThmRandUATProof2}, it follows that
	\begin{equation*}
		\begin{aligned}
			\widetilde{\mathbb{E}}\left[ \left\Vert G - \sum_{n=0}^N J^\circ_n(\widetilde{\varphi}_n(\cdot))_T \right\Vert_{L^p(\mathbb{P})}^r \right]^\frac{1}{r} & \leq \widetilde{\mathbb{E}}\left[ \left( \left\Vert G - \sum_{n=0}^N J^\circ_n(g_n)_T \right\Vert_{L^p(\mathbb{P})} + \sum_{n=0}^N \left\Vert J^\circ_n\left( g_n - \widetilde{\varphi}_n(\cdot) \right)_T \right\Vert_{L^p(\mathbb{P})} \right)^r \right]^\frac{1}{r} \\
			& \leq \left\Vert G - \sum_{n=0}^N J^\circ_n(g_n)_T \right\Vert_{L^p(\mathbb{P})} + \sum_{n=1}^N \widetilde{\mathbb{E}}\left[ \left\Vert J^\circ_n\left( g_n - \widetilde{\varphi}_n(\cdot) \right)_T \right\Vert_{L^p(\mathbb{P})}^r \right]^\frac{1}{r} \\
			& \leq \left\Vert G - \sum_{n=0}^N J^\circ_n(g_n)_T \right\Vert_{L^p(\mathbb{P})} + \sum_{n=1}^N C_{n,p} \widetilde{\mathbb{E}}\left[ \left\Vert g_n - \widetilde{\varphi}_n(\cdot) \right\Vert_{L^{np}(X)^{\otimes n}}^r \right]^\frac{1}{r} \\
			& < \frac{\varepsilon}{2} + \sum_{n=1}^N C_{n,p} \frac{\varepsilon}{2 C_{n,p} N} = \varepsilon,
		\end{aligned}
	\end{equation*}
	which completes the proof.
\end{proof}

\subsubsection{Proof of results in Section~\ref{SecLpHedgingRN}}
\label{SecProofsLpHedgingRN}

\begin{proof}[Proof of Theorem~\ref{ThmLpHedgingNN}]
	Fix $G \in L^p(\Omega,\mathcal{F}_T,\mathbb{P})$ and $\varepsilon > 0$. Then, by Theorem~\ref{ThmLpHedging}, there exist some $N \in \mathbb{N}$ and $g_0 \in \mathbb{R}$ as well as $m_n \in \mathbb{N}$ and $g_{n,j,0},g_{n,j,1} \in L^{np}(X)$, $n = 1,...,N$ and $j = 1,...,m_n$, with
	\begin{equation}
		\label{EqThmLpHedgingNNProof1}
		\left\Vert G - g_0 - \int_0^T \left( \vartheta^{g_{1:N}}_t \right)^\top dX_t \right\Vert_{L^p(\mathbb{P})} \leq \frac{\varepsilon}{2} + \inf_{(c,\theta) \in \mathbb{R} \times \Theta^p(X)} \left\Vert G - c - \int_0^T \theta_t^\top dX_t \right\Vert_{L^p(\mathbb{P})},
	\end{equation}
	where $\vartheta^{g_{1:N}}_t := \sum_{n=1}^N \sum_{j=1}^{m_n} \frac{W(g_{n,j,0})_t^{n-1}}{(n-1)!} g_{n,j,1}(t)$, for $t \in [0,T]$. Moreover, for every $n = 2,...,N$ and $j = 1,...,m_n$, we follow the proof of Proposition~\ref{PropUAT} to obtain some $\varphi_{n,j,0} \in \mathcal{NN}^\rho_{d,1}$ such that
	\begin{equation}
		\label{EqThmLpHedgingNNProof2}
		\left\Vert g_{n,j,0}^{\otimes (n-1)} - \varphi_{n,j,0}^{\otimes (n-1)} \right\Vert_{L^{np}(X)^{\otimes (n-1)}} < \frac{\varepsilon}{4 C_{n,p} N m_n \left( 1 + \Vert g_{n,j,1} \Vert_{L^{np}(X)} \right)},
	\end{equation}
	where $C_{n,p} > 0$ is the constant from Lemma~\ref{LemmaItIntLinearBDG}~\ref{LemmaItIntLinearBDG3}. In addition, for every $n = 1,...,N$ and $j = 1,...,m_n$, we apply Lemma~\ref{LemmaUAT} to obtain some $\varphi_{n,j,1} \in \mathcal{NN}^\rho_{d,1}$ such that
	\begin{equation}
		\label{EqThmLpHedgingNNProof3}
		\left\Vert g_{n,j,1} - \varphi_{n,j,1} \right\Vert_{L^{np}(X)} < \frac{\varepsilon}{4 C_{n,p} N m_n \left( 1 + \big\Vert \varphi_{n,j,0}^{\otimes (n-1)} \big\Vert_{L^{np}(X)^{\otimes (n-1)}} \right)},
	\end{equation}
	where $\varphi_{n,j,0}^{\otimes 0} := 1$. Hence, by following the proof of Lemma~\ref{LemmaItIntLinearBDG}~\ref{LemmaItIntLinearBDG3} (with Minkowski's inequality, the BDG inequality (with constant $C_p > 0$), and H\"older's inequality), using Lemma~\ref{LemmaItIntLinearBDG}~\ref{LemmaItIntLinearBDG3} with $n-1$ and exponent $\frac{np}{n-1}$, and the inequality~\eqref{EqThmLpHedgingNNProof2}, it follows for every $n = 2,...,N$ and $j = 1,...,m_n$ that
	\begin{equation}
		\label{EqThmLpHedgingNNProof4}
		\begin{aligned}
			& \mathbb{E}\left[ \left\vert \int_0^T J^\circ_{n-1}\left( g_{n,j,0}^{\otimes (n-1)} - \varphi_{n,j,0}^{\otimes (n-1)} \right)_t g_{n,j,1}(t)^\top dX_t \right\vert^p \right]^\frac{1}{p} \\
			& \quad\quad \leq \mathbb{E}\left[ \sup_{t \in [0,T]} \left\vert J^\circ_{n-1}\left( g_{n,j,0}^{\otimes (n-1)} - \varphi_{n,j,0}^{\otimes (n-1)} \right)_t \right\vert^\frac{np}{n-1} \right]^\frac{n-1}{np} \mathbb{E}\left[ \left( \int_0^T \left\vert g_{n,j,1}(t)^\top dA_t \right\vert \right)^{np} \right]^\frac{1}{np} \\
			& \quad\quad\quad\quad + C_p \mathbb{E}\left[ \sup_{t \in [0,T]} \left\vert J^\circ_{n-1}\left( g_{n,j,0}^{\otimes (n-1)} - \varphi_{n,j,0}^{\otimes (n-1)} \right)_t \right\vert^\frac{np}{n-1} \right]^\frac{n-1}{np} \mathbb{E}\left[ \left( \int_0^T g_{n,j,1}(t)^\top d\langle M \rangle_t g_{n,j,1}(t) \right)^\frac{np}{2} \right]^\frac{1}{np} \\
			& \quad\quad \leq \max(1,C_p) C_{n-1,np/(n-1)} \left\Vert g_{n,j,0}^{\otimes (n-1)} - \varphi_{n,j,0}^{\otimes (n-1)} \right\Vert_{L^{np}(X)^{\otimes (n-1)}} \Vert g_{n,j,1} \Vert_{L^{np}(X)} \\
			& \quad\quad < C_{n,p} \frac{\varepsilon}{4 C_{n,p} N m_n \left( 1 + \Vert g_{n,j,1} \Vert_{L^{np}(X)} \right)} \Vert g_{n,j,1} \Vert_{L^{np}(X)} \leq \frac{\varepsilon}{4 N m_n}.
		\end{aligned}
	\end{equation}
	Furthermore, by using similar arguments as in \eqref{EqThmLpHedgingNNProof4} and by inserting the inequality~\eqref{EqThmLpHedgingNNProof3}, we conclude for every $n = 2,...,N$ and $j = 1,...,m_n$ that
	\begin{equation}
		\label{EqThmLpHedgingNNProof5}
		\begin{aligned}
			& \mathbb{E}\left[ \left\vert \int_0^T J^\circ_{n-1}\left( \varphi_{n,j,0}^{\otimes (n-1)} \right)_t (g_{n,j,1}(t) - \varphi_{n,j,1}(t))^\top dX_t \right\vert^p \right]^\frac{1}{p} \\
			& \quad\quad \leq \mathbb{E}\left[ \sup_{t \in [0,T]} \left\vert J^\circ_{n-1}\left( \varphi_{n,j,0}^{\otimes (n-1)} \right)_t \right\vert^\frac{np}{n-1} \right]^\frac{n-1}{np} \mathbb{E}\left[ \left\vert \int_0^T (g_{n,j,1}(t) - \varphi_{n,j,1}(t))^\top dA_t \right\vert^{np} \right]^\frac{1}{np} \\
			& \quad\quad\quad\quad + C_p \mathbb{E}\left[ \sup_{t \in [0,T]} \left\vert J^\circ_{n-1}\left( \varphi_{n,j,0}^{\otimes (n-1)} \right)_t \right\vert^\frac{np}{n-1} \right]^\frac{n-1}{np} \mathbb{E}\left[ \left\vert \int_0^T (g_{n,j,1}(t) - \varphi_{n,j,1}(t))^\top dM_t \right\vert^{np} \right]^\frac{1}{np} \\
			& \quad\quad \leq \max(1,C_p) C_{n-1,np/(n-1)} \left\Vert \varphi_{n,j,0}^{\otimes (n-1)} \right\Vert_{L^{np}(X)^{\otimes (n-1)}} \Vert g_{n,j,1} - \varphi_{n,j,1} \Vert_{L^{np}(X)} \\
			& \quad\quad < C_{n,p} \left\Vert \varphi_{n,j,0}^{\otimes (n-1)} \right\Vert_{L^{np}(X)^{\otimes (n-1)}} \frac{\varepsilon}{4 C_{n,p} N m_n \left( 1 + \big\Vert \varphi_{n,j,0}^{\otimes (n-1)} \big\Vert_{L^{np}(X)^{\otimes (n-1)}} \right)} \leq \frac{\varepsilon}{4 N m_n}.
		\end{aligned}
	\end{equation}
	Thus, by using $\varphi_0 := g_0 \in \mathbb{R}$, the inequality~\eqref{EqThmLpHedgingNNProof1}, Minkowski's inequality together with Proposition~\ref{PropMon}, and Lemma~\ref{LemmaBDG} (if $n = 1$) or the inequalities~\eqref{EqThmLpHedgingNNProof4}+\eqref{EqThmLpHedgingNNProof5} (if $n \geq 2$), it follows that
	\begin{equation*}
		\begin{aligned}
			& \left\Vert G - \varphi_0 - \int_0^T \left( \vartheta^{\varphi_{1:N}}_t \right)^\top dX_t \right\Vert_{L^p(\mathbb{P})} - \inf_{(c,\theta) \in \mathbb{R} \times \Theta^p(X)} \left\Vert G - c - \int_0^T \theta_t^\top dX_t \right\Vert_{L^p(\mathbb{P})} \\
			& \leq \left\Vert \int_0^T \left( \vartheta^{g_{1:N}}_t \right)^\top dX_t - \int_0^T \left( \vartheta^{\varphi_{1:N}}_t \right)^\top dX_t \right\Vert_{L^p(\mathbb{P})} \\
			& \quad\quad + \left\Vert G - g_0 - \int_0^T \left( \vartheta^{g_{1:N}}_t \right)^\top dX_t \right\Vert_{L^p(\mathbb{P})} - \inf_{(c,\theta) \in \mathbb{R} \times \Theta^p(X)} \left\Vert G - c - \int_0^T \theta_t^\top dX_t \right\Vert_{L^p(\mathbb{P})} \\
			& \leq \mathbb{E}\left[ \left\vert \sum_{n=1}^N \sum_{j=1}^{m_n} \int_0^T \frac{W(g_{n,j,0})_t^{n-1}}{(n-1)!} g_{n,j,1}(t)^\top dX_t - \sum_{n=1}^N \sum_{j=1}^{m_n} \int_0^T \frac{W(\varphi_{n,j,0})_t^{n-1}}{(n-1)!} \varphi_{n,j,1}(t)^\top dX_t \right\vert^p \right]^\frac{1}{p} + \frac{\varepsilon}{2} \\
			& \leq \sum_{n=1}^N \sum_{j=1}^{m_n} \mathbb{E}\left[ \left\vert \int_0^T J^\circ_{n-1}\left( g_{n,j,0}^{\otimes (n-1)} - \varphi_{n,j,0}^{\otimes (n-1)} \right)_t g_{n,j,1}(t)^\top dX_t \right\vert^p \right]^\frac{1}{p} \\
			& \quad\quad + \sum_{n=1}^N \sum_{j=1}^{m_n} \mathbb{E}\left[ \left\vert \int_0^T J^\circ_{n-1}\left( \varphi_{n,j,0}^{\otimes (n-1)} \right)_t (g_{n,j,1}(t) - \varphi_{n,j,1}(t))^\top dX_t \right\vert^p \right]^\frac{1}{p} + \frac{\varepsilon}{2} \\
			& < \sum_{n=1}^N m_n \frac{\varepsilon}{4 N m_n} + \sum_{n=1}^N m_n \frac{\varepsilon}{4 N m_n} + \frac{\varepsilon}{2} = \varepsilon,
		\end{aligned}
	\end{equation*}
	which completes the proof.
\end{proof}

\begin{proof}[Proof of Theorem~\ref{ThmLpHedgingRN}]
	Fix $G \in L^p(\Omega,\mathcal{F}_T,\mathbb{P})$ and $\varepsilon > 0$. Then, by Theorem~\ref{ThmLpHedging}, there exist some $N \in \mathbb{N}$ and $g_0 \in \mathbb{R}$ as well as $m_n \in \mathbb{N}$ and $g_{n,j,0},g_{n,j,1} \in L^{np}(X)$, $n = 1,...,N$ and $j = 1,...,m_n$, with
	\begin{equation}
		\label{EqThmLpHedgingRNProof1}
		\left\Vert G - g_0 - \int_0^T \left( \vartheta^{g_{1:N}}_t \right)^\top dX_t \right\Vert_{L^p(\mathbb{P})} \leq \frac{\varepsilon}{2} + \inf_{(c,\theta) \in \mathbb{R} \times \Theta^p(X)} \left\Vert G - c - \int_0^T \theta_t^\top dX_t \right\Vert_{L^p(\mathbb{P})},
	\end{equation}
	where $\vartheta^{g_{1:N}}_t := \sum_{n=1}^N \sum_{j=1}^{m_n} \frac{W(g_{n,j,0})_t^{n-1}}{(n-1)!} g_{n,j,1}(t)$, for $t \in [0,T]$. Moreover, for every $n = 2,...,N$ and $j = 1,...,m_n$, we follow the proof of Proposition~\ref{PropUAT} to obtain some $\widetilde{\varphi}_{n,j,0} \in \mathcal{RN}^\rho_{d,1}$ such that
	\begin{equation}
		\label{EqThmLpHedgingRNProof2}
		\widetilde{\mathbb{E}}\left[ \left\Vert g_{n,j,0}^{\otimes (n-1)} - \widetilde{\varphi}_{n,j,0}(\cdot)^{\otimes (n-1)} \right\Vert_{L^{np}(X)^{\otimes (n-1)}}^r \right]^\frac{1}{r} < \frac{\varepsilon}{4 C_{n,p} N m_n \left( 1 + \Vert g_{n,j,1} \Vert_{L^{np}(X)} \right)},
	\end{equation}
	where $C_{n,p} > 0$ is the constant from Lemma~\ref{LemmaItIntLinearBDG}~\ref{LemmaItIntLinearBDG3}. In addition, for every $n = 1,...,N$ and $j = 1,...,m_n$, we apply Lemma~\ref{LemmaUAT} to obtain some $\widetilde{\varphi}_{n,j,1} \in \mathcal{RN}^\rho_{d,1}$ such that
	\begin{equation}
		\label{EqThmLpHedgingRNProof3}
		\widetilde{\mathbb{E}}\left[ \left\Vert g_{n,j,1} - \widetilde{\varphi}_{n,j,1}(\cdot) \right\Vert_{L^{np}(X)}^{nr} \right]^\frac{1}{nr} < \frac{\varepsilon}{4 C_{n,p} N m_n \left( 1 + \mathbb{E}\Big[ \big\Vert \widetilde{\varphi}_{n,j,0}(\cdot)^{\otimes (n-1)} \big\Vert_{L^{np}(X)^{\otimes (n-1)}}^\frac{nr}{n-1} \Big]^\frac{n-1}{nr} \right)},
	\end{equation}
	where $\widetilde{\varphi}_{n,j,0}(\cdot)^{\otimes 0} := 1$. Hence, we can follow inequality~\eqref{EqThmLpHedgingNNProof4} to conclude for every $n = 2,...,N$ and $j = 1,...,m_n$ that
	\begin{equation}
		\label{EqThmLpHedgingRNProof4}
		\begin{aligned}
			& \widetilde{\mathbb{E}}\left[ \left\Vert \int_0^T J^\circ_{n-1}\left( g_{n,j,0}^{\otimes (n-1)} - \widetilde{\varphi}_{n,j,0}(\cdot)^{\otimes (n-1)} \right)_t g_{n,j,1}(t)^\top dX_t \right\Vert_{L^p(\mathbb{P})}^r \right]^\frac{1}{r} \\
			& \quad\quad \leq C_{n,p} \widetilde{\mathbb{E}}\left[ \left\Vert g_{n,j,0}^{\otimes (n-1)} - \widetilde{\varphi}_{n,j,0}(\cdot)^{\otimes (n-1)} \right\Vert_{L^{np}(X)^{\otimes (n-1)}}^r \right]^\frac{1}{r} \Vert g_{n,j,1} \Vert_{L^{np}(X)} < \frac{\varepsilon}{4 N m_n}.
		\end{aligned}
	\end{equation}
	Furthermore, we can follow the inequality~\eqref{EqThmLpHedgingNNProof5} and apply H\"older's inequality to observe for every $n = 2,...,N$ and $j = 1,...,m_n$ that
	\begin{equation}
		\label{EqThmLpHedgingRNProof5}
		\begin{aligned}
			& \widetilde{\mathbb{E}}\left[ \left\Vert \int_0^T J^\circ_{n-1}\left( \widetilde{\varphi}_{n,j,0}(\cdot)^{\otimes (n-1)} \right)_t (g_{n,j,1}(t) - \widetilde{\varphi}_{n,j,1}(\cdot)(t))^\top dX_t \right\Vert_{L^p(\mathbb{P})}^r \right]^\frac{1}{p} \\
			& \quad\quad \leq C_{n,p} \widetilde{\mathbb{E}}\left[ \left\Vert \widetilde{\varphi}_{n,j,0}(\cdot)^{\otimes (n-1)} \right\Vert_{L^{np}(X)^{\otimes (n-1)}}^r \Vert g_{n,j,1} - \widetilde{\varphi}_{n,j,1}(\cdot) \Vert_{L^{np}(X)}^r \right]^\frac{1}{r} \\
			& \quad\quad \leq C_{n,p} \widetilde{\mathbb{E}}\left[ \left\Vert \widetilde{\varphi}_{n,j,0}(\cdot)^{\otimes (n-1)} \right\Vert_{L^{np}(X)^{\otimes (n-1)}}^\frac{nr}{n-1} \right]^\frac{n-1}{nr} \widetilde{\mathbb{E}}\left[ \Vert g_{n,j,1} - \widetilde{\varphi}_{n,j,1}(\cdot) \Vert_{L^{np}(X)}^{nr} \right]^\frac{1}{nr} < \frac{\varepsilon}{4 N m_n}.
		\end{aligned}
	\end{equation}
	Thus, by using $\widetilde{\varphi}_0 := g_0 \in \mathbb{R}$, the inequality~\eqref{EqThmLpHedgingRNProof1}, Minkowski's inequality together with Proposition~\ref{PropMon}, and Lemma~\ref{LemmaBDG} (if $n = 1$) or the inequalities~\eqref{EqThmLpHedgingRNProof4}+\eqref{EqThmLpHedgingRNProof5} (if $n \geq 2$), it follows that
	\begin{equation*}
		\begin{aligned}
			& \widetilde{\mathbb{E}}\left[ \left\Vert G - \varphi_0 - \int_0^T \left( \vartheta^{\widetilde{\varphi}_{1:N}(\cdot)}_t \right)^\top dX_t \right\Vert_{L^p(\mathbb{P})}^r \right]^\frac{1}{r} - \inf_{(c,\theta) \in \mathbb{R} \times \Theta^p(X)} \left\Vert G - c - \int_0^T \theta_t^\top dX_t \right\Vert_{L^p(\mathbb{P})} \\
			& \leq \widetilde{\mathbb{E}}\left[ \left\Vert \int_0^T \left( \vartheta^{g_{1:N}}_t \right)^\top dX_t - \int_0^T \left( \vartheta^{\widetilde{\varphi}_{1:N}(\cdot)}_t \right)^\top dX_t \right\Vert_{L^p(\mathbb{P})}^r \right]^\frac{1}{r} \\
			& \quad\quad + \left\Vert G - g_0 - \int_0^T \left( \vartheta^{g_{1:N}}_t \right)^\top dX_t \right\Vert_{L^p(\mathbb{P})} - \inf_{(c,\theta) \in \mathbb{R} \times \Theta^p(X)} \left\Vert G - c - \int_0^T \theta_t^\top dX_t \right\Vert_{L^p(\mathbb{P})} \\
			& \leq \widetilde{\mathbb{E}}\left[ \left\Vert \sum_{n=1}^N \sum_{j=1}^{m_n} \int_0^T \frac{W(g_{n,j,0})^{n-1}}{(n-1)!} g_{n,j,1}(t)^\top dX_t - \sum_{n=1}^N \sum_{j=1}^{m_n} \int_0^T \frac{W(\widetilde{\varphi}_{n,j,0}(\cdot))^{n-1}}{(n-1)!} \widetilde{\varphi}_{n,j,1}(\cdot)(t)^\top dX_t \right\Vert_{L^p(\mathbb{P})}^r \right]^\frac{1}{r} \\
			& \quad\quad + \frac{\varepsilon}{2} \\
			& \leq \sum_{n=1}^N \sum_{j=1}^{m_n} \widetilde{\mathbb{E}}\left[ \left\Vert \int_0^T J^\circ_{n-1}\left( g_{n,j,0}^{\otimes (n-1)} - \widetilde{\varphi}_{n,j,0}(\cdot)^{\otimes (n-1)} \right)_t g_{n,j,1}(t)^\top dX_t \right\Vert_{L^p(\mathbb{P})}^r \right]^\frac{1}{r} \\
			& \quad\quad + \sum_{n=1}^N \sum_{j=1}^{m_n} \widetilde{\mathbb{E}}\left[ \left\Vert \int_0^T J^\circ_{n-1}\left( \widetilde{\varphi}_{n,j,0}(\cdot)_{n,j}^{\otimes (n-1)} \right)_t (g_{n,j,1}(t) - \widetilde{\varphi}_{n,j,1}(\cdot)(t))^\top dX_t \right\Vert_{L^p(\mathbb{P})}^r \right]^\frac{1}{r} + \frac{\varepsilon}{2} \\
			& < \sum_{n=1}^N m_n \frac{\varepsilon}{4 N m_n} + \sum_{n=1}^N m_n \frac{\varepsilon}{4 N m_n} + \frac{\varepsilon}{2} = \varepsilon,
		\end{aligned}
	\end{equation*}
	which completes the proof.
\end{proof}

\bibliographystyle{plain}
\bibliography{mybib}

\end{document}